\def\showauthornotes{1}
\def\showdraftbox{1}
\newtheorem{theorem}{Theorem}[section]
\newtheorem{definition}[theorem]{Definition}
\newtheorem{lemma}[theorem]{Lemma}
\newtheorem{remark}[theorem]{Remark}
\newtheorem{corollary}[theorem]{Corollary}
\newtheorem{claim}[theorem]{Claim}
\newtheorem{fact}[theorem]{Fact}
\def\FullBox{\hbox{\vrule width 6pt height 6pt depth 0pt}}
\def\qed{\ifmmode\qquad\FullBox\else{\unskip\nobreak\hfil
\penalty50\hskip1em\null\nobreak\hfil\FullBox
\parfillskip=0pt\finalhyphendemerits=0\endgraf}\fi}
\def\qedsketch{\ifmmode\Box\else{\unskip\nobreak\hfil
\penalty50\hskip1em\null\nobreak\hfil$\Box$
\parfillskip=0pt\finalhyphendemerits=0\endgraf}\fi}
\renewenvironment{proof}{\begin{trivlist} \item {\bf Proof:~~}}
   {\qed\end{trivlist}}
\newcommand\R{\mathbb R}
\newcommand{\marginlabel}[1]%
{\mbox{}\marginpar{\it{\raggedleft\hspace{0pt}#1}}}
\newcommand\card[1]{\left| #1 \right|} 
\newcommand\poly{\mbox{poly}}  
\definecolor{Mygray}{gray}{0.8}
\let\csname ifcommentflag\expandafter\endcsname
\newcommand{\Authornote}[2]{{\sf\small\color{red}{[#1: #2]}}}
\newcommand{\Authoredit}[2]{{\sf\small\color{red}{[#1]}\color{blue}{#2}}}
\newcommand{\Authorcomment}[2]{{\sf \small\color{Mygray}{[#1: #2]}}}
\newcommand{\Authorfnote}[2]{\footnote{\color{red}{#1: #2}}}
\newcommand{\Authorfixme}[1]{\Authornote{#1}{\textbf{??}}}
\newcommand{\Authormarginmark}[1]{\marginpar{\textcolor{red}{\fbox{
#1:!}}}}
\newcommand{\Authornote}[2]{}
\newcommand{\Authoredit}[2]{}
\newcommand{\Authorcomment}[2]{}
\newcommand{\Authorfnote}[2]{}
\newcommand{\Authorfixme}[1]{}
\newcommand{\Authormarginmark}[1]{}
\newcommand{\inbraces}[1]{\left\{#1\right\}}           
\def\implies{\Rightarrow}
\renewcommand\Pr{\mathop{\mbox{\bf Pr}}}
\newcommand\av{\mathop{\mbox{\bf E}}}
\newcommand{\Ex}[2]{\av_{{#1}}\left[{#2}\right]}
\def\abs#1{\left| #1 \right|}
\newcommand{\norm}[1]{\ensuremath{\left\lVert #1 \right\rVert}}
\newlength{\pgmtab}  
\newcounter{lecnum}
\newlength{\tpush}
\newcommand{\cons}{\mathcal{C}}
\newcommand{\F}{\mathbb{F}}
\newcommand{\E}{\mathop{\mbox{\bf E}}}
\newcommand{\st}{\textbf{ s.t. }}
\newcommand{\fail}{\bot}
\newcommand{\one}{\mathbf{1}}
\newcommand{\approxparam}[1]{\,{\stackrel{{#1}}{\approx}}\,}
\newcommand{\eps}{\epsilon}
\def\D{{\cal D}}
\newcommand{\ls}{{ S}}
\newcommand{\ks}{{ K}}
\newcommand{\rs}{{ R}}
\newcommand{\LL}{s}
\newcommand{\KL}{k}
\newcommand{\RL}{r}
\newcounter{listcounter}
\newcommand{\lref}[2][]{\hyperref[#2]{#1~\ref*{#2}}}
\renewcommand{\eqref}[1]{\hyperref[#1]{(\ref*{#1})}}
\title{Cube vs. Cube Low Degree Test}
\author{Amey Bhangale
\thanks{Department of Computer Science, Rutgers University, USA.  {\tt amey.bhangale@rutgers.edu}}
\and
Irit Dinur\thanks{Faculty of Computer Science and Mathematics, Weizmann Institute, Rehovot, Israel.
{\tt irit.dinur@weizmann.ac.il}}
\and
Inbal Rachel Livni Navon\thanks{Computer Science and Mathematics, Weizmann Institute, Rehovot, Israel. {\tt inbal.livni@weizmann.ac.il}}
}
\begin{document}

\maketitle

\begin{abstract}

We revisit the Raz-Safra plane-vs.-plane test and study the closely related cube vs. cube test. In this test the tester has access to a ``cubes table'' which assigns to every cube a low degree polynomial. The tester randomly selects two cubes (affine sub-spaces of dimension $3$) that intersect on a point $x\in \F^m$, and checks that the assignments to the cubes agree with each other on the point $x$.
Our main result is a new combinatorial proof for a low degree test that comes closer to the soundness limit, as it works for all  $\eps \ge \poly(d)/{\card\F}^{1/2}$, where $d$ is the degree. This should be compared to the previously best soundness value of $\eps \ge \poly(m, d)/\card\F^{1/8}$. Our soundness limit improves upon the dependence on the field size and does not depend on the dimension of the ambient space.

Our proof is combinatorial and direct: unlike the Raz-Safra proof, it proceeds in one shot and does not require induction on the dimension of the ambient space.
The ideas in our proof come from works on direct product testing which are even simpler in the current setting thanks to the low degree.

Along the way we also prove a somewhat surprising fact about connection between different agreement tests:
it does not matter if the tester chooses the cubes to intersect on points or on lines: for every given table, its success probability in either test is nearly the same.

\end{abstract}

\def\acxc{\alpha_{{\cal C}x{\cal C}}}
\def\aplp{\alpha_{{\cal P}\ell {\cal P}}}

\def\aclc{\alpha_{{\cal C}\ell{\cal C}}}
\def\apxp{\alpha_{{\cal P}x {\cal P}}}

\section{Introduction}

Low degree tests are local tests for the property of being a low degree function. These were the first property testing results that were discovered, and are an important component in PCP constructions. Such tests were studied in the 1990's and their ballpark soundness behavior was more or less understood. In this work we revisit these tests and give a new and arguably simpler analysis for the cube vs. cube low degree test. Our proof method allows us to get a soundness guarantee that is much closer to the conjectured optimal value. Discovering the precise point in which soundness starts to hold is an intriguing open question that captures an interesting aspect of local-testing in the small soundness regime.

Let us begin with a short introduction to low degree tests. A low degree test can be described as a game between a prover and a verifier, in which the prover wants to convince the verifier that a function $f:\F^m\to\F$ is a low degree polynomial. The most straightforward way for the prover to specify $f$ would be to give its value on each point $x\in \F^m$. However, in this way, to check that $f$ has degree at most $d$ the verifier would have to read $f$ on at least $d+2$ points. If we want a verifier that makes fewer queries while keeping the error small, it is useful to move to a more redundant representation of $f$. For example, the verifier can ask the prover to specify for every cube (affine subspace of dimension $3$) $C\subset \F^m$, a function $f_C:C\to\F$ that is defined on the cube and is obtained by restricting $f$ to that cube. This is called a ``cubes-table'', and similarly one can consider a lines table (with an entry for every line), or a planes table (with an entry for each plane).

Thus, in the cubes representation of a low degree function $f:\F^m\to\F$, we have a table entry $T(C)$ for every cube $C$ and the value of that entry is supposed to be $T(C) = f|_C$.  A general cubes table is a table $T(\cdot )$ indexed by all possible cubes and the $C$-th entry is a low degree function on the cube $C$. Each $T(C)$ is viewed as a local function. Indeed the number of bits needed to specify $T(C)$ is only $O(d^3\log \card\F)$ which is much smaller than $\binom{m+d}d\log\card\F$ - the number of bits needed to represent a general degree $d$ function $f$ on $\F^m$.

The prover may cheat, as provers do, by giving a cubes table whose entries cannot be ``glued together'' into any one global low degree function. This is where the {\em agreement} test comes in. The verifier can check the table by reading two entries corresponding to two cubes that have a non-trivial intersection, and checking that the function $T({C_1})$ and the function $T({C_2})$ agree on points in the intersection of $C_1\cap C_2$.

\begin{test}
	\caption{Cube vs. Cube agreement test. \label{test:CxC}}
	\begin{enumerate}
		\item Select a point $x\in \mathbb{F}^m$.
		\item Pick affine cubes $C_1, C_2$ randomly conditioned on $C_1,C_2 \ni x$.
		\item Read $T({C_1}),T({C_2})$ from the table and accept iff $T({C_1})(x) = T({C_2})(x)$.
	\end{enumerate}
Let $\acxc(T)$ be the {\em agreement} of the table $T$, i.e. the probability of acceptance of the test.
\end{test}

The test is local in that it accesses only two cubes.
Different tests may differ in the distribution underlying the agreement test (for example, Raz and Safra look at two planes that intersect in a line, which clearly is a different distribution from choosing two planes that intersect in a point), but they all check agreement on the intersection, so we generally refer to all of these as agreement tests.

The interesting point, as proven by both Raz and Safra in \cite{RaSa}, and by Arora and Sudan in \cite{ArSu}, is that such tests have small soundness error. For example, the plane vs. plane theorem of Raz Safra is as follows,

\begin{theorem}[Raz-Safra \cite{RaSa}]
There is some $\delta>0$ such that for every $d$ and prime power $q$ and every $m\ge 3$ the following holds.
Let $\F$ be a finite field $\card\F=q$, and let $T(\cdot)$ be a planes table, assigning to each plane $P\subset \F^m$ a bivariate degree $d$ polynomial $T(P):P\to\F$. Let $\aplp(T)$ be as defined in \lref[Test]{test:PlP}.

For every $\eps \ge (md/q)^\delta$, if $\alpha_{P\ell P} (T) \ge \eps$ then there is a degree $d$ function $g:\F^m\to\F$ such that $T(P) = g|_P$ on an $\Omega(\eps)$ fraction of the planes.
\end{theorem}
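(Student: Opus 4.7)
The plan is to follow the classical Raz--Safra strategy of induction on the ambient dimension $m$, with the base case $m=3$ doing most of the structural work. For the base case, I would build a ``consistency graph'' $G$ on the planes of $\F^3$, placing an edge between $P_1$ and $P_2$ whenever $T(P_1)$ and $T(P_2)$ agree on the line $P_1\cap P_2$. A standard second-moment / averaging argument applied to $\aplp(T)\ge \eps$ shows that an $\Omega(\eps)$ fraction of planes $P$ are ``popular,'' meaning they have at least an $\Omega(\eps)$ fraction of agreeing neighbors. For such a $P$, the bivariate polynomials $T(P')$ at its agreeing neighbors are pairwise constrained on lines of $P'\cap P''$; a Schwartz--Zippel-style argument, combined with the fact that lines through a plane cover $\F^3$, lets me paste these $T(P')$ into a single trivariate degree-$d$ polynomial $g_P:\F^3\to \F$. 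A counting step then shows that at least one $g_P$ matches $T$ on an $\Omega(\eps)$ fraction of planes.

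For the inductive step from $m-1$ to $m$, I would restrict attention to a uniformly random affine hyperplane $H\subset \F^m$. Since a random pair of intersecting planes in $\F^m$ lies in a common hyperplane with non-negligible probability, the agreement test restricted to planes inside $H$ inherits an agreement of $\Omega(\eps)$ (up to a small sampling loss that is absorbed into the $\poly(m,d)/q$ slack). By induction, for most $H$ there is a degree-$d$ polynomial $g_H$ on $H$ matching $T(P)$ for an $\Omega(\eps)$ fraction of planes $P\subset H$. The final gluing step picks two random hyperplanes $H_1,H_2$ meeting in an $(m{-}2)$-flat $W$, and argues that $g_{H_1}\restriction_W = g_{H_2}\restriction_W$ with high probability, using the transitivity of agreement through a common plane lying in $W$. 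Standard algebraic extension then assembles the $\{g_H\}$ into a single global $g:\F^m\to\F$ of degree $d$.

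The main obstacle will be controlling the loss in the agreement parameter and in the soundness threshold $(md/q)^\delta$ across the induction. Each level of recursion threatens to lose a constant factor in $\eps$ and to pay additively in the $q$-dependent error, so without care the bound degrades exponentially in $m$. The remedy is to use the clustered structure globally rather than restart the argument at each dimension: specifically, one maintains an explicit list of candidate polynomials (one per popular cluster) and tracks how these candidate lists merge as the dimension grows, so that the final polynomial $g$ is produced after only $O(1)$ ``genuine'' decoding steps. Making this bookkeeping precise, while keeping $\delta$ an absolute constant independent of $m$, is the delicate heart of the argument.
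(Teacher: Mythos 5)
This statement is not proved in the paper at all: it is quoted verbatim from Raz and Safra \cite{RaSa} as background, and the paper's own contribution (Theorem 1.2, for the cube vs.\ cube test) is explicitly designed to \emph{avoid} the strategy you propose --- the authors stress that, unlike Raz--Safra, they recover the global structure ``in one shot,'' without induction on the ambient dimension, using conditional plurality, excellent points, and spectral sampling lemmas for inclusion graphs. So your plan is a reconstruction of the original Raz--Safra argument rather than anything resembling the paper's methodology. That is a legitimate route in principle, but as written your proposal has real gaps at exactly the points where the original proof is hard.

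First, in the base case you assert that the agreeing neighbors of a popular plane $P$ are ``pairwise constrained on lines of $P'\cap P''$'' and can therefore be pasted into one trivariate polynomial. This is the crux, not a given: agreement is not transitive at low $\eps$. If $T(P_1)$ agrees with $T(P)$ on $P_1\cap P$ and $T(P_2)$ agrees with $T(P)$ on $P_2\cap P$, nothing yet forces $T(P_1)$ and $T(P_2)$ to agree on $P_1\cap P_2$; the neighbors of $P$ may split into several mutually inconsistent clusters, each of size $\Omega(\eps)$. Establishing that one cluster is internally consistent (and large enough to interpolate a global $g_P$) is the content of the Raz--Safra clustering/sub-structure lemma, which your plan assumes rather than supplies. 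Second, in the inductive step the same issue recurs in sharper form: each hyperplane $H$ only supports $g_H$ on an $\Omega(\eps)$ fraction of its planes, so there may be a list of $\Theta(1/\eps)$ incompatible candidates per hyperplane, and ``transitivity through a common plane in $W$'' does not by itself select matching candidates on $H_1$ and $H_2$. You correctly identify the exponential-in-$m$ degradation as the central danger and gesture at a list-tracking remedy, but you explicitly defer ``making this bookkeeping precise,'' which is where the theorem actually lives (and where the constant $\delta$ and the $(md/q)^\delta$ threshold are determined). As it stands, the proposal is a roadmap to the known proof, not a proof. If you want an argument closer in spirit to this paper, the route is entirely different: define $f_{x,\sigma}$ by plurality over planes through a point conditioned on the value at that point, prove a local-structure lemma via a ``confusing line'' argument, and upgrade to a global polynomial using the spectral gap of the point--plane and line--plane inclusion graphs plus the Rubinfeld--Sudan characterization, with no induction on $m$ at all.
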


\begin{test}
	\caption{The Raz-Safra Plane vs. Plane agreement test. \label{test:PlP}}
	\begin{enumerate}
		\item Select an affine line $\ell\subset \mathbb{F}^m$.
		\item Choose affine planes $P_1, P_2$ randomly conditioned on $P_1,P_2 \supset\ell$.
		\item Read $T({P_1}),T({P_2})$ from the table and accept iff $T({P_1})(x) = T({P_2})(x)$ for all $x\in \ell$.
	\end{enumerate}
Let $\aplp(T)$ be the {\em agreement} of the table $T$, i.e. the probability of acceptance of the test.
\end{test}


A similar theorem was proven by Arora and Sudan for $T$ a lines table and for a natural test that checks if two intersecting lines agree on the point of intersection.

These results are called low degree tests although it makes sense to think of them as theorems relating local agreement to global agreement. We refer to them as low degree {\em agreement} test theorems.
\paragraph{Towards the soundness threshold.} The most important aspect of the low degree agreement theorems of \cite{RaSa,ArSu} is the fact that they have small soundness. Small soundness means that a cheating prover won't be able to fool the verifier into accepting with even a tiny $\eps>0$ probability, unless the table has some non-trivial agreement with a global low degree function. Small soundness of low degree tests was used inside PCP constructions for getting PCPs with the smallest known soundness error. The fact that soundness holds for all values of $\eps \ge (d/q)^{\delta}$ was sufficient for the PCP constructions of \cite{RaSa,ArSu}. It is likely that finding the minimal threshold beyond which soundness is guaranteed to hold will be important for determining the best possible PCP gaps.

Regardless of the PCP application, this encoding of a function $f$ by its restrictions to cubes (or to planes) is quite natural, and is a rare example of a property that has such strong testability. The low degree agreement test theorems guarantee that even the passing of the test with tiny $\eps$ probability has non-trivial structural consequences. Perhaps the best known comparable scenario is that of the long code, defined in \cite{BGS}, that has similar properties, and for which an extensive line of work has been able to determine the precise threshold of soundness. Another setting with a similarly strong soundness is related to the inverse theorems for the Gowers uniformity norms. In that setting the function is given as a points-table, and the Gowers norm measures success in a low degree test, so it is not altogether dissimilar from the situation here.


To summarize, one of our goals is to pinpoint the absolute minimal soundness value for which a theorem as above holds. Can this threshold be, as it is in the aforementioned cases, as small as the value of a random assignment?
In other words, could it be true that for every table whose agreement parameter is an additive $\eps>0$ above the value that we expect from a random table, already some structure exists?

The best known value for $\delta$ for the plane vs. plane test is due to Moshkovitz and Raz who proved in \cite{MR-LDT} that the plane vs. plane test has soundness for all $\eps \ge \poly(d)/q^{1/8}$. But what is the correct exponent of $q$ ?

We make progress on this question not for the plane vs. plane test but rather for the cube vs. cube test. For our test, since the intersection consists of one point, the soundness can not go below $1/q$ because the agreement of {\em every} table, even a random one, is always at least $1/q$.

Our main theorem is,
\begin{theorem}\label{thm:main}
There exist constants $\beta_1,\beta_2 > 0$ such that for every $d$, large enough prime power $q$ and every $m\ge 3$ the following holds:

Let $\F$ be a finite field, $\card\F=q$. Let $T$ be a cubes table, assigning to each cube $C\subset\F^m$ a degree $d$ polynomial $T(C):C\to\F$. Let $\acxc(T)$ be as defined in \lref[Test]{test:CxC}.  If $\acxc(T) \ge \epsilon $ for $\epsilon \geq \beta_1 d^4/q^{1/2} $, then there is a degree $d$ function $g:\F^m\to\F$ such that $T(C) = g|_C$ on an $\beta_2\epsilon$ fraction of the cubes.

\end{theorem}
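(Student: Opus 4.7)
The plan is to proceed in three main steps: (i) pass from point-intersection agreement to the much stronger line-intersection agreement, (ii) build a candidate global function $g$ via plurality vote and show it matches $T(C)$ on an $\Omega(\eps)$ fraction of cubes using a direct-product style second-moment argument, and (iii) deduce that $g$ is globally of degree $d$ by a routine restriction argument.

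First I would exploit the rigidity of univariate low-degree polynomials: two distinct polynomials of degree $d$ on a line $\ell$ can agree on at most $d$ of the $q$ points. Consequently, if one picks a line $\ell$, two random cubes $C_1, C_2 \supset \ell$, and a random point $x\in \ell$, the event $T(C_1)(x)=T(C_2)(x)$ and the event $T(C_1)|_\ell \equiv T(C_2)|_\ell$ differ by at most $d/q$ in probability. By comparing this distribution to the cube vs.\ cube point distribution of \lref[Test]{test:CxC} (which is obtained by first picking the point $x$ and then the cubes through it), one concludes that $\acxc(T)\geq \eps$ forces the cube-line-cube test to succeed with probability at least $\eps-O(d/q)$. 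This is the ``surprising fact'' promised in the abstract, and it lets us henceforth assume that a noticeable fraction of \emph{line}-intersecting cube pairs have \emph{identical} univariate restrictions to their intersection line.

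Next I would define $g(x)$ to be the plurality value of $T(C)(x)$ over random cubes $C\ni x$, and call a cube $C$ \emph{good} if $T(C)(x)=g(x)$ holds for almost every $x\in C$. The heart of the argument is a direct-product second moment. Fix a line $\ell$ and consider three independent cubes $C_1,C_2,C_3\supset \ell$. By the line-intersection reduction and Cauchy--Schwarz, $\E_\ell[\Pr_{C_1,C_2}[T(C_1)|_\ell\equiv T(C_2)|_\ell]^2]\geq (\eps-O(d/q))^2$, so with probability $\gtrsim \eps^2$ over $\ell$ there is a ``consensus'' univariate $h_\ell$ to which most cubes through $\ell$ restrict. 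Using low-degreeness to propagate consistency from a line up to all of $C_i\cap C_j$ (a pair of cubes sharing a line agree on the line if and only if they agree on their full intersection), one shows that the consensus values glue together: the plurality $g(x)$ is well-defined for most $x$, and an $\Omega(\eps)$ fraction of cubes are good. The $d^4/q^{1/2}$ threshold of the theorem is precisely the regime in which the second-moment signal $\eps^2$ dominates the noise $O(d^{O(1)}/q)$ coming from coincidences among low-degree polynomials and from pairs that agree by accident on the shared line.

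Finally, once an $\Omega(\eps)$ fraction of cubes satisfy $T(C)=g|_C$, a standard restriction argument shows $g$ is degree $d$: for a typical plane $P$, a positive fraction of cubes $C\supset P$ are good and they all restrict to the same bivariate polynomial of degree $d$ on $P$, forcing $g|_P$ to be of degree $d$; applying this on enough planes (or lines) and using Schwartz--Zippel style extension yields a global degree-$d$ polynomial agreeing with $g$. The main obstacle I anticipate is the second-moment step: controlling the variance of the line-consensus without losing a factor beyond $q^{-1/2}$ requires careful accounting of the conditioning between the random line, the random cubes through it, and the pairs of good cubes, and leveraging the low-degree rigidity exactly at the right places to avoid the weaker $q^{-1/8}$ type bounds of earlier work.
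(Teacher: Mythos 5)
Your step (i) is fine and matches the paper's test-equivalence lemma, but step (ii) contains a genuine gap that the paper explicitly warns against: you define $g(x)$ as the \emph{unconditional} plurality of $T(C)(x)$ over all cubes $C\ni x$. Consider the table in which half of the cubes have $T(C)\equiv 0$ and the other half $T(C)\equiv 1$. Its agreement is $\acxc(T)=1/2$, yet at every point the vote is an exact $50$--$50$ tie, so your $g$ is an arbitrary (effectively random) mixture of $0$'s and $1$'s and no cube is ``good'' in your sense, even though the theorem's conclusion holds trivially with $g\equiv 0$. The same problem defeats your gluing step: the second moment over lines only shows that an $\eps$ fraction (not ``most'') of the cubes through a typical line share a common restriction $h_\ell$, and the competing local consensuses for different lines need not come from one global function --- in the example each line carries two equally popular consensuses, $0$ and $1$, and the pointwise plurality scrambles them.

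The missing idea is the \emph{conditional} plurality (imported from direct-product testing): fix an anchor pair $(x,\sigma)$, restrict attention to the sub-population $\cons_{x,\sigma}$ of cubes through $x$ whose value at $x$ is $\sigma$, and define $f_{x,\sigma}(y)$ as the plurality of $T(C)(y)$ over $C\in\cons_{x,\sigma}$ containing $y$. The agreement hypothesis, combined with the $d/q$ line-rigidity you already invoke, shows that for many ``excellent'' pairs the cubes in $\cons_{x,\sigma}$ are nearly mutually consistent, so $f_{x,\sigma}$ is supported by an $\Omega(\eps)$ fraction of $\mathcal{C}_x$. One must then do further work that your outline does not address: passing from the family $\{f_x\}$ of local functions to a single $f$ supported by $\Omega(\eps)$ of \emph{all} cubes, which the paper accomplishes with a second-moment argument over the cube--point inclusion graph showing $f_x\approx f_y$ for many pairs of excellent points. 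Your step (iii) is workable in spirit, though the paper instead reduces to the Rubinfeld--Sudan high-soundness characterization.
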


The improvement over previous theorems is that the dependence on $q$ is $1/q^{1/2}$ compared to $1/q^{1/8}$,
It is an intriguing question whether the dependence on $q$ can be made inversely linear, i.e. $1/q$.


\begin{remark}
We don't know the precise dependence of $\epsilon$ on the degree $d$. In this work we made no attempt to optimize this dependence. We would like to point out that our proof can be modified to change the dependence from $d^4$ to $d^3$. See \lref[Remark]{remark:improve d} for more details.
\end{remark}

\paragraph{Simplified analysis.} While the line vs. line test considered by Arora and Sudan \cite{ArSu} is the most natural to come up with, it is rather difficult to analyze. In contrast, one of the captivating aspects of the Raz-Safra proof is that it is combinatorial, and the low degree aspect of the table plays a role only in that it guarantees distance between distinct polynomials on a line.
Our analysis continues this combinatorial approach, and further simplifies it. Unlike the Raz-Safra proof, we do not need to use induction on the dimension of the ambient space $m$ but rather recover the global structure from $T$ ``in one shot''. We rely on ideas from direct product testing, \cite{DG,IKW,DS}, and on some spectral properties of incidence graphs such as the cube-point graph.

\paragraph{Proof Outline.}
Given a table $T$, whose agreement is some small $\eps$, the proof must somehow come up with the global low degree function $g:\F^m\to\F$ and then argue that on many of the cubes indeed $T(C) = g|_C$. Naively, we might try to define $g$ at each point $x$ according to the most common value among all cubes containing $x$.
This is a viable approach when the agreement is close to $1$, as is done, e.g. in the linearity testing theorem of \cite{BLR}. However, when the agreement is a small $\eps>0$, this will simply not work as we can see by considering the table half of whose entries are $T(C) \equiv 0$ and the other half $T(C) \equiv 1$. The agreement of this table is an impressive $\acxc(T) = 1/2$, and yet the suggested definition of $g$ according to majority will yield a random function that might be quite far from any low degree function.

We get around this problem by taking a {\em conditional majority}. For every point $x\in \F^m$ and value $\sigma\in \F$ we consider only cubes containing $x$ for which $T(C)(x)=\sigma$. These cubes already agree with each other on $x$ and are thus likely to agree on any other point of their intersection. Since the cubes containing $x$ cover every $y\in\F^m$, we can define a function $f_{x,\sigma}:\F^m\to\F$ on the entire space $\F^m$ by taking the most popular value among these cubes (i.e. the set of cubes whose value on $x$ is $\sigma$). We choose a best $\sigma$ for each $x$ and are left with a global function $f_x$ for each $x$.

The proof proceeds in three steps.
\begin{itemize}
\item Local structure: We show that this conditional majority definition is good, obtaining for each $x$ and $\sigma$ a function $f_{x}:\F^m\to\F$ that is ``local'' in that it comes from the cubes containing a point $x$. This is done in \lref[Section]{sec:local}.
\item Global Structure: We then show that there are many pairs $x,y$ for which $f_{x} \approx f_{y}$ thus finding a global $g$ that agrees with many of the cubes. This is done in \lref[Section]{sec:global}.
\item Low Degree: Finally, we show that $g$ is very close to a true low degree function. This is done by reduction to the Rubinfeld-Sudan low degree test \cite{RS96} that works in the high-soundness regime. This is done in \lref[Section]{sec:lowdeg}.
\end{itemize}

\paragraph{Agreement tests: low degree tests and direct product tests.}
The proof outline above resembles works on direct product testing, and this is no coincidence.
The low degree testing setting can be generalized to a more abstract ``agreement testing'' in which a function $f:X\to\Sigma$ is represented not as a truth table but as a collection of restrictions $(f|_S)_{S \in {\cal S}}$ where ${\cal S} = \inbraces{S\subset X}$ is a collection of subsets of $X$. A natural agreement test can be defined and studied. This type of question was first suggested in work of Goldreich and Safra \cite{GolSaf} in an attempt to separate the algebraic aspect of the low degree test from the combinatorial. There has been a follow-up line of work on this, \cite{DR,DG,IKW,DS}, focusing especially on the case where $X$ is a finite set, $X = [n]$, and ${\cal S}$ is the collection of all $k$-element subsets of $X$.

In the work here we bring some of the ideas from that line of work, most notably from \cite{IKW}, back to the low degree testing question. The fact that our table entries have low degree gives us extra power which makes our proof simpler than that in the abstract setting, yielding a particularly direct proof of a low degree agreement test.

Our proof makes an explicit use of the expansion properties of the relevant incidence graphs (cube vs. line, cube vs. point etc.). This allows us to prove that for every table $T$, different tests have similar agreement.
\begin{lemma}\label{lem:pxp plp}
Let $T$ be a planes table, and let $\apxp(T)$ be the success probability of a test with two planes that intersects on a point. Let $ \aplp(T)$ be the success probability of  \lref[Test]{test:PlP}, then
\[ \apxp(T)\left( - \frac{d}{q}\right)\leq \aplp(T) \leq \apxp(T) + \frac{1}{q}(1+o(1)). \]
\end{lemma}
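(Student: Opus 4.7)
The plan is to introduce a hybrid test $\aplp^{\mathrm{pt}}$ --- identical to the line test except that it checks agreement at a single uniform point $x \in \ell$ rather than on all of $\ell$ --- and to compare each of $\aplp, \apxp$ with $\aplp^{\mathrm{pt}}$.

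For the first comparison, I would show $\aplp \leq \aplp^{\mathrm{pt}} \leq \aplp + d/q$. Fix any sample $(\ell, P_1, P_2)$: if $T(P_1)|_\ell = T(P_2)|_\ell$ both tests succeed, and otherwise the two restrictions are distinct univariate polynomials of degree at most $d$, so by the degree bound they agree on at most $d$ of the $q$ points of $\ell$ and $\aplp^{\mathrm{pt}}$ succeeds with probability at most $d/q$. Integrating gives the two inequalities.

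For the second comparison, a direct counting argument shows that the joint distribution of $(x, P_1, P_2)$ under $\aplp^{\mathrm{pt}}$ coincides (up to the vanishing-measure event $P_1 = P_2$) with the distribution under $\apxp$ conditioned on the event $A = \{P_1 \cap P_2 \text{ is a line}\}$. The key computation is that marginalizing the uniform choice of $(\ell, x)$ with $x \in \ell$ gives uniform $x$, and for fixed $x$ the composition "pick $\ell \ni x$ uniformly, then $P \supseteq \ell$ uniformly" produces a uniform plane through $x$; this is precisely the $|P_1 \cap P_2| = q$ re-weighting that turns the point test, when restricted to line-intersection pairs, into the uniform distribution on such pairs. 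Hence $\aplp^{\mathrm{pt}} = \apxp_A$, the point-test success conditional on $A$, and we have the decomposition $\apxp = \Pr[A]\,\apxp_A + \Pr[A^c]\,\apxp_{A^c}$. Combining with the first comparison gives $\aplp = \apxp_A + O(d/q)$, which for $m = 3$ (where $\Pr[A^c] = 0$ since any two distinct planes in $\F^3$ share a line) immediately yields the full statement with $\aplp \in [\apxp - d/q, \apxp]$.

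The main obstacle is handling $m \geq 4$, where both inequalities must absorb the extra contribution $\Pr[A^c](\apxp_A - \apxp_{A^c})$. I would bound this quantity by $(1/q)(1+o(1))$ via a case split. Using the collision identity $\apxp = \E_x \sum_{\sigma \in \F} p_{x,\sigma}^2$ with $p_{x,\sigma} = \Pr_{P \ni x}[T(P)(x) = \sigma]$, Cauchy--Schwarz gives $\apxp \geq 1/q$, so when $\aplp$ (hence $\apxp_A$) is close to this random-agreement baseline the bound is essentially free. When $\apxp_A$ is large, a Raz--Safra-style structural statement forces $T$ to be close to a global degree-$d$ function, making $\apxp_{A^c}$ close to $\apxp_A$ and so shrinking the gap to $o(1/q)$. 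Turning this dichotomy into a uniform $(1/q)(1+o(1))$ bound is the technical heart and is where I expect to rely on spectral expansion of the plane-point incidence graph, matching the paper's philosophy of exploiting such expansion throughout.
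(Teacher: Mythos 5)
Your hybrid is exactly the right object---it is the paper's $\alpha_{2,1,2(0)}$ from \lref[Test]{test:subspaces_gen} (two planes through a random line, agreement checked at one random point of that line)---and your first comparison is essentially the paper's argument: $\aplp\le\aplp^{\mathrm{pt}}$ is the trivial direction, and the Schwartz--Zippel step is \lref[Claim]{claim:line geq inter}. (Minor point: the paper keeps this multiplicative, $\aplp\ge\aplp^{\mathrm{pt}}(1-d/q)$, which is what produces the stated lower bound $\apxp(1-d/q)$; your additive $\aplp^{\mathrm{pt}}\le\aplp+d/q$ only yields the slightly weaker $\apxp-d/q$.)

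The genuine gap is in the second comparison. Conditioning the point test on the event $A=\{P_1\cap P_2\text{ is a line}\}$ leaves you, for $m\ge 4$, with the term $\Pr[A^c]\,(\alpha_A-\alpha_{A^c})$ where $\Pr[A^c]\approx 1-q^{3-m}$ is close to $1$, and your proposed dichotomy does not bound it. The two cases (``$\alpha_A$ near the $1/q$ baseline'' and ``$\alpha_A$ large enough to invoke a Raz--Safra-type structure theorem'') leave the whole intermediate range $1/q\ll\alpha_A\ll\poly(d)\,q^{-1/8}$ uncovered; and even in the structured case such theorems only produce a global $g$ agreeing with an $\Omega(\eps)$ fraction of planes, which constrains nothing about the other $1-\Omega(\eps)$ fraction and so cannot force $|\alpha_A-\alpha_{A^c}|=o(1/q)$. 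The paper never conditions on how the two planes of the point test intersect; instead it compares $\aplp^{\mathrm{pt}}$ and $\apxp$ directly for each fixed point $x$ and value $\sigma$. Setting $p_{\ell,\sigma}=\Pr_{P\supset\ell}[T(P)(x)=\sigma]$ for lines $\ell\ni x$, bi-regularity of the inclusion graph between lines through $x$ and planes through $x$ gives $\apxp=\E_x\sum_\sigma\bigl(\E_{\ell\ni x}[p_{\ell,\sigma}]\bigr)^2$ while $\aplp^{\mathrm{pt}}=\E_x\sum_\sigma\E_{\ell\ni x}[p_{\ell,\sigma}^2]$. Jensen's inequality immediately gives $\aplp^{\mathrm{pt}}\ge\apxp$ (\lref[Claim]{claim:inter geq point}), and the excess second moment is controlled by the second singular value of that inclusion graph via $\sum_\sigma\langle Mf_\sigma,Mf_\sigma\rangle\le\sum_\sigma(\mu_\sigma^2+\lambda^2\mu_\sigma)$ (\lref[Claim]{claim:bound inner product} and \lref[Claim]{claim:pointwise point geq line}), with $\lambda^2\le(1+o(1))/q$ by \lref[Lemma]{lemma:subspaces}. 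This one-shot second-moment argument is the missing idea; I would replace the $A$/$A^c$ decomposition with it.
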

In fact, we proved a more general equivalence between tests, the general statement appears on \lref[Section]{sec:tests}.

\section{Preliminaries and Notations}
\subsection{Notations}
All the graphs we discuss throughout the paper are bipartite bi-regular graphs. Given such graph $G$, whose sides are $A,B$ we denote by $\one$ the all one vector, its size will be implied by the context. For a subset of vertices $A'\subset A$, we denote by $\one_{A'}$ the indicator vector for $A'$. For a vertex $a\in A$, we denote by $N(a)\subseteq B$ the neighbors of $a$ in $G$.

We use normalized inner product, such that for $x,y\in\R^{n}$, $\langle x,y\rangle = \frac{1}{n}\sum_i x_i y_i$, which means that $\langle \one,\one\rangle =  1$. The norm is defined by $\norm{x} = \sqrt{\langle x,x\rangle}$.

We use the notation $x\sim S$ to denote $x$ being sampled uniformly at random (u.a.r) from the set $S$, in case this set $S$ equals the entire space, we omit this symbol and simply write $\Pr_a$ or $\E_a$ to describe choosing a uniform vertex $a\in A$. We use the notation $\mathbb{I}(E)$ to denote the indicator random variable of the event $E$.

For two vectors $u,v$, we use the notation $u\approxparam{\gamma}v$ if $u$ and $v$ are equal on at least $1-\gamma$ of the coordinates.

Fix a vector space $\F^m$. An affine space of $S$ dimension $k$ is defined by $k+1$ vectors $x_0,x_1,\ldots,x_k$ such that $x_1,\ldots,x_k$ are linearly independent,
\[ S = x_0 + \mathtt{span}(x_1,\ldots,x_k) = \{ x_0 + t_1 x_1 + \ldots t_k x_k \;| \; t_1,\ldots,t_k \in \F. \}\]
A {\em line} is a $1$-dimensional affine space, a {\em plane} is a $2$-dimensional affine space, and a {\em cube} is a $3$-dimensional affine space.
We will denote the set of all lines and cubes by $\mathcal{L}$ and $\mathcal{C}$ be respectively. For a point $x\in \F^m$ let
\[\mathcal{L}_x = \{ \ell\in\mathcal{L}\;|\; \ell\ni x\}\qquad \mathcal{C}_x = \{ C\in\mathcal{C}\;|\; C \ni x\}.\]
Similarly for a line $\ell\in\mathcal{L}$ let $\mathcal{C}_\ell$ be the set of all cubes that contains $\ell$.

\subsection{Spectral Expansion Properties}
In this section, we prove two properties of bi-regular bipartite graphs with good spectral parameters. In an expander, the following is well known: if we sample a random neighbor of a small, but not too small, set of vertices, we get a nearly uniform distribution over the entire set of vertices. For our purposes, we will require something more. We need to consider not only the distribution over the vertices, but also the distribution over the edges. This is done in two lemmas below.

\begin{definition}\label{def:lambda}
Let $G = (A\cup B,E)$ be a bi-regular bipartite graph, and let $M\in\mathbb{R}^{A\times B}$ be the adjacency matrix normalized such that $\norm{M\one} = 1$, denote by $\lambda(G)$ the value
\[ \lambda(G) = \max_{v\perp \one}\left\{ \frac{\norm{Mv}}{\norm{v}} \right\} .\]
\end{definition}
This is really the second largest singular value of $M$, with a different normalization (such that the maximal singular value equals 1).

\begin{definition}
\label{def:D1D2}
Let $G = (A\cup B,E)$ be a bi-regular bipartite graph and let $B'\subseteq B$ be a subset of vertices. Define the following two distributions $D_i:A\times B\cup \fail\rightarrow[0,1]$ for $i=1,2$.
\begin{itemize}
\item  $D_1$ : Pick $b\in B'$ u.a.r. then pick $a\in N(b)$ u.a.r.
\item  $D_2$ : Pick $a\in A$ u.a.r. If $B'\cap N(a) = \emptyset$, return $\fail$. Else, pick $b\in N(a)\cap B'$ u.a.r.
\end{itemize}
\end{definition}
Clearly if $B'=B$ then $D_1=D_2$. Moreover, if $G$ is sufficiently expanding, then even for smaller $B' \subsetneq B$, the distributions are similar. Indeed, for any event defined on the edges, i.e. a subset $E'\subset E$, the following lemma shows that the probability of $E'$ is roughly the same under the two distributions.
\begin{lemma} \label{lemma:edge sampling}
Let $D_1, D_2$ as defined in \lref[Definition]{def:D1D2}. Let $G = (A\cup B,E)$ be a bi-regular bipartite graph, then for every subset $B'\subset B$ of measure $\mu>0$ and every $E'\subset E$
\[ \abs{ \Pr_{(a,b)\sim D_1}[(a,b)\in E'] - \Pr_{(a,b)\sim D_2}[(a,b)\in E']} \leq \frac{\lambda(G)}{\sqrt{\mu}} .\]
Where it is understood that if $D_2$ output $\fail$, we treat it as if $(a,b)\not\in E'$.
\end{lemma}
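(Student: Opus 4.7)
The plan is to view both $D_1$ and $D_2$ as (sub-probability) measures on the edges of $G$ --- with $D_2$ placing extra mass on $\fail$ --- and to observe that they share the same conditional distribution of $b$ given $a$, so that $|\Pr_{D_1}[E']-\Pr_{D_2}[E']|$ is controlled purely by the discrepancy in their $A$-marginals. I would then bound that discrepancy using $\lambda(G)$.

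First I would note that $D_1$ is exactly the uniform distribution on $E(A,B'):=\{(a,b)\in E:b\in B'\}$: by bi-regularity, sampling $b\in B'$ uniformly and then a uniform neighbor yields each such edge with probability $1/(|B'|d_B)$, where $d_A,d_B$ denote the two degrees. In particular, for any $a$ with $N(a)\cap B'\neq\emptyset$, the $D_1$-conditional on $b$ is uniform on $N(a)\cap B'$; by construction the same holds for $D_2$. Writing $\delta_a:=|N(a)\cap B'|/d_A$, a direct computation gives $\Pr_{D_1}[a]=\delta_a/(\mu|A|)$ and $\Pr_{D_2}[a]=\mathbb{I}[\delta_a>0]/|A|$. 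Since the $b$-conditionals agree (and both vanish when $\delta_a=0$), summing $|\Pr_{D_1}[a,b]-\Pr_{D_2}[a,b]|$ over $b$ yields
\[
\left|\Pr_{D_1}[E']-\Pr_{D_2}[E']\right| \le \sum_{a\in A}\bigl|\Pr_{D_1}[a]-\Pr_{D_2}[a]\bigr| \le \frac{1}{\mu|A|}\sum_{a\in A}|\delta_a-\mu|,
\]
using $|\delta_a-\mu\,\mathbb{I}[\delta_a>0]|\le|\delta_a-\mu|$ pointwise.

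To bound $\sum_a|\delta_a-\mu|$, let $u_{B'}\in\R^B$ be the indicator of $B'$. With the normalized inner product we have $\langle u_{B'},\one\rangle=\mu$ and $\|u_{B'}\|^2=\mu$, so $u_{B'}=\mu\one+v$ with $v\perp\one$ and $\|v\|^2=\mu-\mu^2\le\mu$. Taking $M$ as in Definition~\ref{def:lambda}, the normalization $\|M\one\|=1$ combined with bi-regularity forces $M_{a,b}=\mathbb{I}[(a,b)\in E]/d_A$, hence $M\one=\one$; so $(Mu_{B'})_a=\delta_a$ and $\delta_a-\mu=(Mv)_a$. Therefore $\|Mv\|^2\le\lambda(G)^2\|v\|^2\le\lambda(G)^2\mu$, i.e.\ $\frac{1}{|A|}\sum_a(\delta_a-\mu)^2\le\lambda(G)^2\mu$, and Cauchy--Schwarz gives $\sum_a|\delta_a-\mu|\le|A|\lambda(G)\sqrt{\mu}$. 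Plugging back into the previous display yields the claimed bound $\lambda(G)/\sqrt{\mu}$.

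The only delicate point is the clean handling of $\{a:\delta_a=0\}$, which is exactly what the lemma's $\fail$ convention accommodates; once that is granted, the argument is a one-shot application of the spectral bound together with Cauchy--Schwarz, and no further expander manipulation (such as an expander mixing lemma, or a concentration step) is needed.
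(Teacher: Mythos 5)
Your proof is correct, and all the computations check out: the identities $\Pr_{D_1}[a]=\delta_a/(\mu|A|)$ and $\Pr_{D_2}[a]=\mathbb{I}[\delta_a>0]/|A|$, the agreement of the conditionals of $b$ given $a$, the pointwise bound $|\delta_a-\mu\,\mathbb{I}[\delta_a>0]|\le|\delta_a-\mu|$, and the final chain $\frac{1}{\mu|A|}\sum_a|\delta_a-\mu|\le\frac{1}{\mu}\|M\one_{B'}-\mu\one\|\le\lambda(G)/\sqrt{\mu}$ all hold. The spectral core is identical to the paper's: both arguments reduce to bounding $\|M(\one_{B'}-\mu\one)\|\le\lambda(G)\sqrt{\mu}$ and finish with one application of Cauchy--Schwarz. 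Where you differ is in the decomposition. The paper keeps the event $E'$ in play throughout: it introduces the restricted matrix $M'$ and the vector $P$ of conditional probabilities $P_a=\Pr_{b\in N(a)}[(a,b)\in E'\mid b\in B']$, establishes the identity $\langle M\one_{B'},P\rangle=\langle\one,M'\one_{B'}\rangle$, and then pairs the deviation vector against $P$ (using $\|P\|\le 1$). You instead factor each distribution into its $A$-marginal times the shared conditional, which eliminates $E'$ from the analysis entirely and reduces the lemma to a total-variation bound on the marginals. Your route is arguably cleaner --- it avoids defining $M'$ and $P$ and the identity relating them --- and it proves the marginally stronger statement that the bound holds simultaneously for all $E'$ (indeed for the total variation distance between $D_1$ and $D_2$ as sub-distributions on edges). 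The paper's pairing against $P$ is in principle sharper for a fixed $E'$ when $P$ has small norm, but that sharpness is never exploited, so nothing is lost by your approach.
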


We now state a similar lemma, for sampling two adjacent edges instead of a single edge. We will need the graph to satisfy one more requirement.
\begin{definition}
\label{def:D3D4}
Let $G=(A\cup B,E)$ be a bi-regular bipartite graph, such that every two distinct  $b_1,b_2\in B$ have exactly the same number of common neighbors (i.e for all distinct $b_1,b_2\in B$, $|N(b_1)\cap N(b_2)|$ is the same), and this number is non-zero.
Let $B'\subseteq B$ be a subset of vertices, we define the following distributions $D_i:(A\times B\times B)\cup \fail\rightarrow[0,1]$, for $i=3,4$.
\begin{itemize}
\item $D_3$ : Pick $b_1,b_2\in B'$ u.a.r. then pick $a\in N(b_1)\cap N(b_2)$ u.a.r.
\item $D_4$ : Pick $a\in A$ u.a.r. If $B'\cap N(a) = \emptyset$, return $\fail$. Else, pick $b_1,b_2\in N(a)\cap B'$ u.a.r.
\end{itemize}
\end{definition}

\begin{lemma}\label{lemma:two edges sampling}
	Let $D_3, D_4$ be as defined in \lref[Definition]{def:D3D4}. Let $G=(A\cup B,E)$ be a bi-regular bipartite graph, such that every two distinct  $b_1,b_2\in B$ have exactly the same number of common neighbors (i.e for all distinct $b_1,b_2\in B$, $|N(b_1)\cap N(b_2)|$ is the same), and this number is non-zero. Then for every subset $B'\subset B$ of measure $\mu>0$ and every $E'\subset E$
	\[ \abs{\Pr_{a,b_1,b_2\sim D_3}[(a,b_1)(a,b_2)\in E'] - \Pr_{a,b_1,b_2\sim D_4}[(a,b_1)(a,b_2)\in E']} \leq \frac{2\lambda(G)}{\mu} +\frac{1}{\mu^2d_A} + \frac{1}{\mu^2\abs{B}}, \]
where $d_A$ is the degree on $A$ side, and it is understood that if $D_4$ output $\fail$, we treat it as if $(a,b)\not\in E'$.	\end{lemma}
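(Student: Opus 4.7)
The plan is to directly expand both probabilities as sums over $a\in A$ and reduce everything to a single spectral inequality on $G$. For $a\in A$, set $N_a:=|N(a)\cap B'|$, $f(a,b):=\mathbb{I}[(a,b)\in E']$ (with $f\equiv 0$ outside $E$), and $h(a):=\sum_{b\in B'}f(a,b)$. The key combinatorial identity is that the number of ordered pairs $(b_1,b_2)\in(N(a)\cap B')^2$ with both $(a,b_1),(a,b_2)\in E'$ factors as $h(a)^2$. Hence $\Pr_{D_4}[E']=\E_a[\tilde h(a)^2]$, where $\tilde h(a):=h(a)/N_a\in[0,1]$ (with $0/0:=0$). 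For $\Pr_{D_3}[E']$, I split the sum over $(b_1,b_2)\in(B')^2$ into the diagonal (contributing at most $1/|B'|\le 1/(\mu^2|B|)$) and the off-diagonal; on the off-diagonal $|N(b_1)\cap N(b_2)|=k$ by hypothesis, and using $\sum_{b_1\ne b_2}f(a,b_1)f(a,b_2)=h(a)^2-h(a)$ gives
\[\Pr_{D_3}[E']=\frac{1}{|B'|^2 k}\sum_a\bigl(h(a)^2-h(a)\bigr)+O\!\bigl(\tfrac{1}{\mu^2|B|}\bigr).\]

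Next I normalize the constant $k$. Double-counting triples $(a,b_1,b_2)$ with $b_1\ne b_2$ and $a\in N(b_1)\cap N(b_2)$ yields $|B|(|B|-1)k=|A|d_A(d_A-1)$, so $\tfrac{1}{|B'|^2 k}=\tfrac{1}{|A|(d_A\mu)^2}\bigl(1+O(1/|B|+1/d_A)\bigr)$. Combined with the trivial bound $h\le N_a\le d_A$, so $\sum_a h(a)^2\le|A|d_A^2$, this substitution costs at most $O(1/(\mu^2 d_A))$; similarly the subtracted linear term satisfies $\tfrac{1}{|B'|^2 k}\sum_a h(a)\le\tfrac{1}{\mu d_A}\le\tfrac{1}{\mu^2 d_A}$ and is absorbed into the same error. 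What remains is to prove
\[\Bigl|\E_a\!\bigl[\tilde h(a)^2\,\bigl(\tfrac{N_a^2}{(d_A\mu)^2}-1\bigr)\bigr]\Bigr|\;\le\;\frac{2\lambda(G)}{\mu}.\]

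For the spectral step, since $\tilde h\le 1$, the left-hand side is at most $\E_a[|N_a^2-(d_A\mu)^2|]/(d_A\mu)^2$. Decomposing $\one_{B'}=\mu\one+w$ with $w\perp\one$ and $\|w\|^2\le\mu$, the normalized adjacency $M$ of $G$ satisfies $(M\one_{B'})(a)=N_a/d_A$, so by \lref[Definition]{def:lambda}, $\|N/d_A-\mu\one\|=\|Mw\|\le\lambda(G)\sqrt\mu$, i.e.\ $\E_a[(N_a-d_A\mu)^2]\le d_A^2\lambda(G)^2\mu$. Factoring $N_a^2-(d_A\mu)^2=(N_a-d_A\mu)(N_a+d_A\mu)$ and applying Cauchy-Schwarz, while bounding $\E_a[(N_a+d_A\mu)^2]\le 2(d_A\mu)^2+O(d_A^2\lambda^2\mu)$ via the same spectral estimate, delivers the dominant term of order $\lambda(G)/\mu$. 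In the case $\lambda(G)>\sqrt\mu$ one has $2\lambda/\mu\ge 1$ and the claim is already trivially true.

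The main obstacle I expect is this last inequality: a naive Cauchy-Schwarz using only $N_a+d_A\mu\le 2d_A$ yields $\lambda/\mu^{3/2}$, losing a factor of $\sqrt\mu$. The crucial point is that $\E_a[N_a^2]=(d_A\mu)^2+\mathrm{Var}_a[N_a]\le 2(d_A\mu)^2$ in the nontrivial regime $\lambda\le\sqrt\mu$, so $N_a+d_A\mu$ is typically of order $d_A\mu$ rather than $d_A$. The rest of the proof is careful bookkeeping to make sure the secondary errors come out to $1/(\mu^2 d_A)$ and $1/(\mu^2|B|)$ as claimed, with no appeal to any tool beyond the definition of $\lambda(G)$ and the common-neighborhood hypothesis.
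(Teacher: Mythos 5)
Your proposal is correct in structure, and its heart --- the ``spectral step'' --- is in fact the same computation as in the paper, just written in unnormalized coordinates. Your quantity $\E_a\bigl[\tilde h(a)^2\bigl(N_a^2/(d_A\mu)^2-1\bigr)\bigr]$ is exactly the paper's $\frac{1}{\mu^2}\langle M'\one_{B'},M'\one_{B'}\rangle-\langle\one,P\rangle$ after their identity $\langle M'\one_{B'},M'\one_{B'}\rangle=\E_a[P_a(M\one_{B'})_a^2]$: your $\tilde h(a)^2$ is their $P_a$ and $N_a/d_A$ is $(M\one_{B'})_a$. Both proofs then factor the difference of squares, apply Cauchy--Schwarz with $\tilde h\le1$, and exploit that the ``plus'' factor has second moment $O((d_A\mu)^2)$ rather than the trivial $O(d_A^2)$ --- the very point you single out as the crux. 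Where you genuinely diverge is in relating $\Pr_{D_3}$ to $\E_a[\tilde h(a)^2N_a^2/(d_A\mu)^2]$: the paper observes that, conditioned on $b_1\ne b_2$, the samplings ``pick $b_1,b_2\in B$ then a common neighbor'' and ``pick $a$ then two neighbors'' are both uniform on length-two paths and hence \emph{identical}, so the only discrepancy is the two diagonal probabilities $1/d_A+1/\abs{B}$; you instead compute $k$ by double counting and push a multiplicative $(1+O(1/d_A+1/\abs{B}))$ correction through $\sum_a h(a)^2$. Your identity $\abs{B}(\abs{B}-1)k=\abs{A}d_A(d_A-1)$ is precisely the path count underlying their observation, but your packaging is lossier.

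Two pieces of the bookkeeping you defer do not close as stated. First, the secondary errors in your route come out to roughly $\frac{1}{\mu\abs{B}}$ (diagonal) plus about $\frac{2}{\mu(d_A-1)}$ ($k$-normalization and the dropped $-h(a)$ term, each contributing $\approx\frac{1}{\mu(d_A-1)}$ after the bound $\sum_a h(a)^2\le\mu\abs{A}d_A^2$), which exceeds the stated $\frac{1}{\mu^2 d_A}$ once $\mu\gtrsim\frac12$; the paper's conditioning argument yields exactly $\frac{1}{\mu^2 d_A}+\frac{1}{\mu^2\abs{B}}$ with no slack consumed. Second, your intermediate bound $\E_a[(N_a+d_A\mu)^2]\le 2(d_A\mu)^2+O(d_A^2\lambda^2\mu)$ is not right: the expansion gives $\E[N_a^2]+3(d_A\mu)^2\le 4(d_A\mu)^2+d_A^2\lambda^2\mu$, and to land on the constant $2$ in $2\lambda/\mu$ for all $\mu$ you need either the paper's cleaner estimate ($\norm{M\one_{B'}}\le\norm{\one_{B'}}=\sqrt\mu$, hence $\norm{M\one_{B'}+\mu\one}\le 2\sqrt\mu$) or the sharper $\norm{\one_{B'}-\mu\one}^2=\mu(1-\mu)$. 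None of this affects the downstream uses of the lemma, which have ample constant-factor slack, but as a proof of the inequality with the stated constants your sketch needs those repairs.
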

The proofs of these two lemmas appear in \lref[Appendix]{section:missing proofs}.

\subsection{Inclusion Graphs and Their Spectral Gap}

We record here the expansion of several bi-partite {\em inclusion graphs} that will be relevant for our analysis. We prove the claims about these spectral gaps in \lref[Appendix]{section:inclusion proofs}. Unless otherwise stated, $G(A,B)$ denotes a bipartite inclusion graph between $A$ and $B$ where $a\in A$ is connected to $b\in B$ if $a\subseteq b$. The relation of containment will be clear from the sets $A$ and $B$.

For example, the in the graph $G_1(\mathcal{L}\setminus\mathcal{L}_x,\mathcal{C}_x)$, the left side vertices $A$ are all the lines that do not contain $x\in\F^m$, and the right side vertices are all the cubes that contain $x$. There is an edge between a line $\ell$ and a cube $C$ if $\ell\subset C$.

Recall \lref[Definition]{def:lambda} of $\lambda(G)$ for a bipartite graph $G$.

\newenvironment{thmenum}
 {\begin{enumerate}[label=\upshape(\arabic*),ref=\theclaim (\arabic*)]}
 {\end{enumerate}}

\begin{lemma}
\label{lemma:ev_bounds}
We have for every $m\geq 6$,
\begin{thmenum}
\item \label{claim:ev_LCx} For $G_1(\mathcal{L}\setminus \mathcal{L}_x, \mathcal{C}_x)$ , $\lambda(G_1) \approx \frac{1}{\sqrt{q}} $.
\item \label{claim:ev_LxCx} For $G_2(\mathcal{L}_x,\mathcal{C}_x))$ , $\lambda(G_2) \approx  \frac{1}{q}$.
\item \label{claim:ev_FmC-ell}For $G_3(\F^m\setminus \ell,\mathcal{C}_\ell)$ , $\lambda(G_3)\approx  \frac{1}{\sqrt{q}} $.
\item \label{claim:ev_FmC} For $G_4(\F^m,\mathcal{C})$ , $\lambda(G_4) \approx  \frac{1}{q^{3/2}}  $.
\item \label{claim:ev_FmCx} For $G_5(\F^m\setminus\{x\},\mathcal{C}_x)$ , $\lambda(G_5) \approx  \frac{1}{q}  $.
\setcounter{listcounter}{\value{enumi}} 
\end{thmenum}
And for every $m\geq 3$
\begin{thmenum}
\setcounter{enumi}{\value{listcounter}}
\item \label{claim:ev_FmL} For $G_6(\F^m,\mathcal{L})$, $\lambda(G_6)\approx \frac{1}{\sqrt{q}}$.
\end{thmenum}
where $\approx$ denotes equality up to a multiplicative factor of $1\pm o(1)$, and $o(1)$ denotes a function that approaches zero as $q\to\infty$.
\end{lemma}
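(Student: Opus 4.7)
The plan is to bound $\lambda(G)$ for each of the six inclusion graphs by computing the spectrum of $M^TM$ (equivalently $MM^T$) on the smaller side, and extracting the second-largest singular value of $M$. In every case, symmetry collapses $M^TM$ to a matrix with very few distinct eigenvalues, which can be read off directly.

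First I would set up the symmetry. For each graph there is a natural subgroup of the affine group $\mathrm{Aff}(\F^m)$ that acts transitively on both sides: the stabilizer of $x$ for $G_1, G_2, G_5$; the stabilizer of $\ell$ for $G_3$; and all of $\mathrm{Aff}(\F^m)$ for $G_4, G_6$. Transitivity on both sides gives bi-regularity, and the action on the smaller side controls the entries of $M^TM$, which are constant on orbits of pairs. When the action on the smaller side is doubly transitive (as for $G_4$ and $G_6$), $M^TM$ takes the form $\alpha I + \beta (J-I)$, and the non-trivial eigenvalue $\alpha - \beta$ gives $\lambda(G)^2$ immediately. When it is not doubly transitive, the pairs split into a small constant number of orbit types, so $M^TM$ still has only three or four distinct eigenvalues, obtained by solving a small linear system.

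Next, the core computation is counting, for each orbit type on the chosen side, the number of common neighbors. For $G_6$, two distinct points lie on a unique common line, while each point lies on $(q^m-1)/(q-1)$ lines; this yields $\lambda(G_6)\approx 1/\sqrt q$. For $G_4$, by $2$-transitivity on $\F^m$ the count of cubes through two distinct points depends only on distinctness, and the arithmetic gives $\lambda(G_4)\approx 1/q^{3/2}$. For $G_2$, after quotienting by $x$ one obtains the classical $q$-Grassmann incidence between $1$- and $3$-dimensional linear subspaces of $\F^m/\langle 0\rangle$, whose spectrum is well known and yields $\lambda\approx 1/q$. The graphs $G_1, G_3, G_5$ are analogous: one enumerates the handful of orbit types (e.g., for $G_1$, whether a line $\ell \notin \mathcal{L}_x$ and the cube through $x$ interact by $\ell\subset C$ or otherwise; for $G_3$, whether a point $y\notin \ell$ lies in a cube through $\ell$ or not), writes the corresponding $\alpha, \beta, \gamma$ in $M^TM$, and solves a $2\times 2$ or $3\times 3$ system to extract the second eigenvalue. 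Finally, dividing by the normalization of Definition~\ref{def:lambda} gives the stated $1/\sqrt q$, $1/q$, or $1/q^{3/2}$ scaling up to the $1\pm o(1)$ factor.

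The main obstacle is the bookkeeping in the non-doubly-transitive cases $G_1, G_2, G_3, G_5$: enumerating orbit types of pairs and counting common neighbors without arithmetic errors. The cleanest way I see to organize this is to pass to the projective quotient by the fixed point $x$ (or the fixed line $\ell$) and reduce every count to a Gaussian-binomial identity, absorbing the lower-order corrections into the $o(1)$ factor as $q\to\infty$. The calculations are elementary, but reducing them uniformly to $q$-analog formulas is what keeps the six cases from becoming a tangle.
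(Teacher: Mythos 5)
Your plan is essentially the paper's proof: the paper computes $\lambda(G)^2$ as the second eigenvalue of the two-step Markov operator $MM^T$, decomposes that operator over the orbit types of pairs of vertices on one side (identity plus the complete-graph operator $J$ for the nearly doubly transitive cases $G_3$--$G_6$, and identity plus the Grassmann graph $T_{3,2}$ with its known spectrum from Brouwer--Cohen--Neumaier for $G_1$, with $G_2$ deferred to the general subspace lemma), and reads off the answer from the counting of common neighbors per orbit. The one caution is that for $G_1$ the pair-orbit structure is genuinely a Grassmann association scheme whose nontrivial eigenvalue you must import (it is not just a $2\times 2$ or $3\times 3$ system of the union-of-cliques type that suffices for $G_3$ and $G_5$), exactly as you already propose to do for $G_2$.
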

In general one can see that $\lambda \approx \frac 1 {\sqrt q^p}$ where $p$ is the number of degrees of freedom left after choosing a left hand vertex.
We prove this lemma in \lref[Appendix]{section:inclusion proofs}.

\section{Proof of the Main Theorem}
In this section we prove \lref[Theorem]{thm:main} in three steps - local structure, global structure and finally proving the agreement with a low degree polynomial. These parts are proved in the subsequent subsections.

Let $T$ be a degree $d$ cubes table, i.e. for every $C\in \mathcal{C}$, $T(C):C\to\F$ is a degree $d$ polynomial. Further assume that $\acxc(T) \geq \epsilon$, where $\epsilon = \Omega (d^4/\sqrt q)$.

\subsection{Local Structure}\label{sec:local}

In this section we show that for many points $x\in \F^m$, there exists a function $f_x:\F^m\to\F$ for which $f_x|_C \approxparam{2\gamma} T(C)$ for a good fraction of the cubes containing $x$, for $\gamma=\Omega(1/d^3)$. Recall that $\approxparam{2\gamma}$ means that the two functions agree on $1-2\gamma$ fraction of the points in their domain.

For each $x\in\F^m$ and $\sigma\in\F$, we define \[\cons_{x,\sigma} = \{C\in \mathcal{C}_x | T(C)(x) = \sigma \}.\]
Following \cite{IKW} we have the following important definition,
\begin{definition}[Excellent pair]
$(x,\sigma)$ is $(\frac{\epsilon}{2}, \gamma)$-excellent if:
\begin{enumerate}
	\item $\Pr_{C\in\mathcal{C}_x}[C\in\cons_{x,\sigma}]\geq \frac{\epsilon}{2}$.
	\item Let $C_1,\ell,C_2$ be chosen by the following probability distribution, $C_1\in \mathcal{C}_{x,\sigma}$ u.a.r, $\ell\subset C_1$ a random line that contains $x$ and $C_2\in \mathcal{C}_{x,\sigma}\cap \mathcal{C}_\ell$ (a random cube in $\mathcal{C}_{x,\sigma}$ that contains $\ell$). \[\Pr_{C_1,\ell,C_2}[T(C_1)_{|\ell} \neq T(C_2)_{|\ell}]\leq \gamma.\]
\end{enumerate}
A point $x\in\F^m$ is $(\frac{\epsilon}{2}, \gamma)$-excellent, if exists $\sigma\in\F$ such that $(x,\sigma)$ is $(\frac{\epsilon}{2}, \gamma)$-excellent.
\end{definition}
Note that in the definition of excellent, the marginal distribution of both $C_1,C_2$ is uniform in $\cons_{x,\sigma}$.

In the sequel, we fix $\gamma= \Omega(1/d^3)$ and say that a point is excellent if it is $(\frac\eps 2,\gamma)$-excellent.
We now state the main lemma in this section.
\begin{lemma}[Local Structure] \label{lemma:local structure}
For $\gamma = \Omega(\frac{1}{d^3})$, let $T$ be a cubes table that passes \lref[Test]{test:CxC} with probability larger than $\epsilon = \Omega( \frac{d^4}{\sqrt{q}})$,
then at least $\frac{\epsilon}{3}$ of the points $x\in \F^m$ are excellent, and for each excellent $x$ there exist a function $f_x:\F^m\rightarrow \F$ such that
\[ \Pr_{C\sim \mathcal{C}_x}[T(C) \approxparam{2\gamma} f_{x|_C}] \geq \frac{\epsilon}{4} .\]
\end{lemma}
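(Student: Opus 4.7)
The plan is to prove the lemma in three stages mirroring the three bullets laid out by the authors: first extract from the global agreement a large set of points satisfying condition (1) of excellence, then upgrade most of them to condition (2) by exploiting the degree bound on lines, and finally for each excellent $x$ construct $f_x$ by a conditional plurality vote and verify the claimed agreement using the spectral expansion of the cube--point and cube--line inclusion graphs.

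\textbf{Condition (1).} The hypothesis can be rewritten as
\[ \acxc(T) = \E_{x}\sum_{\sigma\in\F} p_{x,\sigma}^2 \geq \epsilon, \qquad p_{x,\sigma} := \frac{|\cons_{x,\sigma}|}{|\mathcal{C}_x|}. \]
Since $\sum_\sigma p_{x,\sigma}^2\leq \max_\sigma p_{x,\sigma}$, the random variable $M(x)=\max_\sigma p_{x,\sigma}\in[0,1]$ has $\E_x[M(x)]\geq \epsilon$. A standard Markov estimate then gives $\Pr_x[M(x)\geq \epsilon/2]\geq \epsilon/2$, so at least an $\epsilon/2$--fraction of points $x$ satisfy condition (1) with some $\sigma_x\in\F$.

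\textbf{Condition (2).} Consider the auxiliary experiment in which we sample $x$, then a line $\ell\ni x$, then two cubes $C_1,C_2\supset\ell$ independently; by the symmetry argument the marginal on $(\ell,C_1,C_2)$ is exactly the distribution of the line-vs.-cube test, and the marginal on $(C_1,C_2)$ given $x$ dominates (via Cauchy--Schwarz, $\E_\ell p_{x,\sigma,\ell}^2\geq p_{x,\sigma}^2$) the independent-cubes-through-$x$ distribution. Hence the analogous ``line--line agreement'' $\aclc$ satisfies $\aclc\geq \acxc-d/q$, because conditioned on $T(C_1)|_\ell\neq T(C_2)|_\ell$ the number of $x\in\ell$ where the two polynomials coincide is at most $d$. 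Re-examining the computation while conditioning on $T(C_1)(x)=T(C_2)(x)=\sigma$ gives
\[ \sum_{x,\sigma}\; \E_{\ell\ni x}\bigl[ p_{x,\sigma,\ell}^{\,2}\bigr]\cdot \Pr_{\text{cond.\ (2) distr.}}\bigl[T(C_1)|_\ell\neq T(C_2)|_\ell\bigr]\;\leq\; q^{m}\cdot\tfrac{d}{q}. \]
Bounding the condition-(2) distribution against the $\E_\ell p_{x,\sigma,\ell}^{\,2}$--weighted distribution using the cube-line expansion \ref{claim:ev_LxCx} (with $\mu=p_{x,\sigma}\gg\lambda=1/q$ for 1-excellent $(x,\sigma)$), and substituting $p_{x,\sigma_x}\geq\epsilon/2$, yields a total mass at most $O\!\left(d/(q\gamma\epsilon^2)\right)=O(1/d^4)$ of 1-excellent points for which condition (2) fails with our choice $\gamma=\Theta(1/d^3)$ and $\epsilon\geq\beta_1 d^4/\sqrt q$. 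Subtracting this from the $\epsilon/2$ lower bound leaves at least $\epsilon/3$ excellent points.

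\textbf{Defining $f_x$ and showing agreement.} For an excellent $(x,\sigma_x)$ and any $y\neq x$, let $\ell=\ell_{xy}$ and define $f_x(y)$ to be the value $\tau$ maximizing $\Pr_{C\in \cons_{x,\sigma_x}\cap\mathcal{C}_\ell}[T(C)(y)=\tau]$ (breaking ties arbitrarily); set $f_x(x)=\sigma_x$. Condition (2) says that the polynomial $T(C)|_\ell$ is essentially constant as $C$ ranges over $\cons_{x,\sigma_x}\cap\mathcal{C}_\ell$ for $\geq 1-O(\sqrt\gamma)$ of the lines $\ell\ni x$, so $f_x$ captures that dominant polynomial on almost every line through $x$ and $T(C)(y)=f_x(y)$ for most $(C,y)$ with $C\in\cons_{x,\sigma_x}$, $y\in C$. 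To upgrade this to ``$T(C)\approxparam{2\gamma} f_x|_C$ for at least half of $C\in\cons_{x,\sigma_x}$'' one must move between two sampling schemes for $(C,y)$: (i) pick $C\in\cons_{x,\sigma_x}$, then $y\in C$; and (ii) pick a line $\ell\ni x$, then $y\in\ell\setminus\{x\}$ and $C\in\cons_{x,\sigma_x}\cap\mathcal{C}_\ell$. The spectral bounds of \lref[Lemma]{lemma:ev_bounds} --- specifically \ref{claim:ev_FmCx} for cubes/points and \ref{claim:ev_LxCx} for cubes/lines through $x$, together with \lref[Lemma]{lemma:edge sampling} applied to the subset $\cons_{x,\sigma_x}$ of measure $\geq\epsilon/2$ --- control exactly this difference up to an error $O(\lambda/\sqrt\mu)=O(1/(q^{1/2}\sqrt\epsilon))$, which is dominated by $\gamma$ under our hypothesis. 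A concluding Markov step over $C\in\cons_{x,\sigma_x}$ then gives the required fraction $\epsilon/4$.

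\textbf{Main obstacle.} Steps 1 and 2 are essentially averaging arguments whose only subtlety is tracking the right weighting by $p_{x,\sigma}^2$ and converting point-agreement to line-agreement via the degree-distance trick. The genuine technical work lies in Step 3: condition (2) only controls pairs of cubes that share a common line \emph{through $x$}, whereas the desired agreement with $f_x$ is on arbitrary points of a random cube through $x$. Bridging these two regimes cleanly, without incurring a loss that exceeds $\gamma$ or $\epsilon$, is where the expansion of the inclusion graphs is used in a nontrivial way, and where the choice of thresholds $\gamma=\Theta(1/d^3)$ and $\epsilon\geq d^4/\sqrt q$ is calibrated.
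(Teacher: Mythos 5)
Your high-level plan is the paper's (second-moment extraction of heavy pairs, pruning ``confusing'' lines via Schwartz--Zippel plus an averaging step, conditional plurality to define $f_x$), but Step 3 has a real gap. You first apply Markov over lines to condition (2), concluding that for $1-O(\sqrt\gamma)$ of the lines $\ell\ni x$ the restrictions $T(C)|_\ell$, $C\in\cons_{x,\sigma}\cap\mathcal{C}_\ell$, are ``essentially constant,'' and then propagate to cubes. This loses a square root: along this route you can only conclude $T(C)\approxparam{O(\sqrt\gamma)}f_x|_C$ for many cubes, not $\approxparam{2\gamma}$ as the lemma states (and as the global-structure step consumes downstream). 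The paper avoids the loss with a pointwise collision argument: for each $y$, setting $\gamma_y=\Pr_{C_1,C_2\sim\cons_{x,\sigma}\cap\mathcal{C}_y}[T(C_1)(y)\neq T(C_2)(y)]$, one has $1-\gamma_y=\sum_{\theta}\Pr_{C}[T(C)(y)=\theta]^2\leq\Pr_C[T(C)(y)=f_{x,\sigma}(y)]$, and $\E_y[\gamma_y]\leq\gamma$ because the distribution ``$C_1\sim\cons_{x,\sigma}$, $y\sim C_1$, $C_2\sim\cons_{x,\sigma}\cap\mathcal{C}_y$'' \emph{is} the distribution in condition (2) of excellence (any cube through $x$ and $y$ contains $\ell(x,y)$). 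Relatedly, the ``main obstacle'' you flag does not exist: your scheme (i) and the relevant conditional form of scheme (ii) are the same distribution by bi-regularity (every cube has the same number of points, every pair $x\neq y$ lies in the same number of cubes), so no inclusion-graph expansion is needed anywhere in the local-structure step; the spectral machinery only enters in the global-structure and low-degree sections.

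There is also an arithmetic problem in Step 2. Your bound $O\!\left(d/(q\gamma\eps^2)\right)=O(1/d^4)$ on the mass of heavy-but-confused pairs does not close the argument: $1/d^4$ need not be smaller than $\eps/6$ (take $d=O(1)$ and $q$ large, so that $\eps$ may be as small as $\Theta(1/\sqrt q)$), so ``subtracting this from $\eps/2$ leaves $\eps/3$'' fails as written. The fix is to threshold $\Pr[H\wedge E\mid x,\sigma]$ at $\eps\gamma/2$, using $\Pr[T(C_2)(x)=\sigma\mid C_1\in\cons_{x,\sigma}]\geq p_{x,\sigma}\geq\eps/2$ (which follows from the convexity $\E_\ell[p_{x,\sigma,\ell}^2]\geq p_{x,\sigma}^2$ you already mention, with no appeal to \lref[Lemma]{claim:ev_LxCx}); the failure mass is then $2d/(q\eps\gamma)=O(1/\sqrt q)\leq\eps/6$ once $\beta_1$ is large enough. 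The extra factor of $\eps$ in your denominator is exactly what breaks the comparison.
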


We will consider the distribution $\D$ on $(x,\ell,C_1,C_2)$ obtained by choosing $x$ uniformly, choosing $\ell \in \mathcal{L}_x$ uniformly, and then choosing $C_1,C_2\in \mathcal{C}_\ell$ uniformly.

This distribution induces a distribution $(x,T(C_1)(x))$ on pairs of point $x$ and value $\sigma\in \F$.

\begin{claim}
	\label{claim:good but not excellent}
	For every $\gamma = \Omega(\frac{1}{d^3})$,
	$$ \Pr_{(x,\sigma)}[\mbox{$(x,\sigma)$ is $(\frac{\epsilon}{2}, \gamma)$ - excellent}] \geq \frac{\epsilon}{3}.$$
\end{claim}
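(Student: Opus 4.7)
The plan is to verify the two defining conditions of excellentness separately and combine via a union bound. Write $p(x,\sigma) := \Pr_{C \in \mathcal{C}_x}[T(C)(x) = \sigma]$. A crucial observation is that under $\mathcal{D}$ the marginal distribution on $(x,\sigma) = (x, T(C_1)(x))$ is exactly: $x$ uniform in $\mathbb{F}^m$ and $\sigma$ drawn according to $p(x,\cdot)$; this is because, by regularity of the line-vs-cube inclusion graph at $x$, picking a random cube through a random line through $x$ is the same as picking a uniform cube through $x$. The only consequence of the hypothesis I will use is the rewriting $\epsilon \le \acxc(T) = \E_x \sum_\sigma p(x,\sigma)^2$.

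\textbf{Condition 1 (popularity).} I show $\Pr_{(x,\sigma)}[p(x,\sigma) \ge \epsilon/2] \ge \epsilon/2$ by a pruning argument. The contribution to $\E_x \sum_\sigma p(x,\sigma)^2$ from $\sigma$'s with $p(x,\sigma) < \epsilon/2$ is at most $(\epsilon/2)\cdot \E_x \sum_\sigma p(x,\sigma) = \epsilon/2$, so the complementary contribution is at least $\epsilon/2$. Using $p^2 \le p$, the same lower bound holds for $\E_x \sum_{\sigma:\,p\ge\epsilon/2} p(x,\sigma)$, which is precisely $\Pr_{(x,\sigma)}[p(x,\sigma) \ge \epsilon/2]$.

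\textbf{Condition 2 (line consistency within $\cons_{x,\sigma}$).} Here I bound $\mu := \E_{(x,\sigma)\sim \mathcal{D}}[q_{x,\sigma}]$ and apply Markov. The core ingredient is a ``univariate Schwartz--Zippel'' observation: two distinct degree-$d$ polynomials on a line $\ell$ agree on at most $d$ of the $q$ points of $\ell$. Viewing $x$ as random in $\ell$ under $\mathcal{D}$, this immediately yields
\[
\Pr_\mathcal{D}\big[\,T(C_1)|_\ell \neq T(C_2)|_\ell \text{ and } T(C_1)(x) = T(C_2)(x)\,\big] \ \le\ \frac{d}{q}.
\]
Complementarily, Jensen's inequality applied to $\E_{\ell\in\mathcal{L}_x}[p_\ell(x,\sigma)^2] \ge (\E_\ell p_\ell(x,\sigma))^2 = p(x,\sigma)^2$ gives
\[
\Pr_\mathcal{D}[\,T(C_1)(x) = T(C_2)(x)\,] \ \ge\ \E_x \sum_\sigma p(x,\sigma)^2 \ \ge\ \epsilon.
\]
Dividing, the line-disagreement rate conditional on point-agreement at $x$ is at most $d/(q\epsilon)$. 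The only subtlety is that the distribution in the definition of $q_{x,\sigma}$ weights each line $\ell\ni x$ by $M_{\ell,x,\sigma} := |\cons_{x,\sigma}\cap\mathcal{C}_\ell|$, whereas the natural conditional $\mathcal{D} \mid \{T(C_1)(x) = T(C_2)(x) = \sigma\}$ weights $\ell$ by $M_{\ell,x,\sigma}^2$; the resulting mismatch only costs a constant factor in an upper bound on $\mu$ and can be absorbed either by direct calculation or by applying the spectral sampling lemma (Lemma 2.9) to the inclusion graph $G_2(\mathcal{L}_x, \mathcal{C}_x)$ from Lemma 2.10(2). Thus $\mu = O(d/(q\epsilon))$, and Markov's inequality yields $\Pr_{(x,\sigma)}[q_{x,\sigma} > \gamma] \le \mu/\gamma = O(d/(q\epsilon\gamma))$.

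\textbf{Combining.} For $\gamma = \Omega(1/d^3)$ and $\epsilon \ge \beta_1 d^4/\sqrt{q}$, the bound $O(d/(q\epsilon\gamma))$ simplifies to $O(d^4/(q\epsilon)) = O(1/\sqrt{q})$, which is at most $\epsilon/6$ for $\beta_1$ a sufficiently small constant. A union bound against the failures of Conditions 1 and 2 gives $\Pr_{(x,\sigma)}[\text{excellent}] \ge \epsilon/2 - \epsilon/6 = \epsilon/3$, as required. The main obstacle I expect is the bookkeeping around the line-weighting mismatch in the definition of $q_{x,\sigma}$: while heuristically this only loses a constant factor and the parameters have substantial slack (we need only $\gamma = \Omega(d/(q\epsilon^2))$, much weaker than $\Omega(1/d^3)$), a clean formal verification may require explicitly comparing the two line-distributions via the spectral gap $\lambda(G_2) \approx 1/q$ provided by Lemma 2.10(2).
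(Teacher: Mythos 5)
Your ingredients are the same as the paper's: the event $E$ (``agree at $x$ but disagree on $\ell$'', bounded by $d/q$ via Schwartz--Zippel over a random $x\in\ell$), the rewriting $\acxc(T)=\E_x\sum_\sigma p(x,\sigma)^2\ge\eps$ to get the popularity condition with probability $\ge\eps/2$, and a Markov argument for the line-consistency condition. Your Condition 1 step is correct and is exactly the paper's averaging argument. The problem is that you try to \emph{decouple} the two conditions and handle Condition 2 by a union bound over all pairs $(x,\sigma)$, whereas the paper deliberately interleaves them --- and the decoupling is where the argument breaks.

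Concretely, the step ``$\Pr_\D[E]\le d/q$ and $\Pr_\D[E_1]\ge\eps$, so dividing, the conditional disagreement rate is at most $d/(q\eps)$, thus $\mu=\E_{(x,\sigma)\sim\D}[q_{x,\sigma}]=O(d/(q\eps))$'' is not valid. The global conditional probability $\Pr_\D[E_2\mid E_1]$ averages the pointwise quantities $\Pr[E_2\mid E_1,x,\sigma]$ against the marginal on $(x,\sigma)$ \emph{reweighted by} $\Pr[E_1\mid x,\sigma]$, while your $\mu$ averages them against the unweighted $\D$-marginal. A pair with small $p(x,\sigma)=\delta$ has $\Pr[E_1\mid x,\sigma]\approx\delta$, so it is nearly invisible in $\Pr_\D[E_2\mid E_1]$ yet carries weight $\delta$ in $\mu$; and for such a pair $q_{x,\sigma}$ can be as large as $1$, since Schwartz--Zippel constrains a fixed triple $(\ell,C_1,C_2)$ over a random $x\in\ell$, not the family of cubes through a \emph{fixed} $x$ carrying a fixed rare value $\sigma$. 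Such a pair contributes only about $\delta^2$ to $\Pr_\D[E]$ but $\delta$ to $\mu$, so the hypothesis $\Pr_\D[E]\le d/q$ only forces $\mu\lesssim\sqrt{d/q}$, which is far above your claimed $d/(q\eps)=O(1/(d^3\sqrt q))$; the asserted bound on $\mu$ is simply false for some tables. The fix is the paper's: apply Markov to the \emph{unconditioned} quantity $\Pr[E\mid x,\sigma]$ (whose average is $\le d/q$) with threshold $\eps\gamma/2$, and perform the division by $\Pr[E_1\mid x,\sigma]\ge p(x,\sigma)\ge\eps/2$ pointwise, \emph{only for pairs already known to satisfy Condition 1}. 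Equivalently, bound $\Pr[\,H\wedge(q_{x,\sigma}>\gamma)\,]\le\frac{1}{\gamma}\E_{(x,\sigma)}[\mathbb{I}(H)\,q_{x,\sigma}]\le\frac{2}{\eps\gamma}\Pr_\D[E]\le\frac{2d}{q\eps\gamma}$ and subtract this from $\eps/2$. Two minor remarks: for the final numeric step $2d/(q\eps\gamma)\le\eps/6$ you need $\beta_1$ sufficiently \emph{large}, not small; and the $M_\ell$-versus-$M_\ell^2$ line-weighting mismatch you flag is real, but the paper's own proof silently elides the same issue, so I do not count it against you.
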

\begin{proof}
	We consider $(x,\ell,C_1,C_2)$ chosen according to $\D$, and we note that the marginal distribution over all elements is uniform. We also write $\sigma = T(C_1)(x)$.
	We define the following events on $(x,\ell,C_1,C_2)$:
	\begin{enumerate}
		\item $E$ : ``$\ell$ is confusing for $x$'': $T(C_1)(x) = T(C_2)(x)$, $T(C_1)_{|\ell}\neq T(C_2)_{|\ell}$.
		\item $H$ : ``$x,C_1$ is heavy'': $\Pr_{C\sim\mathcal{C}_x}[T(C)(x) = T(C_1)(x)] \geq \frac{\epsilon}{2}$
	\end{enumerate}
	Since $T(C_1)_{|\ell},T(C_2)_{|\ell}$ are two degree $d$ polynomials, and $x$ is a random point in $\ell$,
	\[ \Pr_{(x,\ell,C_1,C_2)}[E]\leq \frac{d}{q} .\]
	Using the fact that $\acxc(T) \ge \eps$, and averaging, we get
	\begin{equation}\label{eq:1} \Pr_{(x,\ell,C_1,C_2)}[H]\geq \frac{\epsilon}{2} .\end{equation}

	Instead of picking $C_1$ as a uniform cube containing $x$, we can choose it by the following process, pick $\sigma$ proportional to its weight in $\mathcal{C}_x$, then pick $C_1\sim\cons_{x,\sigma}$. This process describes the same distribution.
	
	Note that after deciding $x,\sigma$, the event $H$ is already determined, so \eqref{eq:1} becomes $\Pr_{x,\sigma}[H] \ge \eps/2$.
	Also, notice that conditioned on $x,\sigma$, the distribution $\D$ is choosing $C_1$ uniformly from $\cons_{x,\sigma}$ and then $\ell\subset C_1$ a random line containing $x$ and then $C_2$ a random cube containing $\ell$ (and we do not require that $T(C_2)(x)=\sigma$).
	The event $H$ is already fixed by $x,\sigma$, but the event $E$ will occur only if $C_2 \in \cons_{x,\sigma}$ and also $T(C_1)|_\ell \neq T(C_2)|_\ell$.
	
	We want to bound the probability of $x,\sigma$ such that $H = 1$, but $\E_{C_1,\ell,C_2}[E|x,\sigma] \leq \gamma\cdot\frac{\epsilon}{2}$.
	We know that
	\[ \E_{x,\sigma}\left[ \Pr[H \wedge E  \;|\; x,\sigma] \right] = \Pr [H\wedge E] \leq \Pr [ E]\leq \frac{d}{q} .\]
	Therefore, by averaging, the probability over $x,\sigma$ that we have $\Pr[H\wedge E | x,\sigma] > \eps\gamma/2$ is at most $\frac{d/q}{\eps\gamma/2}$.
	So for at least $\eps/2 - \frac{d/q}{\eps\gamma/2}\ge \eps/3$ of the pairs $x,\sigma$, we have that both $H$ occurs, and that $\E_{C_1,\ell,C_2}[E|x,\sigma] \leq \eps\gamma/2$.
	
	We end by showing that such $x,\sigma$ are excellent. The first requirement follows by the fact that $H$ occurs, for the second we need to show that for $C_1\in\cons_{x,\sigma}$, a uniform $\ell\in\mathcal{C}_1$ and a uniform $C_2\in\cons_{x,\sigma}\cap C_\ell$ the probability of $T(C_1)_{|_\ell}\neq T(C_2)_{|_\ell}$ is lower than $\gamma$.
	
	We notice that after fixing $(x,\sigma)$, the distribution $\D$ chooses $C_1\in\cons_{x,\sigma}$, a uniform $\ell\in\mathcal{C}_1$, but then a uniform $C_2\in\mathcal{C}_\ell$.
	
	The event $E$ can be written as $E = E_1\wedge E_2$ where $E_1$ is the event ``$T(C_1)(x) = T(C_2)(x)$'' and $E_2$ is the event ``$T(C_1)_{|\ell}\neq T(C_2)_{|\ell}$''. In this notation
	\begin{align*}
	\E_{C_1,\ell,C_2}[E|x,\sigma] =& \E_{C_1,\ell,C_2}[E_1\wedge E_2|x,\sigma] \\=&
	\E_{C_1,\ell,C_2}[E_1|x,\sigma] \E_{C_1,\ell,C_2}[E_2|E_1,x,\sigma] \\ \geq &
	\frac{\epsilon}{2}\cdot\E_{C_1,\ell,C_2}[E_2|E_1,x,\sigma]. \tag*{(since $H$ occurs)}
	\end{align*}
We notice that if $E_1$ occurs, then $C_2\in\mathcal{C}_{x,\sigma}$, therefore
\[\E_{C_1,\ell,C_2}[T(C_1)_{|\ell}\neq T(C_2)_{|\ell}|C_2\in\mathcal{C}_{x,\sigma},x,\sigma] \leq \frac{2}{\epsilon}\cdot\E_{C_1,\ell,C_2}[E|x,\sigma] \leq \frac{2}{\epsilon}\frac{\epsilon}{2}\gamma\leq \gamma,\]
which means that $(x,\sigma)$ is $(\frac{\epsilon}{2}, \gamma)$ - excellent.
\end{proof}

For each $(x,\sigma)$ we define $f_{x,\sigma}$ by plurality over all cubes $C\in\cons_{x,\sigma}$.
\begin{definition}
For a pair $(x,\sigma)$ define a function $f_{x,\sigma} : \F^m \rightarrow \F$ as follows:
$$f_{x,\sigma}(y) = \mathop{\mathtt{argmax}}_{C\sim \mathcal{C}_y\cap\cons_{x,\sigma}}\left\{ T(C)(y)\right\} .$$
If $\mathcal{C}_y\cap\cons_{x,\sigma} = \emptyset$, define $f_{x,\sigma}(y)$ arbitrarily.
\end{definition}

\begin{claim}\label{claim:point equality}
For an $(\frac{\epsilon}{2}, \gamma)$ excellent pair $(x,\sigma)$,
$$\Pr_{C\sim \cons_{x,\sigma},y\sim C}[f_{x,\sigma}(y) = T(C)(y)] \geq 1-\gamma .$$
\end{claim}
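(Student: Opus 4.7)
The plan is to introduce an auxiliary second cube $C'$ and deduce the claim by a collision-probability argument. Consider the joint experiment: draw $C \sim \cons_{x,\sigma}$ uniformly, then $y \sim C$ uniformly, and finally $C' \in \mathcal{C}_y \cap \cons_{x,\sigma}$ uniformly. The marginal on $(C,y)$ is exactly the one in the claim, and one checks that conditional on $y$, both $C$ and $C'$ are i.i.d.\ uniform on $\mathcal{C}_y \cap \cons_{x,\sigma}$.

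\textbf{Step 1 (reduction to excellence).} For $y \neq x$, let $\ell$ be the unique affine line through $x$ and $y$. Every cube in $\mathcal{C}_y \cap \cons_{x,\sigma}$ contains both $x$ and $y$ and therefore contains $\ell$, yielding $\mathcal{C}_y \cap \cons_{x,\sigma} = \mathcal{C}_\ell \cap \cons_{x,\sigma}$. Picking $y$ uniformly in $C \setminus \{x\}$ induces the uniform distribution on the $q^2+q+1$ lines through $x$ contained in $C$ (each such line contributes $q-1$ points other than $x$). Hence, conditioned on $y \neq x$, the triple $(C, \ell, C')$ has exactly the distribution from the second excellence axiom, and we get $\Pr[T(C)|_\ell \neq T(C')|_\ell \mid y \neq x] \leq \gamma$. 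Since disagreement at $y \in \ell$ forces disagreement on $\ell$, and the case $y = x$ contributes $0$ (both values equal $\sigma$), we conclude $\Pr_{C,y,C'}[T(C)(y) \neq T(C')(y)] \leq \gamma$.

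\textbf{Step 2 (collision probability versus plurality).} Fix $y$ and set $P_v(y) = \Pr_{C \in \mathcal{C}_y \cap \cons_{x,\sigma}}[T(C)(y) = v]$ and $M(y) = \max_v P_v(y) = \Pr_C[T(C)(y) = f_{x,\sigma}(y) \mid y]$. Since $C$ and $C'$ are i.i.d.\ given $y$, the collision probability satisfies
\[ \Pr[T(C)(y) = T(C')(y) \mid y] \;=\; \sum_v P_v(y)^2 \;\leq\; M(y) \sum_v P_v(y) \;=\; M(y). \]
Taking expectation over $y$ and combining with Step~1 gives
\[ 1 - \gamma \;\leq\; \Pr[T(C)(y) = T(C')(y)] \;\leq\; \E_y[M(y)] \;=\; \Pr_{C,y}[T(C)(y) = f_{x,\sigma}(y)], \]
which is exactly the claim.

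The conceptual obstacle is Step~1: recognizing that sampling $y \in C$ can be re-parametrized, via $\ell = \mathrm{line}(x,y)$, as sampling a uniform line through $x$ in $C$ together with a uniform non-$x$ point on it, so that the second excellence axiom applies without any loss. Once this coupling is in place, Step~2 is a routine application of the collision-versus-maximum inequality $\sum_v P_v^2 \leq \max_v P_v$.
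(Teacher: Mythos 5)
Your proof is correct and follows essentially the same route as the paper's: introduce an auxiliary cube $C'$ drawn from $\cons_{x,\sigma}\cap\mathcal{C}_y$, reduce disagreement at $y$ to disagreement on the line $\ell(x,y)$ so that the second excellence condition applies, and then convert the collision probability into a bound on the plurality via $\sum_v P_v^2\le\max_v P_v$. Your treatment of the re-parametrization of $y\sim C$ as a uniform line through $x$ plus a uniform non-$x$ point, and of the $y=x$ case, is slightly more explicit than the paper's but does not change the argument.
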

\begin{proof}
Fix an $(\frac{\epsilon}{2}, \gamma)$ excellent pair $(x,\sigma)$, and denote $f=f_{x,\sigma}$.
If we pick a uniform $C_1\in\cons_{x,\sigma}$, then $y\in C_1$ such that $y\neq x$, and a uniform $C_2\in\cons_{x,\sigma}\cap \mathcal{C}_y$, then
\begin{align*}
\Pr_{C_1,y,C_2}[T(C_1)(y) \neq T(C_2)(y)] \leq
\Pr_{C_1,y,C_2}[T(C_1)_{|\ell(x,y)} \neq T(C_2)_{|\ell(x,y)}]\leq \gamma,
\end{align*}
since $(x,\sigma)$ is $(\frac{\epsilon}{2}, \gamma)$ excellent.
	
For each $y$, denote $\gamma_y =  \Pr_{C_1,C_2\sim\cons_{x,\sigma}\cap\mathcal{C}_y}[T(C_1)(y) \neq T(C_2)(y)]$. From the above we get that $\mathbb{E}_y[\gamma_y]\leq \gamma$, where $y$ is distributed according to it's weight in $\cons_{x,\sigma}$. For each $y$,
\begin{align*}
1-\gamma_y =& \sum_{\theta\in\F}\Pr_{C\sim\cons_{x,\sigma}\cap\mathcal{C}_y}[T(C)(y) = \theta]^2  \\\leq& \tag*{($f(y)$ is the most frequent value)}
\Pr_{C\sim\cons_{x,\sigma}\cap\mathcal{C}_y}[T(C)(y) = f(y)]\sum_{\theta\in\F}\Pr_{C\sim\cons_{x,\sigma}\cap\mathcal{C}_y}[T(C)(y) = \theta] \\\leq&
\Pr_{C\sim\cons_{x,\sigma}\cap\mathcal{C}_y}[T(C)(y) = f(y)].
\end{align*}
	
Since it is true for each $y$, it is also true when taking expectation over $y$, for any distribution:
\[\Pr_{C\sim\cons_{x,\sigma},y\sim C}[f(y) = T(C)(y)] = \Ex{y}{\E_{C\sim\cons_{x,\sigma}\cap\mathcal{C}_y}[\mathbb{I}(T(C)(y) = f(y))]} \geq \Ex{y}{1-\gamma_y}\geq1-\gamma .\]
In expectation, each $y$ is chosen with probability proportional to it's weight in $\cons_{x,\sigma}$, as before.
\end{proof}

\paragraph{Proof of \lref[Lemma]{lemma:local structure}:}
From \lref[Claim]{claim:good but not excellent} we know that the probability of $(x,\sigma)$ to be $(\frac{\epsilon}{2},\gamma)$-excellent is  at least $\frac{\epsilon}{3}$. Since $x$ is chosen uniformly, it means that for at least $\frac{\epsilon}{3}$ of the inputs $x\in\F^m$ there exists some $\sigma\in\F$ such that $(x,\sigma)$ is excellent. If there is more than one such $\sigma$ choose one arbitrarily.

Fixing an excellent $x$, let $\sigma$ be the value such that $(x,\sigma)$ is excellent. For this $\sigma$, $\Pr_{C\in \mathcal{C}_x}[C\in\cons_{x,\sigma}]\geq \frac{\epsilon}{2}$. From \lref[Claim]{claim:point equality}, $\Pr_{C\sim \cons_{x,\sigma},y\sim C}[f_{x,\sigma}(y) = T(C)(y)] \geq 1-\gamma $. By averaging, at least half of the cubes $C\in\cons_{x,\sigma}$ satisfy $\Pr_{y\sim C}[f_{x,\sigma}(y) = T(C)(y)] \geq 1-2\gamma $. For all these cubes $T(C)\approxparam{2\gamma}f_{x,\sigma}$, and they are at least $\frac{\epsilon}{4}$ fraction of the cubes in $\mathcal{C}_x$.
\qed

\subsection{Global Structure}\label{sec:global}
In this section, we prove the following lemma:
\begin{lemma}[Global Structure] \label{lemma:global structure}
Let $T$ be a cubes table that passes \lref[Test]{test:CxC} with probability at least $\epsilon =\Omega( \frac{d^4}{\sqrt{q}})$, then for every $\gamma = \Omega(\frac{1}{d^3})$, there exists an $(\frac{\epsilon}{2},\gamma)$-excellent $x$ such that $f = f_x:\F^m\rightarrow \F$ satisfies
\[ \Pr_{C}[T(C) \approxparam{32\gamma} f_{|_C}] \geq \frac{\epsilon}{16}.\]
\end{lemma}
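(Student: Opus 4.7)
The idea is to combine Local Structure with the spectral expansion of the cube-incidence graphs to extract from the collection $\{f_y\}_{y \in E}$ of local functions a single excellent $x^*$ whose $f_{x^*}$ agrees with $T$ globally on many cubes. Morally, one thinks of each excellent $y$ as ``voting'' for the global function $f_y$, shows that these votes concentrate on a common function $g$, and then sets $f = g = f_{x^*}$ for an $x^*$ in that cluster.

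Begin with the setup inherited from Local Structure: $E \subseteq \F^m$ is the set of excellent points ($|E|/q^m \geq \epsilon/3$); for $y \in E$, $B_y = \{C \in \mathcal{C}_y : T(C) \approxparam{2\gamma} f_y|_C\}$ has density $\geq \epsilon/4$ in $\mathcal{C}_y$; and $\Pi = \{(y, C) : y \in E \cap C,\ C \in B_y\}$ has density $\geq \epsilon^2/12$ among incident pairs. The plan has three steps. \emph{Step 1 (pairs of witnesses on a common cube):} apply the two-edge sampling lemma \lref[Lemma]{lemma:two edges sampling} to the point--cube inclusion graph $G_4$ (with $\lambda \approx 1/q^{3/2}$ by \lref[Lemma]{claim:ev_FmC}) with $B' = E$, and combine with the one-edge sampling lemma \lref[Lemma]{lemma:edge sampling} to transfer the $\Omega(\epsilon^2)$ density of $\Pi$ between samplings. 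This shows that for random $y_1, y_2 \in E$ and a random cube $C$ through both, $C \in B_{y_1} \cap B_{y_2}$ holds with probability $\Omega(\epsilon^2)$; all spectral error terms are $o(\epsilon^2)$ because $\epsilon \geq \beta_1 d^4/\sqrt q$. The triangle inequality then yields $f_{y_1}|_C \approxparam{4\gamma} f_{y_2}|_C$.

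\emph{Step 2 ($C$-agreement to $\F^m$-agreement):} for fixed distinct $y_1, y_2 \in E$, a uniform cube through both is uniform in $\mathcal{C}_{\ell(y_1, y_2)}$, and a uniform $z \in C$ is $O(1/\sqrt q)$-close to uniform on $\F^m$ by \lref[Lemma]{claim:ev_FmC-ell} together with \lref[Lemma]{lemma:edge sampling}. Averaging the per-cube agreement over such $C$ and applying Markov isolates a set of pairs $(y_1, y_2) \in E^2$ for which $f_{y_1} \approxparam{\gamma_0} f_{y_2}$ on $\F^m$, with $\gamma_0 = O(\gamma)$ (the spectral slack $1/\sqrt q$ is absorbed into $\gamma$ since $\gamma = \Omega(1/d^3)$ and $\epsilon \geq \beta_1 d^4/\sqrt q$). \emph{Step 3 (pigeonhole and conclusion):} a maximum-degree / pigeonhole argument on the resulting agreement graph picks $x^* \in E$ whose cluster $E^* = \{y \in E : f_y \approxparam{\gamma_0} f_{x^*}\}$ is dense in $\F^m$. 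Setting $f = f_{x^*}$, whenever a random cube $C$ contains a witness $y \in E^* \cap C$ with $C \in B_y$, one has $T(C) \approxparam{2\gamma} f_y|_C$ and $f_y \approxparam{\gamma_0} f$ on $\F^m$, so $T(C) \approxparam{32\gamma} f|_C$. A counting / second-moment estimate, using the expansion of the point--cube graph \lref[Lemma]{claim:ev_FmC} and the density of $E^*$, then provides the $\geq \epsilon/16$ fraction of such cubes.

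The main obstacle is Step 2: lifting common-cube agreement to global $\F^m$-agreement requires the co-witness density $\rho(y_1, y_2) = \Pr_{C \ni y_1, y_2}[C \in B_{y_1} \cap B_{y_2}]$ to be large enough on a noticeable sub-population of $E \times E$, which in turn demands a careful averaging and separation of ``typical'' pairs from those where $\rho$ is tiny. Coupled with the need to absorb all spectral slack into $\gamma$, to keep $\gamma_0 = O(\gamma)$, and to arrange the final cube-density to hit $\epsilon/16$, the bookkeeping of the constants in Steps 2 and 3 is the most delicate part of the argument.
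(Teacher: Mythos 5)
Your outline follows the same route as the paper: Step 1 is the paper's ``many interactions'' claim (two-edge sampling on the point--cube graph), Step 2 is the lifting of common-cube agreement to global agreement $f_{y_1}\approxparam{O(\gamma)}f_{y_2}$ via edge sampling on $G(\F^m\setminus\ell,\mathcal{C}_\ell)$ restricted to the co-witness cubes, and Step 3's pigeonhole plus a final expansion-based count matches how the paper upgrades a $\poly(\eps)$-dense cluster of points to an $\eps/16$ fraction of cubes. The quantitative bookkeeping you flag as delicate does work out under $\eps\ge\beta_1 d^4/\sqrt q$ and $\gamma=\Omega(1/d^3)$.

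However, there is one genuine gap in Step 3. You infer: ``$T(C)\approxparam{2\gamma} f_y|_C$ and $f_y\approxparam{\gamma_0} f$ on $\F^m$, so $T(C)\approxparam{32\gamma} f|_C$.'' This implication is false for an individual cube. Global closeness of $f_y$ and $f$ bounds the disagreement set by $\gamma_0 q^m$ points, which for $m\ge 4$ can easily swallow an entire cube ($q^3$ points); nothing prevents the disagreement set from concentrating on exactly the cubes in $B_y$, which are themselves a sparse ($\Theta(\eps)$-dense) subset of $\mathcal{C}_y$. So you cannot pass from ``$C$ has a witness $y$ with $C\in B_y$'' to ``$T(C)\approxparam{32\gamma} f|_C$'' cube by cube. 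What is needed --- and what your final ``counting / second-moment estimate'' does not supply, since that step only counts cubes containing a witness --- is a separate sampling argument per excellent $y$: apply the edge-sampling lemma to $G(\F^m\setminus\{y\},\mathcal{C}_y)$ with $B'=B_y$ and the edge sets $\{(z,C):T(C)(z)=f_y(z)\}$ and $\{(z,C):T(C)(z)=f(z)\}$, to show that for at least half of the cubes $C\in B_y$ one actually has $T(C)\approxparam{32\gamma} f|_C$ (this is where the constant $32$ comes from, via a Markov step). Only after shrinking each $B_y$ to this good half can the final union-over-$y$ count be run. Without this ingredient the last step of your argument does not close.
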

Let $X^\star\subseteq\F^m$ the set of $(\frac{\epsilon}{2},\gamma)$ excellent points.

The main idea in the proof of the global structure, is showing that there exist many pairs of excellent points  $x,y\in X^\star$, such that for many cubes $C$, the $T(C)$ is similar both to $f_x$ and to $f_y$  (\lref[Claim]{claim:many interactions}). If this is the case, then the functions $f_x,f_y$ must be very similar (\lref[Claim]{claim:approx equality}). Finally, the lemma is proven by averaging and finding a single $x$ such that $f_x$ agrees simultaneously with many of the $f_y$'s and their supporting cubes.

\begin{definition}[Supporting cubes]
For any excellent $x\in X^\star$, we denote by $F_x$ the set of cubes ``supporting'' $f_x$,
\[ F_x = \left\{ C\in \mathcal{C}_x \left| T(C) \approxparam{2\gamma} f_{x|_C} \right. \right\} .\]
\end{definition}
\begin{claim}
\label{claim:many interactions}
Let $\D$ be the following process: choose $x,y\in X^\star$ independently and uniformly at random, let $C$ be a random cube containing both $x$ and $y$. Then
\[\Pr_{x,y,C\sim D}[C\in F_x\cap F_y]\geq \frac{\epsilon^2}{26} .\]
\end{claim}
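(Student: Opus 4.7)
The plan is to apply \lref[Lemma]{lemma:two edges sampling} to the point-cube incidence graph $G_4(\F^m,\mathcal{C})$, which by \lref[Lemma]{claim:ev_FmC} satisfies $\lambda(G_4) \approx 1/q^{3/2}$. Identifying $B = \F^m$, $A = \mathcal{C}$ and $B' = X^\star$, the distribution $\D$ of the claim is precisely $D_3$, while the companion distribution $D_4$ samples a uniform cube $C \in \mathcal{C}$ and then two independent uniform points $x,y \in X^\star \cap C$ (failing when that intersection is empty). Any two distinct points in $\F^m$ lie in a common positive number of cubes (namely the cubes through the unique line they span), so the hypothesis of the lemma holds. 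With $\mu = |X^\star|/q^m \geq \epsilon/3$ by \lref[Claim]{claim:good but not excellent}, the sampling error $\frac{2\lambda(G_4)}{\mu} + \frac{1}{\mu^2 q^3} + \frac{1}{\mu^2 q^m}$ is $o(\epsilon^2)$ under the hypothesis $\epsilon \geq \Omega(d^4/\sqrt q)$.

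It thus suffices to lower bound $\Pr_{D_4}[C \in F_x \cap F_y]$. Conditional on $C$ the samples $x,y$ are i.i.d.\ uniform in $X^\star \cap C$, so setting
\[
p_C \;=\; \frac{|\{x \in X^\star \cap C : C \in F_x\}|}{|X^\star \cap C|}
\]
(and $p_C := 0$ when the intersection is empty), one has $\Pr_{D_4}[C \in F_x \cap F_y] = \E_C[p_C^2] \geq (\E_C[p_C])^2$ by Jensen's inequality. Now $\E_C[p_C]$ is exactly the probability of the event ``$x \in X^\star, C \in F_x$'' under the single-edge distribution $D_2$ of \lref[Lemma]{lemma:edge sampling} on the same graph. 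Combining that lemma with \lref[Lemma]{lemma:local structure}, which supplies $\Pr_{D_1}[C \in F_x] = \E_{x \in X^\star}[|F_x|/|\mathcal{C}_x|] \geq \epsilon/4$, gives
\[
\E_C[p_C] \;\geq\; \frac{\epsilon}{4} - \frac{\lambda(G_4)}{\sqrt{\mu}} \;=\; \frac{\epsilon}{4} - o(\epsilon).
\]

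Squaring and transferring back to $\D$ via the two-edge sampling bound then yields
\[
\Pr_{\D}[C \in F_x \cap F_y] \;\geq\; \frac{\epsilon^2}{16} - o(\epsilon^2) \;\geq\; \frac{\epsilon^2}{26},
\]
where the final inequality absorbs all accumulated additive errors into the slack $1/16 - 1/26 > 1/42$, choosing the constant in $\epsilon \geq \beta_1 d^4/\sqrt q$ large enough. The main technical hurdle is exactly this bookkeeping: checking that both the single-edge correction (which enters after squaring as $\sim \epsilon\cdot\lambda(G_4)/\sqrt\mu$) and the two-edge sampling error really fit inside the $\epsilon^2/42$ slack. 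This is precisely where the quantitative dependence $\epsilon \gtrsim d^4/\sqrt q$ is used, and all other steps (Jensen, the two sampling lemmas, and the local structure lemma) are essentially plug-and-play.
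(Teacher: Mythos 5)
Your proposal is correct and follows essentially the same route as the paper's proof: the same two sampling lemmas applied to the graph $G(\F^m,\mathcal{C})$ with $B'=X^\star$, the same quantity $p_C$, and the same Jensen step, differing only in minor bookkeeping (the paper passes through the intermediate bound $\E_C[p_C]\ge \epsilon/5$ rather than carrying $o(\epsilon)$ error terms). The error estimates you cite do fit within the stated slack, so the argument goes through as written.
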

\begin{proof}
Since each $x\in X^\star$ is excellent, we know from the local structure lemma, \lref[Lemma]{lemma:local structure}, that $\Pr_{C\sim \mathcal{C}_x}[C\in F_x]\geq \frac{\epsilon}{4}$. This is of course also true when taking a uniform $x\in X^\star$, thus, $\Pr_{x \sim X^\star, C\sim\mathcal{C}_x}[C\in F_x]\geq \frac{\epsilon}{4}$.

From \lref[Lemma]{claim:ev_FmC}
, the inclusion graph  $G=G(\F^m,\mathcal{C})$ has $\lambda(G) = \lambda \leq (1+o(1))\frac{1}{q^{3/2}}$. Denote the measure of $X^\star$ by $\mu$, from \lref[Lemma]{lemma:local structure}, $\mu\geq\frac{\epsilon}{3}$. Hence, by the application of
 \lref[Lemma]{lemma:edge sampling} on the graph $G$ with $A=\mathcal{C}, B=\F^m$ and $B' =X^\star $, we get
\begin{align} \label{eq:ex point cube}
\abs{\Pr_{x \sim X^\star, C\sim\mathcal{C}_x}[C\in F_x] - \Pr_{C\sim \mathcal{C},x\sim C\cap X^\star}[C\in F_x]} \leq \frac{\lambda}{\sqrt{\mu}}\leq \frac{2\lambda}{\sqrt{\epsilon}}.
\end{align}

For each $C\in\mathcal{C}$, let $p_C =\Pr_{x\sim C\cap X^\star}[C\in F_x]$, this measures for every cube $C$ how many points  $x\in C$ are such that $f_{x|_C}\approxparam{2\gamma} T(C)$. In this notation, \eqref{eq:ex point cube} implies $\E_C[p_C]\geq \frac{\epsilon}{4} -\frac{2\lambda}{\sqrt{\epsilon}} \geq  \frac{\epsilon}{5}$.
We can use this to bound the probability of the event $C\in F_x\cap F_y$ by first choosing $C$, then two independent points in $C\cap X^\star$,
\[\Pr_{\substack{C\sim\mathcal{C}\\x,y\sim C\cap X^\star}}[C\in F_x\cap F_y] =\E_C[p_C^2]\geq \left(\E_C[p_C] \right)^2\geq \frac{\epsilon^2}{25}  .\]

We observe that this distribution is very similar to the required distribution $D$. The only difference is that here we first pick $C\in \mathcal{C}$ and then two excellent points in $C$, whereas in $D$ we first pick two points in $X^\star$ and then a common neighbor $C$. The graph $G$ satisfies that every two distinct points $x,y\in\F^m$ have exactly the same number of common neighbors. Therefore, we can use \lref[Lemma]{lemma:two edges sampling} on the graph $G$ with $A=\mathcal{C}, B=\F^m$ and $B' =X^\star $ to get
\[ \abs{\Pr_{\substack{C\sim\mathcal{C}\\x,y\sim C\cap X^\star}}[C\in F_x\cap F_y] - \Pr_{x,y,C\sim D}[C\in F_x\cap F_y]} \leq \frac{2\lambda}{\mu} +\frac{1}{\mu^2d_A} + \frac{1}{\mu^2\abs{B}} \leq \frac{6\lambda}{\epsilon} + \frac{9}{q^m\epsilon^2} + \frac{9}{q^3\epsilon^2} .\]
Recall that $\lambda \leq (1+o(1))\frac{1}{q^{3/2}}$ and since $\epsilon= \Omega( \frac{d^4}{\sqrt{q}})$,
we conclude that $ \Pr_{x,y,C\sim D}[C\in F_x\cap F_y]\geq \frac{\epsilon^2}{25} - \frac{6\lambda}{\epsilon} -  \frac{9}{q^m\epsilon^2} - \frac{9}{q^3\epsilon^2} \geq \frac{\epsilon^2}{26}$.
\end{proof}

\begin{claim}\label{claim:approx equality}
Let $x\neq y\in X^\star$, and let $\ell$ be the line containing $x$ and $y$, if $\Pr_{C\sim \mathcal{C}_{\ell}}[C\in F_x\cap F_y] \geq \frac{\epsilon^2}{100}$ then $f_x\approxparam{5\gamma}f_y$.
\end{claim}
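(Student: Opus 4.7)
The plan is to combine a simple triangle inequality with the expansion of the inclusion graph $G_3(\F^m\setminus\ell,\mathcal{C}_\ell)$ to convert the hypothesis (many cubes along $\ell$ simultaneously supporting $f_x$ and $f_y$) into pointwise agreement of $f_x$ and $f_y$ on nearly all of $\F^m$.

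First I would observe that for every cube $C\in F_x\cap F_y$, the defining properties $T(C)\approxparam{2\gamma}f_x|_C$ and $T(C)\approxparam{2\gamma}f_y|_C$ combine by a union bound into $f_x|_C\approxparam{4\gamma}f_y|_C$. Writing $B=\{z\in\F^m:f_x(z)\neq f_y(z)\}$, this says $|C\cap B|\leq 4\gamma q^3$ for every $C$ in the set $S:=F_x\cap F_y$; by hypothesis $\mu:=|S|/|\mathcal{C}_\ell|\geq \epsilon^2/100$ (note $S\subseteq \mathcal{C}_\ell$ automatically, since any cube containing both $x$ and $y$ must contain $\ell$).

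Next I would invoke the edge-sampling \lref[Lemma]{lemma:edge sampling} on the bipartite graph $G_3=G(\F^m\setminus\ell,\mathcal{C}_\ell)$ with $B'=S$ and the edge-event ``$z\in B$''. Under the cube-first distribution $D_1$, the previous paragraph bounds $\Pr_{D_1}[z\in B]\leq 4\gamma\cdot q^3/(q^3-q)=4\gamma(1+o(1))$, so the lemma yields
\[\Pr_z\bigl[z\in B\text{ and }N(z)\cap S\neq\emptyset\bigr]\;\leq\; 4\gamma(1+o(1))+\lambda(G_3)/\sqrt{\mu}.\]
To upgrade this conditional statement to a bound on $\Pr_z[z\in B]$, I would separately control the ``missed'' event $N(z)\cap S=\emptyset$ by a standard second-moment argument on the expander: writing $M$ for the normalized adjacency matrix of $G_3$ and $v=\mathbf{1}_S-\mu\mathbf{1}$, we have $\|Mv\|^2\leq \lambda(G_3)^2\mu$, while $N(z)\cap S=\emptyset$ forces $|(Mv)_z|=\mu$, so Chebyshev gives $\Pr_z[N(z)\cap S=\emptyset]\leq \lambda(G_3)^2/\mu$.

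Plugging in $\lambda(G_3)\approx 1/\sqrt{q}$ from \lref[Lemma]{claim:ev_FmC-ell}, $\mu\geq \epsilon^2/100$, $\epsilon\geq \Omega(d^4/\sqrt{q})$, and $\gamma=\Omega(1/d^3)$, the two error terms become $O(1/(\epsilon\sqrt{q}))=O(1/d^4)$ and $O(1/(q\epsilon^2))=O(1/d^8)$, both comfortably $o(\gamma)$, plus a negligible $O(1/q^{m-1})$ term accounting for $z\in\ell$. Summing gives $\Pr_z[f_x(z)\neq f_y(z)]\leq 5\gamma$, i.e.\ $f_x\approxparam{5\gamma}f_y$. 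The only delicate point is calibrating the constants in the $\Omega$-notations so that both additive errors fall comfortably below $\gamma$; this is exactly what the parameter choices $\epsilon\geq\Omega(d^4/\sqrt{q})$ and $\gamma=\Omega(1/d^3)$ are tuned for, and it is the place where the target $1/\sqrt{q}$ soundness threshold enters the argument.
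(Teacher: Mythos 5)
Your proposal is correct and follows essentially the same route as the paper: the same inclusion graph $G(\F^m\setminus\ell,\mathcal{C}_\ell)$, the same union bound turning $C\in F_x\cap F_y$ into $4\gamma$-agreement of $f_x$ and $f_y$ on $C$, and the same application of \lref[Lemma]{lemma:edge sampling} with the same parameter calibration. The only structural difference is that the paper feeds the lemma the \emph{positive} edge event $T(C)(z)=f_x(z)=f_y(z)$, so that points with no neighbor in $F_x\cap F_y$ are automatically charged against the $D_2$ probability and no separate coverage bound is needed, whereas your complementary event ``$z\in B$'' requires the extra (correct) Chebyshev estimate $\Pr_z[N(z)\cap S=\emptyset]\leq\lambda^2/\mu$.
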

\begin{proof}
Consider the graph $G = G(\F^m\setminus\ell,\mathcal{C}_\ell)$. This is a bi-regular bipartite graph, and by \lref[Lemma]{claim:ev_FmC-ell}
it has $\lambda = \lambda(G)\leq (1+o(1))\frac{1}{\sqrt{q}}$.
Let $F = F_x\cap F_y$. By assumption, $F$ has measure at least $\frac{\epsilon^2}{100}$ inside $\mathcal{C}_\ell$.

We denote by $E'\subset E$ the edges of $G$ that indicate agreement with both $f_x$ and $f_y$,
\[ E' = \{ (z,C) \mid T(C)(z) = f_x(z) = f_y(z) \} .\]
Every cube $C\in F$ has $1-2\gamma$ of the points $z\in C$ satisfying $T(C)(z) = f_x(z)$ and $1-2\gamma$ of the points satisfying $T(C)(z)=f_y(z)$. By a union bound we get $\Pr_{C\in F,z\in N(C)}[(z,C)\in E']\geq 1-4\gamma$.
By \lref[Lemma]{lemma:edge sampling} on $G$ when $A= \F^m\setminus \ell, B=\mathcal{C}_\ell, B' = F$,
\[ \abs{\Pr_{C\sim F,z\sim N(C)}[(z,C)\in E'] - \Pr_{z,C\sim N(z)\cap F}[(z,C)\in E']  } \leq \frac{20\lambda}{\epsilon},\]
which means that $\Pr_{z\sim \F^m,C\sim N(z)\cap F}[(z,C)\in E']\geq 1-4\gamma - \frac{20\lambda}{\epsilon}\ge 1-5\gamma$. By the definition of $E'$, for each point $z\in\F^m$ that has an adjacent edge in $E'$, $f_x(z) = f_y(z)$.
This means that
\[ \Pr_z[f_x(z) = f_y(z)]\geq \Pr_z[\exists C \st (z,C)\in E'] \geq \Pr_{z,C\sim N(z)\cap F}[(z,C)\in E'] \geq 1-5\gamma.  \]
\end{proof}
The above claim showed that if two functions have a large set of  cubes on which they almost agree then these functions are similar. In order to prove the global structure, we also need to show that in this case, most of $C\in F_y$ will also be close to $f_x$.
\begin{claim}\label{claim:common agreement}
Let $x,y\in X^\star$ such that $f_x\approxparam{5\gamma}f_y$, then
\[ \Pr_{C\sim F_y}[T(C)\approxparam{32\gamma} f_{x|_C}]\geq \frac{1}{2} .\]
\end{claim}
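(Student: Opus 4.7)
The plan is to go from agreement of $f_x$ and $f_y$ as global functions on $\F^m$ to agreement of their restrictions on most cubes in $F_y$, and then use the defining property of $F_y$ (namely $T(C)\approxparam{2\gamma} f_{y|_C}$) together with a union bound to get agreement between $T(C)$ and $f_{x|_C}$.

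Concretely, for each cube $C$ let $p_C = \Pr_{z\sim C}[f_x(z)\neq f_y(z)]$ be the fraction of points in $C$ on which $f_x$ and $f_y$ disagree. If $p_C\le 30\gamma$, then combining $T(C)\approxparam{2\gamma} f_{y|_C}$ with $f_{y|_C}\approxparam{30\gamma} f_{x|_C}$ via a union bound over the disagreement sets yields $T(C)\approxparam{32\gamma} f_{x|_C}$. So it suffices to show that at least half of the cubes $C\in F_y$ satisfy $p_C\le 30\gamma$.

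To bound $p_C$ on average over $C\sim F_y$, I would apply \lref[Lemma]{lemma:edge sampling} to the point/cube inclusion graph $G_5=G(\F^m\setminus\{y\},\mathcal{C}_y)$, with $B'=F_y\subseteq\mathcal{C}_y$ and with $E'$ the edge set $\{(z,C):f_x(z)\neq f_y(z)\}$. The measure of $B'$ in $\mathcal{C}_y$ is $\mu\ge\epsilon/4$ by \lref[Lemma]{lemma:local structure}, and by \lref[Lemma]{claim:ev_FmCx} we have $\lambda(G_5)\approx 1/q$. Under the distribution $D_2$ we first pick $z\in\F^m\setminus\{y\}$ uniformly and then a cube in $N(z)\cap F_y$, so $\Pr_{D_2}[(z,C)\in E']\le \Pr_{z}[f_x(z)\neq f_y(z)]\le 5\gamma$ by the hypothesis $f_x\approxparam{5\gamma}f_y$. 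The edge-sampling lemma then gives
\[\E_{C\sim F_y}[p_C]\;=\;\Pr_{D_1}[(z,C)\in E']\;\le\; 5\gamma+\frac{\lambda(G_5)}{\sqrt{\mu}}\;\le\; 5\gamma+\frac{O(1)}{q\sqrt{\epsilon/4}}.\]
For $\epsilon\ge\beta_1 d^4/\sqrt q$ and $\gamma=\Omega(1/d^3)$ the additive error term is easily $o(\gamma)$, so $\E_{C\sim F_y}[p_C]\le 6\gamma$ say.

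Finally, by Markov's inequality, $\Pr_{C\sim F_y}[p_C>30\gamma]\le 6\gamma/30\gamma = 1/5 < 1/2$, so at least half of the cubes in $F_y$ satisfy $p_C\le 30\gamma$, and for each such cube $T(C)\approxparam{32\gamma} f_{x|_C}$ by the triangle-inequality argument above. The only slightly delicate step is verifying that the lower-order term $\lambda(G_5)/\sqrt{\mu}$ is indeed $o(\gamma)$ under the quantitative regime of the main theorem; this is a straightforward calculation from the hypotheses on $\epsilon$, $\gamma$ and $q$, and poses no real obstacle.
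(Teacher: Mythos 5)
Your proof is correct and arrives at the same $32\gamma$ bound, using the same inclusion graph $G(\F^m\setminus\{y\},\mathcal{C}_y)$ and the same \lref[Lemma]{lemma:edge sampling}, but with a more economical decomposition. The paper applies the edge-sampling lemma \emph{twice}: first with the edge set $E'_y=\{(z,C)\mid T(C)(z)=f_y(z)\}$ to pass from the cube-first distribution to the point-first one (losing $\gamma$), then it switches from $f_y$ to $f_x$ on the point-first side using $f_x\approxparam{5\gamma}f_y$ via a conditional-probability manipulation (arriving at $1-15\gamma$), and finally applies the lemma again with $E'_x$ to return to the cube-first distribution before averaging. You instead transfer only the event $f_x(z)\neq f_y(z)$ --- which depends on the point alone, and has probability at most $5\gamma$ under the point-first distribution by hypothesis --- over to the cube-first distribution, and then handle the relation between $T(C)$ and $f_{y|_C}$ cube-by-cube with a union bound against the defining property of $F_y$. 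This costs one application of the lemma instead of two and replaces the paper's middle step by a per-cube triangle inequality; the constants ($6\gamma$ average, $30\gamma$ Markov threshold, $2\gamma+30\gamma=32\gamma$) close exactly. Two small points worth making explicit: the left vertex set of the graph is $\F^m\setminus\{y\}$, so the quantity controlled by $D_1$ is $\Pr_{z\sim C\setminus\{y\}}[f_x(z)\neq f_y(z)]$ rather than your $p_C$ (a discrepancy of at most $1/q^3$, which is absorbed in the slack), and the error term $\lambda(G_5)/\sqrt{\mu}=O\bigl(1/(q\sqrt{\epsilon})\bigr)$ is indeed $o(\gamma)$ since $\epsilon\geq\beta_1 d^4/\sqrt{q}$ forces $d\leq O(q^{1/8})$ while $\gamma=\Omega(1/d^3)$; this is the same magnitude of error the paper itself absorbs. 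Neither affects correctness.
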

Note that the function $f_x$ may not be a low degree polynomial, so $T(C)\approxparam{32\gamma} f_{x|_C}$ doesn't imply equality.
\begin{proof}
Let $G = G(\F^m\setminus \{y\}, \mathcal{C}_y)$, by \lref[Claim]{claim:ev_FmCx}
it has $\lambda = \lambda(G) \approx \frac{1}{q}$. First, we denote by $E'_y$ the following set of edges,
\[ E'_y = \{(z,C) \mid T(C)(z) = f_y(z) \} .\]
For each $C\in F_y$, we know that $\Pr_{z\in N(C)}[(z,C)\in E'_y]\geq 1-2\gamma$.
From \lref[Lemma]{lemma:edge sampling} on $G$ when $A=\F^m\setminus y, B=\mathcal{C}_y, B' = F_y$,
we know that
\[ \abs{\Pr_{C\sim F_y, z\sim N(C)}[(z,C)\in E'_y]- \Pr_{z,C\in N(z)\cap F_y}[(z,C)\in E'_y]} \leq \frac{4\lambda}{\epsilon},\]
since the measure of $F_y$ is at least $\frac{\epsilon}{4}$. This implies that $\Pr_{z,C\in N(z)\cap F_y}[(z,C)\in E'_y]\geq 1-3\gamma$.

We define a second set of edges, $E_x'$ to be the same only for $f_x$,
\[ E'_x = \{(z,C) \mid T(C)(z) = f_x(z) \} .\]
We notice that if $z$ is a point such that $f_x(z) = f_y(z)$, then $(z,C)\in E_y'\implies (z,C)\in E_x'$.
\begin{align*}
\Pr_{z,C\sim N(z)\cap F_y}[(z,C)\in E'_x]\geq& \Pr_{z}[f_x(z) = f_y(z)] \cdot \Pr_{z,C\sim N(z)\cap F_y}[(z,C)\in E'_y\mid f_x(z) = f_y(z)] \\\geq&
(1-5\gamma)\cdot \Pr_{z,C\sim N(z)\cap F_y}[(z,C)\in E'_y\mid f_x(z) = f_y(z)] \tag*{(since $f_x\approxparam{5\gamma}f_y$)} \\\geq&
(1-5\gamma)\cdot \left(\Pr_{z,C\sim N(z)\cap F_y}[(z,C)\in E'_y] - 5\gamma\right)\\\geq&
1-15\gamma.
\end{align*}
Therefore, we can use \lref[Lemma]{lemma:edge sampling}
again on the same graph $G$ and set $F_y$, now with the edge set $E_x'$, to conclude that
\[ \Pr_{C\sim F_y, z\sim N(C)}[(z,C)\in E'_x] \geq \Pr_{z,C\sim N(z)\cap F_y}[(z,C)\in E'_x] - \frac{4\lambda}{\epsilon} \geq 1-16\gamma,\]
By averaging, at least half of $C\in F_y$ satisfies $T(C)\approxparam{32\gamma} f_{x|_C}$.
\end{proof}
We are now ready to prove the global structure.
\paragraph{Proof of \lref[Lemma]{lemma:global structure}:}
Let $T$ be the cubes table that passes \lref[Test]{test:CxC} with probability at least $\epsilon=\Omega( \frac{d^4}{\sqrt{q}})$. From the local structure, \lref[Lemma]{lemma:local structure}, we know that there exists a set $X^\star$ of excellent points, such that each $x\in X^\star$ has a function $f_x$, and $\abs{F_x}\geq\frac{\epsilon}{4}\abs{\mathcal{C}_x}$.

From \lref[Claim]{claim:many interactions}, we know that $\Pr_{x,y,C\sim D}[C\in F_x\cap F_y]\geq \frac{\epsilon^2}{26}$, when $x,y$ are chosen uniformly from $X^\star$ and $C$ is a common neighbor. Therefore, there must be $x\in X^\star$ such that $\Pr_{y\sim X^\star,C\sim N(x)\cap N(y)}[C\in F_x\cap F_y]\geq \frac{\epsilon^2}{26}$.

Fix such $x\in X^\star$, and let $X'$ be the set of $y\in X^\star$ such that $\abs{F_x\cap F_y}\geq \frac{\epsilon^2}{100}\abs{\mathcal{C}_\ell}$. By averaging,  $\abs{X
'}\geq \frac{\epsilon^2}{100}\abs{X^\star} \geq \frac{\epsilon^3}{400}\abs
\F^m$.

By \lref[Claim]{claim:approx equality}, for all $y\in X'$, $f_y\approxparam{5\gamma}f_x$. For each $y\in X'$, let
\[ F'_y = \{C\in F_y \mid T(C) \approxparam{32\gamma} f_{x|_C} \} .\]
At this point we have a large collection of $y$'s and for each one a large collection of cubes $F'_y$ such that all of these support the same function $f_x$. It is immediate that $f_x$ is supported by some $\poly(\eps)$ fraction of all of the cubes. Since we are aiming for a better quantitative bound of $\Omega(\eps)$ fraction of $\mathcal{C}$, we will rely on the expansion once more.

In order to finish the proof, we need to show that $\abs{\cup_{y\in X'}F'_y} \geq\frac{\epsilon}{16}\abs{\mathcal{C}}$.

Let $G=G(\F^m,\mathcal{C})$, by \lref[Lemma]{claim:ev_FmC}
$\lambda(G)\leq q^{-\frac{3}{2}}$. We use $X'$ as the set of vertices, and define \[E' = \{ (y,C) \mid T(C) \approxparam{32\gamma} f_{x|_C} \}.\]

By \lref[Lemma]{lemma:edge sampling} on $G$ with $A = \mathcal{C}, B=\F^m, B' = X' $,
\[\abs{\Pr_{y\sim X',C\sim N(y)}[(y,C)\in E'] -\Pr_{C\sim \mathcal{C}, y\sim N(C)\cap X'}[(y,C)\in E'] } \leq \frac{20\lambda}{\sqrt{\epsilon^3}} \leq \frac{20 q^{-\frac{3}{2}}}{q^{-\frac{3}{4}}} \leq 20q^{-\frac{3}{4}}\leq \frac{\epsilon}{16}, \]
where we used the fact that $\epsilon\geq \frac{1}{\sqrt{q}}$.

\lref[Claim]{claim:common agreement} lets us bound the first term on the left, since for each $y\in X'$, $\Pr_{C\sim N(y)}[C\in F'_y]\geq \frac{1}{2}\Pr_{C\sim N(y)}[C\in F_y] \geq \frac{\epsilon}{8}$. Thus,
\[ \Pr_{C\sim \mathcal{C}, y\sim N(C)\cap X'}[(y,C)\in E']  \geq \frac{\epsilon}{8}-\frac{\epsilon}{16} =\frac{\epsilon}{16}  .\]
We notice that a cube with even a single adjacent edge in $E'$ satisfies $ T(C) \approxparam{32\gamma} f_{x|_C}$, so we are done.
\qed

\subsection{Low Degree}\label{sec:lowdeg}

The last step is to prove that the global function discovered in the previous section can be modified to make it a low degree function, while still maintaining large support for it among the cubes.

\begin{theorem}[Theorem \ref{thm:main} restated]\label{thm:main restated}
For every $d$ and large enough prime power $q$ and every $m\ge 3$ the following holds. Let $T$ be a cubes table that passes \lref[Test]{test:CxC} with probability at least $\epsilon = \Omega(\frac{d^4}{\sqrt{q}})$, then there exist a degree $d$ polynomial $g:\F^m\to\F$ such that $T(C) = g|_C$ on an $\Omega(\eps)$ fraction of the cubes.
\end{theorem}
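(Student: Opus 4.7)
The plan is to begin with the function $f = f_x$ guaranteed by \lref[Lemma]{lemma:global structure}, let $F := \{C \in \mathcal{C} : T(C) \approxparam{32\gamma} f|_C\}$ so that $|F|/|\mathcal{C}| \geq \epsilon/16$, and extract a genuine degree-$d$ polynomial close to $f$ via a Rubinfeld-Sudan reduction. I would take $\gamma$ to be a small enough constant multiple of $1/d^3$ (still $\Omega(1/d^3)$, so \lref[Lemma]{lemma:global structure} applies) so that the line-test failure probability estimated below sits comfortably under the Rubinfeld-Sudan threshold.

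The first and hardest step is to show that for a $1 - o(1)$ fraction of lines $\ell$, the restriction $f|_\ell$ is $O(\gamma)$-close to a unique degree-$d$ univariate polynomial $p_\ell$. For each supporting cube $C \in F$, Markov's inequality applied to the $32\gamma$-disagreement between $T(C)$ and $f|_C$ yields that at least half of the lines $\ell \subset C$ satisfy $T(C)|_\ell \approxparam{64\gamma} f|_\ell$; call such a pair $(\ell, C)$ ``good''. Thus an $\epsilon/32$ fraction of all pairs $(\ell, C)$ with $\ell \subset C$ is good. The edge-sampling lemma \lref[Lemma]{lemma:edge sampling} applied to the line-cube inclusion graph (whose spectral gap is $O(1/q)$ by the general pattern in \lref[Lemma]{lemma:ev_bounds}) transfers this to the ``$\ell$-first'' distribution. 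Moreover, the number of good supporting cubes through a random line has expectation $\Omega(\epsilon q^{2(m-2)})$ while its variance, by near-independence of the membership of distinct cubes through $\ell$ in $F$, is of the same order; Chebyshev then shows that all but an $o(1)$ fraction of lines $\ell$ contain at least one good supporting cube. Two good cubes through the same $\ell$ must induce the same polynomial on $\ell$ --- two degree-$d$ univariates agreeing on more than $d/q$ of the $q$ points are equal --- so $p_\ell := T(C)|_\ell$ is well-defined. Consequently the Rubinfeld-Sudan line test on $f$ (pick a random $\ell$ and $d+2$ random points on it, check consistency with a degree-$d$ polynomial) passes with probability at least $(1-o(1))(1-64\gamma)^{d+2} \geq 1 - O(\gamma d)$, which by our choice of $\gamma$ is below the Rubinfeld-Sudan threshold. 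Rubinfeld-Sudan then produces a degree-$d$ polynomial $g : \F^m \to \F$ with $\Pr_y[g(y) \neq f(y)] \leq O(\gamma d)$.

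Finally I would upgrade this global closeness to pointwise equality $g|_C = T(C)$ on an $\Omega(\epsilon)$ fraction of cubes. Let $\Delta_C := \Pr_{y \in C}[g(y) \neq f(y)]$ be the per-cube disagreement; its expectation over uniform $C$ equals $\Pr_y[g(y) \neq f(y)] = O(\gamma d)$, and --- crucially --- its variance is of order $O(\gamma d / q^3)$, because two distinct uniform points inside a uniform cube are distributed essentially uniformly over pairs of distinct points in $\F^m$ (a consequence of the biregularity of the cube-point inclusion graph and the fact that every two distinct points lie in the same number of common cubes). Chebyshev then yields $\Pr_C[\Delta_C > 2 \cdot O(\gamma d)] = O(1/(q^3 \gamma d))$, which for $\epsilon \geq \Omega(d^4/\sqrt q)$ is much smaller than $\epsilon$. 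Intersecting the complement with $F$ leaves an $\Omega(\epsilon)$ fraction of cubes on which the disagreement between $g|_C$ and $T(C)$ is at most $O(\gamma d) + 32\gamma$, well below $1 - d/q$; since both are degree-$d$ polynomials on the three-dimensional cube $C$, Schwartz-Zippel forces $g|_C = T(C)$ on these cubes. A plain Markov bound in this last step would leave the count of bad cubes competing with $\epsilon$ in the wrong direction, so the variance computation is the real crux.
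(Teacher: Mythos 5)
Your overall architecture (start from \lref[Lemma]{lemma:global structure}, show $f$ passes a Rubinfeld--Sudan-type test along lines, extract $g$, then use Schwartz--Zippel on cubes where the combined disagreement is small) matches the paper's, and your final step is fine --- your second-moment bound on $\Delta_C$ via the pairwise-uniformity of points in a random cube is a valid substitute for the paper's route (edge-sampling on $G(\F^m,\mathcal{C})$ restricted to $F$, followed by Markov within $F$), and you correctly diagnose why plain Markov over all cubes fails there. The problem is in the step you yourself flag as the hardest one. Markov's inequality only gives that \emph{half} of the lines of each supporting cube are good, and no amount of averaging or edge-sampling can upgrade ``half the lines of each $C\in F$'' into ``all but $o(1)$ of all lines are covered by some good cube.'' Your proposed rescue --- that the number of good supporting cubes through a random line has variance of the same order as its mean ``by near-independence of the membership of distinct cubes through $\ell$ in $F$'' --- is unjustified and false in general: the set of good pairs is adversarial, and nothing you have established rules out a configuration in which the good pairs $(\ell,C)$ all have $\ell$ in a fixed set $\Lambda$ of density $1/2$ (with each $C\in F$ having half its lines in $\Lambda$, which is consistent with the Markov conclusion). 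In that case $\Pr_\ell[N_\ell=0]\ge 1/2$, the variance of $N_\ell$ is $\Omega(\E[N_\ell]^2)$, and Chebyshev gives nothing. Even if it did give $1/2-o(1)$ coverage, the Rubinfeld--Sudan hypothesis needs the test to pass with probability $1-\delta$ for $\delta\le \frac{1}{2(d+2)^2}$, so losing a constant fraction of lines is fatal.

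The missing idea is that each supporting cube is itself a sampler: the point--line inclusion graph of $C\cong\F^3$ has $\lambda\approx 1/\sqrt q$ (\lref[Lemma]{claim:ev_FmL} with $m=3$), so applying \lref[Lemma]{lemma:edge sampling} \emph{inside} each $C\in F$ shows that all but a $\gamma$ fraction (not merely half) of the lines $\ell\subset C$ satisfy $T(C)_{|\ell}\approxparam{40\gamma}f_{|\ell}$; this is a concentration statement that Markov cannot deliver at any useful threshold, since to get a $1-O(1/d^2)$ fraction of good lines per cube Markov would force the per-line disagreement bound up to $O(\gamma d^2)$, which after the union bound over the $d+2$ points of a neighborhood $N_{y,h}$ leaves you with constant failure probability. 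With the $1-\gamma$ per-cube statement in hand, a single application of \lref[Lemma]{lemma:edge sampling} on $G(\mathcal{L}\setminus\mathcal{L}_x,\mathcal{C}_x)$ (using that $F\cap\mathcal{C}_x$ has density $\ge\epsilon/4$ there) transfers it to ``$1-2\gamma$ of all lines are covered,'' and the rest of your argument goes through. As written, your proof does not close this gap.
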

From \lref[Lemma]{lemma:global structure}, we get a function $f$ such that $\Omega(\epsilon)$ of the cubes have $T(C)\approx f_{|_C}$. In this section, we will show that this function $f$ is close to  a degree $d$ polynomial $g$. Afterwards, we also need to show that $\Omega(\epsilon)$ of the cubes satisfies $T(C)=g_{|_C}$

To show the first part, we will use a robust characterization of low degree polynomials given by  Rubinfeld and Sudan.
\begin{theorem}[{\cite[Theorem 4.1]{RS96}}]
\label{thm:RuSu}
Let $f:\F^m\rightarrow\F$ be a function,  and let $N_{y,h} = \{y + i(h-y) \mid i\in \{0,\dots,d+1\} \}$, if $f$ satisfies
\[ \Pr_{y,h\in\F^m}[\exists \deg d \text{ polynomial }  p \st p_{|_{N_{y,h}}} = f_{|_{N_{y,h}}}] \geq 1-\delta, \]
for $\delta \leq \frac{1}{2(d+2)^2}$,  then there exists a degree $d$ polynomial $g$ such that $f\approxparam{2\delta}g$.
\end{theorem}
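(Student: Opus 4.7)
The plan is to combine \lref[Lemma]{lemma:global structure} with the Rubinfeld--Sudan theorem (\lref[Theorem]{thm:RuSu}) in three steps: (i) extract from the global structure lemma a candidate function $f$ together with its supporting set $F$ of cubes; (ii) verify the Rubinfeld--Sudan hypothesis for $f$ itself and invoke the theorem to obtain a true degree-$d$ polynomial $g$ globally close to $f$; (iii) upgrade the approximate equality $T(C)\approx g|_C$ on many cubes into the exact identity $T(C)=g|_C$ via Schwartz--Zippel, yielding the desired $\Omega(\eps)$ fraction.

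For step (i), fix $\gamma=\Theta(1/d^3)$ and apply \lref[Lemma]{lemma:global structure} to obtain $f:\F^m\to\F$ together with the set
\[ F = \{C\in\mathcal{C}: T(C)\approxparam{32\gamma}f|_C\},\qquad |F|/|\mathcal{C}|\ge \eps/16. \]
For step (ii), set $\delta=1/(2(d+2)^2)$, the threshold in \lref[Theorem]{thm:RuSu}, and verify that for all but a $\delta$ fraction of $(y,h)$ the restriction $f|_{N_{y,h}}$ is interpolated by some degree-$d$ polynomial. Given random $y,h$ let $\ell=\ell(y,h)$. By the expansion of $G_3(\F^m\setminus\ell,\mathcal{C}_\ell)$ from \lref[Claim]{claim:ev_FmC-ell} together with \lref[Lemma]{lemma:edge sampling}, for almost every line $\ell$ the density of $F$ in $\mathcal{C}_\ell$ is close to $\eps/16$, so I may pick any cube $C\in F\cap\mathcal{C}_\ell$. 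Its restriction $T(C)|_\ell$ is a degree-$d$ univariate polynomial, so the only failure mode is $T(C)(z)\ne f(z)$ for some $z\in N_{y,h}$. Since each point $y+i(h-y)$ is uniform on $\ell$, by another application of \lref[Lemma]{lemma:edge sampling} the distribution of the resulting $(C,z)$ pair is $o(1)$-close to the ``pick $C\in F$ and then $z\in C$'' distribution, so the disagreement probability on any single point is at most $32\gamma+o(1)$; a union bound over the $d+2$ points of $N_{y,h}$ yields failure probability $O(d\gamma)=O(1/d^2)\le\delta$. Thus \lref[Theorem]{thm:RuSu} applies and produces a degree-$d$ polynomial $g$ with $f\approxparam{2\delta}g$ globally.

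For step (iii), I need $T(C)=g|_C$ identically on $\Omega(\eps)$ of cubes. Using \lref[Lemma]{lemma:edge sampling} on the point-cube graph $G_4(\F^m,\mathcal{C})$ from \lref[Claim]{claim:ev_FmC} with $B'=F$ and $E'=\{(z,C):f(z)\ne g(z)\}$, the global bound $\Pr_z[f(z)\ne g(z)]\le 2\delta$ transfers to $\Pr_{C\in F,\, z\in C}[f(z)\ne g(z)]\le 2\delta + O(\lambda/\sqrt{\eps})=O(\delta)$. By Markov at least half of $C\in F$ satisfy $f|_C\approxparam{O(\delta)} g|_C$, which combined with $T(C)\approxparam{32\gamma}f|_C$ yields $T(C)\approxparam{32\gamma+O(\delta)}g|_C$ by the triangle inequality. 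Since both $T(C)$ and $g|_C$ are degree-$d$ polynomials on the three-dimensional affine cube $C$, Schwartz--Zippel forces any two distinct such polynomials to disagree on at least a $1-d/q$ fraction of $C$; but $32\gamma+O(\delta)=O(1/d^2)<1-d/q$ in our regime, so $T(C)=g|_C$ identically on these cubes. This accounts for an $\Omega(\eps)$ fraction of all cubes.

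The main obstacle is step (ii): pushing the Rubinfeld--Sudan hypothesis all the way down to the very small $\delta=O(1/d^2)$ threshold. The subtlety is that the $d+2$ points of $N_{y,h}$ are collinear and hence not independent, so the union bound over them must be carried out inside an expansion-based sampling argument that compares the distribution of $(C,z)$ obtained by picking a line $\ell$ through $(y,h)$ and then $C\in F\cap\mathcal{C}_\ell$ and $z\in\ell$, to the direct distribution ``pick $C\in F$ and then $z\in C$''. The eigenvalue bounds in \lref[Lemma]{lemma:ev_bounds} lose only an $O(\lambda/\sqrt{\eps})$ factor in this comparison, which is negligible against $\gamma=\Omega(1/d^3)$ exactly when $\eps\ge\poly(d)/\sqrt{q}$, matching the hypothesis of the theorem.
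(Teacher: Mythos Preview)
You are proving the wrong statement. The theorem labeled \texttt{thm:RuSu} is the Rubinfeld--Sudan characterization: given an arbitrary function $f:\F^m\to\F$ that agrees with \emph{some} degree-$d$ polynomial on a $(1-\delta)$-fraction of neighborhoods $N_{y,h}$, one must produce a single global degree-$d$ polynomial $g$ with $f\approxparam{2\delta}g$. There is no cubes table $T$, no parameter $\eps$, no global-structure lemma in the hypotheses. Your proposal instead takes \lref[Theorem]{thm:RuSu} as a black box and uses it to establish \lref[Theorem]{thm:main restated}; in particular your step~(ii) literally ``invokes \lref[Theorem]{thm:RuSu}'' to obtain $g$. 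This is circular: you are assuming the very statement you were asked to prove.

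The paper's proof of \lref[Theorem]{thm:RuSu} (in \lref[Appendix]{section:RuSu proof}) is self-contained and has nothing to do with cubes or expansion. One defines $g(y)=\maj_h\bigl\{\sum_{i=1}^{d+1}\alpha_i f(y+ih)\bigr\}$ with $\alpha_i=\binom{d+1}{i}(-1)^{i+1}$, and then proves two claims: (a) for every $y$, the majority value is attained with probability at least $1-2(d+1)\delta$ over $h$ (via a $(d+2)\times(d+2)$ matrix argument and union bound), and (b) $g$ satisfies the interpolation identity $\sum_{i=0}^{d+1}\alpha_i g(y+ih)=0$ for \emph{all} $y,h$ (via a second matrix argument with an auxiliary random pair $h_1,h_2$), so $g$ is exactly degree $d$ by \lref[Fact]{fact:low deg char}. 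The closeness $f\approxparam{2\delta}g$ follows directly from the definition of $g$ and a Markov argument. Your write-up contains none of this; what you have sketched is essentially the content of \lref[Claim]{claim:low deg} and the proof of \lref[Theorem]{thm:main restated}, not of \lref[Theorem]{thm:RuSu}.
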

For completeness, we present proof of the above theorem in \lref[Appendix]{section:RuSu proof}.

\begin{claim}\label{claim:low deg}
Fix any $\gamma \leq \frac{1}{100(d+2)^3}$, let $f:\F^m\rightarrow\F$ and $x\in \F^m$ such that
$ \Pr_{C\in\mathcal{C}_x}[T(C)\approxparam{32\gamma}f_{|_C}]\geq \frac{\epsilon}{4} $, then exists a degree $d$ polynomial $g$ such that
$f\approxparam{84d\gamma} g$.
\end{claim}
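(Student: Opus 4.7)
The natural route is to apply the Rubinfeld--Sudan characterization (\lref[Theorem]{thm:RuSu}) to $f$ with parameter $\delta = O(d\gamma)$. Since $\gamma \le 1/(100(d+2)^3)$, this $\delta$ is well below the $1/(2(d+2)^2)$ bound demanded by the hypothesis of \lref[Theorem]{thm:RuSu}, and its conclusion $f\approxparam{2\delta} g$ then yields the agreement $84d\gamma$ up to constants. So my task reduces to verifying the RS hypothesis: for all but an $O(d\gamma)$ fraction of pairs $(y,h)\in(\F^m)^2$, the values $f|_{N_{y,h}}$ are consistent with some univariate degree-$d$ polynomial.

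\textbf{Step 1: witnesses from good cubes.} Call a cube $C\in\mathcal{C}_x$ \emph{good} if $T(C)\approxparam{32\gamma} f|_C$; by hypothesis at least an $\epsilon/4$ fraction of $\mathcal{C}_x$ is good. For any good $C$ containing $y,h$ and satisfying $T(C)|_{N_{y,h}}=f|_{N_{y,h}}$, the univariate polynomial $T(C)|_{\ell(y,h)}$ is an RS witness for the pair $(y,h)$. For a fixed good $C$ and uniform $(y,h)\in C^2$, each of the $d+2$ collinear points of $N_{y,h}=\{(1-i)y+ih\mid i\in\{0,\ldots,d+1\}\}$ is individually uniform in $C$, so a union bound gives
\[
\Pr_{(y,h)\in C^2}\bigl[\,T(C)|_{N_{y,h}}\ne f|_{N_{y,h}}\,\bigr]\;\le\;32(d+2)\gamma.
\]

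\textbf{Step 2: double counting.} Let $E\subseteq(\F^m)^2$ be the set of pairs admitting no degree-$d$ witness, and set $G_{y,h}=\{C\in\mathcal{C}_{x,y,h}:C\text{ good}\}$. If $(y,h)\in E$ then every $C\in G_{y,h}$ must disagree with $f$ somewhere on $N_{y,h}$. Summing the bound of Step 1 over the $(\epsilon/4)|\mathcal{C}_x|$ good cubes yields
\[
\sum_{(y,h)\in E} |G_{y,h}|\;\le\;32(d+2)\gamma\cdot(\epsilon/4)\,|\mathcal{C}_x|\,q^6.
\]
Standard double counting gives $|\mathcal{C}_{x,y,h}|\cdot q^{2m}=|\mathcal{C}_x|\,q^6$ for generic pairs affinely independent from $x$, so the expectation of $|G_{y,h}|/|\mathcal{C}_{x,y,h}|$ over $(y,h)$ equals $\epsilon/4$.

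\textbf{Step 3: concentration and conclusion.} To pass from this aggregate bound to a fractional bound on $E$, I invoke the spectral expansion of the inclusion graph between pairs in $(\F^m)^2$ and cubes in $\mathcal{C}_x$; by an argument in the spirit of \lref[Claim]{claim:ev_FmCx} its spectral gap is $O(1/q)$. A pair-variant of \lref[Lemma]{lemma:edge sampling} then shows that all but a fraction much smaller than $d\gamma$ of the pairs $(y,h)$ satisfy $|G_{y,h}|\ge(\epsilon/8)|\mathcal{C}_{x,y,h}|$. Plugging this lower bound into the inequality of Step 2 and using the double-counting identity gives $|E|/q^{2m}\le O(d\gamma)$; feeding $\delta:=|E|/q^{2m}$ into \lref[Theorem]{thm:RuSu} produces a degree-$d$ polynomial $g$ with $f\approxparam{2\delta} g$, and book-keeping the constants recovers the claimed $84d\gamma$.

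\textbf{Main obstacle.} The delicate step is the concentration in Step 3: the local good-cube density $|G_{y,h}|/|\mathcal{C}_{x,y,h}|$ must concentrate around its global mean $\epsilon/4$ for essentially all pairs. Since $\epsilon=\Omega(d^4/\sqrt q)$ and the spectral gap $\lambda$ is $O(1/q)$, the sampler error $\lambda/\sqrt{\mu}$ is vastly smaller than $d\gamma$, giving the necessary slack. Alternatively one can apply the single-point sampler of \lref[Claim]{claim:ev_FmCx} twice --- first to $y$, and then to $h$ conditioned on $y$ --- incurring only a small additional loss, which avoids having to prove a new spectral bound for the pair graph.
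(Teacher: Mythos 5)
Your overall strategy is sound and reaches the same endpoint as the paper (feeding a density bound on bad neighborhoods into \lref[Theorem]{thm:RuSu}), but the intermediate combinatorics is genuinely different. The paper works through \emph{lines}: for each good cube it uses the point--line inclusion graph inside $C$ (via \lref[Lemma]{claim:ev_FmL} with $m=3$) to show that all but a $\gamma$ fraction of lines $\ell\subset C$ satisfy $T(C)|_\ell\approxparam{40\gamma}f|_\ell$, then uses the graph $G(\mathcal{L}\setminus\mathcal{L}_x,\mathcal{C}_x)$ and \lref[Lemma]{lemma:edge sampling} to cover $1-2\gamma$ of all lines by some good cube, and only then union-bounds over the $d+2$ points of a neighborhood inside a line. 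You skip the line stage entirely and union-bound inside each good cube directly over random pairs $(y,h)\in C^2$, then transfer to uniform pairs in $(\F^m)^2$ by expansion of the pair-vs-cube incidence structure. Your route is arguably more direct and avoids one of the two expander applications, at the cost of requiring a spectral bound for a graph the paper never analyzes.

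Two caveats. First, the spectral gap you assert for the pairs-vs-$\mathcal{C}_x$ graph is wrong: this graph is a blow-up of the inclusion graph between planes through $x$ and cubes through $x$, whose squared walk contains the Grassmann walk $T_{2,1}$ with top nontrivial eigenvalue $\approx 1/q$, so $\lambda=\Theta(1/\sqrt q)$, not $O(1/q)$. (Note that \lref[Lemma]{lemma:subspaces} with $\LL=3,\KL=2,\RL=0$ gives only the trivial bound here, so this needs a separate computation.) The error you incur is then $O(1/(q\epsilon))=O(1/(d^4\sqrt q))$, still far below $d\gamma=\Omega(1/d^2)$, so the slack saves you. Second, Steps 2--3 as written lean on a ``pair-variant of \lref[Lemma]{lemma:edge sampling}'' to get \emph{pointwise} concentration of $|G_{y,h}|/|\mathcal{C}_{x,y,h}|$ around $\epsilon/4$; that lemma compares two edge distributions and does not give pointwise concentration --- you would need a second-moment/Chebyshev argument in the style of \lref[Claim]{claim:bound inner product}. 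In fact the cleanest fix makes your Steps 2--3 collapse to one line: apply \lref[Lemma]{lemma:edge sampling} itself to the pair-vs-cube graph with $B'=F$ and $E''=\{((y,h),C): T(C)|_{N_{y,h}}=f|_{N_{y,h}}\}$; Step 1 bounds the $D_1$-probability and the lemma transfers it to uniform $(y,h)$ with error $\lambda/\sqrt{\epsilon/4}=O(q^{-1/4}/d^2)$, no double counting needed. Finally, your constants come out around $64(d+2)\gamma$ before the Rubinfeld--Sudan doubling, so you land near $128(d+2)\gamma$ rather than $84d\gamma$; this is immaterial to the main theorem but does not literally match the claim as stated.
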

\begin{proof}
Denote by $F\subseteq\mathcal{C}_x$ the following set
\[ F = \{C\in\mathcal{C}_x \mid T(C)\approxparam{32\gamma}f_{|_C} \} .\]
Our first goal is to show that for nearly all lines, $f$ agrees with a low degree function on almost all of the points of the line.

Fix $C\in F$, if we pick a uniform $\ell\subset C$ we expect that $T(C)_\ell\approxparam{O(\gamma)}f_{|_\ell}$. Using the spectral properties we show that almost all lines satisfy this property. Let $G_C=G(A\cup B,E)$ be the following bipartite inclusion graph where $A$ is all the points in $C$, and $B$ is all the affine lines in $C$. Let $A'\subset A$ be $A'= \{ y\in A \mid T(C)(y)\neq f(y) \}$, and $B'\subset B$ be $B'= \{\ell \in B \mid \abs{N(\ell)\cap A'} \geq 40\gamma\abs{N(\ell)} \}$.
From \lref[Lemma]{claim:ev_FmL} with $m=3$ (we apply the lemma where "$\F^m$" is the cube $C$), $\lambda_C = \lambda(G_C)\leq \frac{2}{\sqrt{q}}$.
We apply \lref[Lemma]{lemma:edge sampling} on $G_C$ and the set $B'$, where the set of edges is all the edges adjacent to $A'$:
\[ \abs{\Pr_{y\in A,\ell\in N(y)\cap B'}[y\in A'] - \Pr_{\ell\in B',y\in N(\ell)}[y\in A']} \leq \frac{\lambda_C}{\sqrt{\frac{\abs{B'}}{\abs{B}}}} .\]
We notice that $\Pr_{y\in A}[y\in A']\leq 32\gamma$. By the definition of $B'$, $ \Pr_{\ell\in B',y\in N(\ell)}[y\in A'] \geq 40\gamma$.  Therefore $\abs{B'}\leq \left(\frac{\lambda_C}{8\gamma}\right)^2\abs{B} < \gamma\abs{B}$.

We have shown that for every cube $C\in F$, almost all lines in it satisfy $T(C)_\ell\approxparam{40\gamma}f_{|_\ell}$. Now we need to show that the set $F$ is large enough to cover $(1-O(\gamma))$ of all the lines in $\mathcal{L}$.
The inclusion graph $G=G(\mathcal{L}\setminus\mathcal{L}_x,\mathcal{C}_x)$ has $\lambda = \lambda(G) \leq \frac{1}{\sqrt{q}}$, by \lref[Lemma]{claim:ev_LCx}.
We denote by $E'$ the set of edges $(\ell,C)$ such that $T(C)_{|_\ell} \approxparam{40\gamma}f_{|_\ell}$. As we've seen above, for every $C\in F$, $\Pr_{\ell\in N(C)}[(\ell,C)\in E']\geq 1-\gamma$.

By \lref[Lemma]{lemma:edge sampling} on $G$, with $A=\mathcal{L}\setminus\mathcal{L}_x, B=\mathcal{C}_x, B' = F$,
\begin{align*}
\abs{\Pr_{\ell,C\sim N(\ell)\cap F}[(\ell,C)\in E']-\Pr_{C\sim F, \ell\sim C}[(\ell,C)\in E']} \leq \frac{\lambda}{\sqrt{\epsilon}} \leq \gamma,
\end{align*}
which means that
\begin{align*}
\Pr_{\ell}[\exists C \st (\ell,C)\in E'] \geq \Pr_{\ell,C\sim N(\ell)\cap F}[(\ell,C)\in E'] \geq 1-2\gamma.
\end{align*}
This means that for $1-2\gamma$ of the lines in $\mathcal{L}$, $f$ agrees with a degree $d$ function on $1-40\gamma$ fraction of the points of each line.

We are very close to being able to apply the low degree test of Rubinfeld and Sudan \cite{RS96}, that works in the high soundness regime. For this, we need to move to neighborhoods.
For $y,h\in\F^m$, we define the neighborhood of $y,h$, \[N_{y,h} = \{y+i(h-y)\mid 0\leq i\leq d+1\}.\] Notice that $N_{y,h}\subset\ell(y,h)$.
We show that on almost all of the neighborhoods $N_{y,h}$, the function $f_{|_{N_{y,h}}}$ equals a degree $d$ polynomial, by showing that for almost all $N_{y,h}$, there exists some cube $C$ such that $f_{|_{N_{y,h}}}=T(C)_{|_{N_{y,h}}}$ ($T(C)$ is a degree $d$ polynomial).

Picking a random neighborhood $N_{y,h}$ is equivalent to picking a random line $\ell\in\mathcal{L}$ and then uniform $y,h\in\ell$. We have already showed that almost all lines $\ell\in\mathcal{L}$, there exists a cube $C$ such that $T(C)_\ell\approxparam{\Omega(\gamma)}f_{|_\ell}$.

Now we can bound the same probability over neighborhoods
\begin{align}
\Pr_{y,h\sim\F^m}[\exists C \st f(N_{y,h}) = T(C)(N_{y,h})] \geq& \Pr_{\ell}[\exists C \st (\ell,C)\in E'] \cdot \nonumber\\ &\Pr_{\ell,y,h\sim\ell}[f(N_{y,h}) = T(C)(N_{y,h}) \mid \exists C \st (\ell,C)\in E'] \nonumber\\ \geq&
(1-2\gamma)\Pr_{\ell,y,h\sim\ell}[f(N_{y,h}) = T(C)(N_{y,h}) \mid \exists C \st (\ell,C)\in E'] \nonumber\\
\geq& (1-2\gamma)(1-(d+2)\cdot40\gamma), \label{eq:union bound}\\
\geq& 1-42d\gamma, \nonumber
\end{align}
where (\ref{eq:union bound}) is due to union bound on the neighborhoods inside $\ell$.
Therefore, the function $f$ equals a degree $d$ polynomial on $(1-42d\gamma)$ of the neighborhoods. Since $\gamma \leq 100(d+2)^{-3}$, by \lref[Theorem]{thm:RuSu}, we get that there exists a degree $d$ polynomial $g$, such that $f\approxparam{84d\gamma}g$.
\end{proof}

\paragraph{Proof of \lref[Theorem]{thm:main restated}:}
Fix the cubes table $T$, and let $f:\F^m\rightarrow\F$ be the function promised from \lref[Lemma]{lemma:global structure}. This function satisfies the conditions of \lref[Claim]{claim:low deg}, so there exists a degree $d$ polynomial $g$ such that $f\approxparam{84d\gamma}g$.

Since $g$ is a degree $d$ polynomial, for every cube $C$ either $T(C)=g_{|_C}$, or else they are very different.
Let $G$ be the inclusion graph $G=G(\F^m,\mathcal{C})$, and let
\[ F = \{ C\in\mathcal{C} \mid T(C)\approxparam{32\gamma}f_{|_C} \} \]
From \lref[Lemma]{lemma:global structure}, the measure of $F$ is at least $\frac{\epsilon}{16}$, let $A'$ be the set of points on which $f\neq g$.
By \lref[Lemma]{claim:ev_FmC}, $\lambda(G)\leq q^{-\frac{3}{2}}$. We use \lref[Lemma]{lemma:edge sampling} on $G$ with $A=\F^m, B=\mathcal{C}, B' = F$,
\[ \abs{ \Pr_{C\in F,y\in N(C)}[y\in A'] - \Pr_{y,C\in N(y)\cap F}[y\in A'] } \leq \frac{q^{-\frac{3}{2}}}{\epsilon} \leq \gamma \]
We know that $\Pr_{y,C\in N(y)\cap F}[y\in A'] \leq \Pr_{y}[y\in A'] \leq 84d\gamma$, which implies that $\Pr_{C\in F,y\in N(C)}[y\in A']\leq 85d\gamma$.

By averaging, for at least half of the cubes $C\in F$, $\Pr_{y\in C}[y\in A']  \leq 200d\gamma\leq  \frac{1}{2}$.
For all these cubes $T(C) = g_{|_C}$, because  $\Pr_{y\in C}[T(C)(y) = g(y)]\geq \Pr_{y\in C}[T(C)(y) = f(y),y\notin A']\geq 1-32\gamma - \frac{1}{2}> d/q$, and since $g_{|_C},T(C)$ are both degree $d$ polynomials, they must be equal.
\qed

\begin{remark}
\label{remark:improve d}Instead of \lref[Theorem]{thm:RuSu}, we can use another similar characterization from \cite{RS96}, where the neighborhood is defined as $N_{y,h} = \{y + i(h-y) \mid i\in \{0,\dots,10d\} \}$. The advantage of using this new neighborhood is that we can conclude $f\approxparam{(1+o(1))\delta} g$ as long as  $\delta = O(1/d)$. This will help in reducing the exponent of $d$ by $1$ in our main theorem.  We chose to use \lref[Theorem]{thm:RuSu} for a self contained proof.
\end{remark}

\newcommand{\sub}[1]{\mathcal{A}^{#1}}
\section{Comparing between different tests and their agreement parameter}\label{sec:tests}
There are many variants for the low degree test, in this section we look into equivalences between similar low degree agreement tests. We first prove the equivalence in a more general setting and as a corollary we get some interesting results.

Throughout this section, we will work over $\F^m$ where $\F$ is a field of size $q$ and let $\LL\leq m/2$ be fixed. Also, let $T$ denotes a table which maps every $\LL$ dimensional affine subspace in $\F^m$ to a degree $d$ polynomial. Let $\sub{\LL}$ denote the set of all $\LL$ dimensional affine subspaces in $\F^m$. For $\RL<\LL$ and for $\rs\in \sub{\RL}$ let $\sub{\LL}_{\rs}\subseteq \sub{\LL}$ denote all subspaces in $\sub{\LL}$ which contain a particular subspace $\rs$,
\[ \sub{\LL}_{\rs} = \left\{ \ls\subset\F^m \mid \dim(\ls)=\LL, \rs\subseteq \ls \right\} .\]
For parameters $\LL>\KL\geq \RL$ consider the following test:

\begin{test}
	\caption{Subspace agreement test : $\alpha_{\LL \KL \LL(\RL)}$}
	\label{test:subspaces_gen}
	\begin{enumerate}
		\item Select $\ks \in \sub{\KL}$ u.a.r.
		\item Pick $\ls_1, \ls_2 \in \sub{\LL}_\ks$ u.a.r.
		\item Pick a $\RL$ dimensional subspace $\rs\subseteq \ks$ u.a.r.		
		\item Accept iff $T(\ls_1)_{|\rs} = T(\ls_2)_{|\rs}$.
	\end{enumerate}
Let $\alpha_{\LL \KL \LL(\RL)}(T)$ be the {\em agreement} of the table $T= (f_{\ls})_{\ls\in \sub{\LL}}$, i.e. the probability of acceptance of the test.
\end{test}

 When $\RL=\KL$ we simply denote the agreement as $\alpha_{\LL \KL \LL}(T)$. With these notations, the success probability of \lref[Test]{test:CxC} is denoted by $\alpha_{3,0,3}(T)$, and of \lref[Test]{test:PlP} by $\alpha_{2,1,2}(T)$.

In this section, we prove the following main lemma.
\begin{lemma}
\label{lemma:test equivalence}
Let $0\leq  \RL <  \KL< \LL \leq \frac{m}{2}$, we have

$$\alpha_{\LL \RL \LL}(T)\left(1 - \left(\frac{d}{q}\right)^{\RL+1}\right) \leq \alpha_{\LL \KL \LL}(T) \leq \alpha_{\LL \RL \LL}(T) + (1+o(1))q^{-(\LL-2\KL+\RL+1)},$$
\end{lemma}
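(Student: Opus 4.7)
My plan is to route the comparison between $\alpha_{SKS}$ and $\alpha_{SRS}$ through an intermediate test
$\alpha_{S\KL S(\RL)}$ that uses the \emph{sampling} of $\alpha_{SKS}$ but the weaker \emph{acceptance predicate} of $\alpha_{SRS}$: pick $\ks \in \sub{\KL}$, pick $\ls_1, \ls_2 \in \sub{\LL}_\ks$, then pick a uniformly random $\RL$-dimensional affine subspace $\rs\subseteq\ks$ and accept iff $T(\ls_1)|_\rs = T(\ls_2)|_\rs$. I will bound the two pieces $\alpha_{SKS}\leftrightarrow \alpha_{S\KL S(\RL)}$ and $\alpha_{S\KL S(\RL)}\leftrightarrow \alpha_{SRS}$ separately; the first is handled by the low-degree structure of the table and accounts for the multiplicative factor $(1-(d/q)^{\RL+1})$, while the second is handled by the spectral tools from \lref[Section]{sec:tests}'s subspace-inclusion graphs and accounts for the additive error $q^{-(\LL-2\KL+\RL+1)}$.

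For the Schwartz--Zippel step, first note that $\alpha_{SKS}\leq \alpha_{S\KL S(\RL)}$ trivially, since agreement on $\ks$ implies agreement on every $\rs\subseteq \ks$. For the converse, condition on $(\ks,\ls_1,\ls_2)$ with $T(\ls_1)|_\ks\neq T(\ls_2)|_\ks$, so that $p:=T(\ls_1)|_\ks - T(\ls_2)|_\ks$ is a nonzero polynomial of total degree at most $d$ on the $\KL$-dimensional space $\ks$. An induction on $\RL$ (the base case $\RL=0$ being the ordinary Schwartz--Zippel inequality, and the inductive step slicing a random $\RL$-flat as a random $(\RL-1)$-flat together with an additional random direction) shows that the fraction of $\RL$-dimensional affine subspaces of $\ks$ on which $p$ vanishes identically is at most $(d/q)^{\RL+1}$. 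Writing $p=(d/q)^{\RL+1}$ and taking expectations, this gives $\alpha_{S\KL S(\RL)}\le \alpha_{SKS}(1-p)+p$, which rearranges to
$\alpha_{SKS}(1-p)\ge \alpha_{S\KL S(\RL)}-p$.

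The spectral step compares $\alpha_{S\KL S(\RL)}$ with $\alpha_{SRS}$. Both tests sample triples $(\rs,\ls_1,\ls_2)$ with $\rs\subseteq\ls_1\cap\ls_2$ and check agreement on $\rs$; the marginal on $\rs$ is uniform in both cases, but the conditional distribution on $(\ls_1,\ls_2)$ differs --- in $\alpha_{S\KL S(\RL)}$ the pair $(\ls_1,\ls_2)$ is reweighted by the number of $\KL$-dimensional extensions of $\rs$ contained in $\ls_1\cap\ls_2$. I will apply \lref[Lemma]{lemma:two edges sampling} (and possibly \lref[Lemma]{lemma:edge sampling}) to an appropriate subspace-inclusion bipartite graph, e.g.\ the graph with one side $\sub{\KL}$ and the other $\sub{\LL}$ under containment, whose spectral gap behaves like $q^{-(\LL-\KL)/2}$ in the spirit of \lref[Lemma]{lemma:ev_bounds}. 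Two careful applications (once to swap ``pick the flag then the two spaces'' with ``pick the two spaces then the flag'' at the $\KL$-level, and once more to descend to the $\RL$-level) should yield $|\alpha_{S\KL S(\RL)}-\alpha_{SRS}|\le (1+o(1))\,q^{-(\LL-2\KL+\RL+1)}$, where the exponent emerges as the sum of the two gaps weighted by the relevant densities. Combining the two steps gives the upper bound directly, $\alpha_{SKS}\le \alpha_{S\KL S(\RL)}\le \alpha_{SRS}+(1+o(1))q^{-(\LL-2\KL+\RL+1)}$, while the lower bound is obtained by substituting the spectral estimate into $\alpha_{SKS}(1-p)\ge \alpha_{S\KL S(\RL)}-p$ and absorbing the spectral slack into $(d/q)^{\RL+1}$.

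The main obstacle I anticipate is recovering the precise \emph{multiplicative} form $\alpha_{SRS}(1-(d/q)^{\RL+1})$ in the lower bound rather than a weaker additive correction $\alpha_{SRS}-(d/q)^{\RL+1}$: a naive combination of the two inequalities only gives the latter, so the argument must exploit the fact that on the ``bad'' pairs $(\ls_1,\ls_2)$ the agreement value is already at most $(d/q)^{\RL+1}$, so that the two quantities being compared satisfy a stronger relation of the shape $\alpha_{S\KL S(\RL)}\,(1-p)\le \alpha_{SKS}-(\text{small})$. A related subtlety is choosing the right inclusion graph(s) so that the product of spectral gaps produces exactly the exponent $\LL-2\KL+\RL+1$ and not a suboptimal one.
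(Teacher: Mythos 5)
Your high-level architecture coincides with the paper's: both proofs route through the hybrid quantity $\alpha_{\LL\KL\LL(\RL)}$, both extract the multiplicative factor from an extended Schwartz--Zippel bound of $(d/q)^{\RL+1}$ (your induction on $\RL$ is an acceptable substitute for the paper's ``$\RL+1$ independent points'' argument in \lref[Claim]{claim:extSZ}), and both attribute the additive term to a spectral estimate. The genuine gap is in the spectral step, which is the real content of the lemma and which your plan would not deliver. The paper fixes the $\RL$-flat $\rs$ (its marginal is uniform under both tests), so the only difference between $\alpha_{\LL\KL\LL(\RL)}$ and $\alpha_{\LL\RL\LL}$ is whether $\ls_1,\ls_2\supseteq\rs$ are drawn through a common $\ks$ or independently from $\sub{\LL}_\rs$. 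Writing $p_{\ks,\sigma}=\Pr_{\ls\sim\sub{\LL}_\ks}[T(\ls)|_\rs\equiv\sigma]$, the comparison is exactly $\sum_\sigma\E_\ks[p_{\ks,\sigma}^2]$ versus $\sum_\sigma\bigl(\E_\ks[p_{\ks,\sigma}]\bigr)^2$: a single second-moment computation on the fixed-$\rs$ inclusion graph $G(\sub{\KL}_\rs,\sub{\LL}_\rs)$, which telescopes to $\alpha_{\LL\RL\LL}+\lambda(G)^2$ via $\sum_\sigma(\mu_\sigma^2+\lambda^2\mu_\sigma)$ (\lref[Claim]{claim:bound inner product}, \lref[Claim]{claim:pointwise point geq line}), combined with the eigenvalue bound $\lambda(G(\sub{\KL}_\rs,\sub{\LL}_\rs))^2\le(1+o(1))q^{-(\LL-2\KL+\RL+1)}$, which is a substantive standalone computation (\lref[Lemma]{lemma:subspaces}), not a product of gaps of simpler graphs. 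Your plan --- two applications of \lref[Lemma]{lemma:two edges sampling} to $G(\sub{\KL},\sub{\LL})$ with a guessed gap of $q^{-(\LL-\KL)/2}$ --- fails on both counts: those sampling lemmas are tailored to a fixed subset $B'$ of measure $\mu$ and carry errors of the form $\lambda/\mu$, so applying them per value $\sigma$ introduces $1/\mu_\sigma$ factors that do not sum to the clean additive $\lambda^2$; and nothing in your outline produces the exponent $\LL-2\KL+\RL+1$.

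The second gap is the lower bound, which you correctly flag as an obstacle but leave unresolved. The resolution has two parts you are missing. First, the direction $\alpha_{\LL\KL\LL(\RL)}\ge\alpha_{\LL\RL\LL}$ is \emph{exact} --- it is just Jensen, $\E_\ks[p_{\ks,\sigma}^2]\ge(\E_\ks[p_{\ks,\sigma}])^2$ per $\sigma$ (\lref[Claim]{claim:inter geq point}) --- so there is no ``spectral slack'' to absorb; your proposed two-sided spectral estimate would introduce an additive error $q^{-(\LL-2\KL+\RL+1)}$ that cannot in general be hidden inside the multiplicative factor $(1-(d/q)^{\RL+1})$ when the agreement is small. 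Second, the multiplicative form comes from running Schwartz--Zippel on \emph{disagreement} probabilities conditionally on each $\ks$, namely $\Pr[T(\ls_1)|_\rs\ne T(\ls_2)|_\rs]\ge\Pr[T(\ls_1)|_\ks\ne T(\ls_2)|_\ks]\,(1-(d/q)^{\RL+1})$ (\lref[Claim]{claim:line geq inter}), not from your agreement-side inequality $\alpha_{\LL\KL\LL(\RL)}\le\alpha_{\LL\KL\LL}(1-p)+p$, whose rearrangement only yields an additive correction.
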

From \lref[Lemma]{lemma:test equivalence}, we can deduce the following corollary,
\begin{corollary}
Let $\aclc(T) =\alpha_{3,1,3}(T) $ be the success probability of \lref[Test]{test:subspaces_gen} with $\LL=3,\KL=\RL=1$, i.e checking consistency of two cubes that intersect on a line. Then for every cubes table $T$,
 \[ \acxc(T)\left(1 - \frac{d}{q}\right)\leq \aclc(T) \leq \acxc(T) + \frac{1}{q^2}(1+o(1)). \]
\end{corollary}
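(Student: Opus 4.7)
The corollary is essentially a direct instantiation of Lemma \ref{lemma:test equivalence}, so my plan is simply to plug in the appropriate parameters and verify that both bounds match the stated inequalities. I would take $\LL = 3$, $\KL = 1$, and $\RL = 0$. These clearly satisfy the hypothesis $0 \leq \RL < \KL < \LL$, and the requirement $\LL \leq m/2$ translates to $m \geq 6$, which I would note as an implicit dimension assumption (since otherwise choosing two cubes meeting in exactly a line is either impossible or too constrained to match the general distribution analyzed in the lemma).

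With these parameters, the $\alpha_{\LL\KL\LL}$ notation (in which $\RL$ is the intersection dimension on which agreement is checked) specializes to $\alpha_{3,0,3}(T) = \acxc(T)$ (two cubes through a common point, checking agreement on the point) and $\alpha_{3,1,3}(T) = \aclc(T)$ (two cubes through a common line, checking agreement on the line). So Lemma \ref{lemma:test equivalence} reads, in this case,
\[
\acxc(T)\left(1 - (d/q)^{\RL+1}\right) \leq \aclc(T) \leq \acxc(T) + (1+o(1))\, q^{-(\LL - 2\KL + \RL + 1)}.
\]
Substituting $\RL + 1 = 1$ in the left-hand factor gives $1 - d/q$, and substituting $\LL - 2\KL + \RL + 1 = 3 - 2 + 0 + 1 = 2$ in the right-hand exponent gives $q^{-2}$. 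These are precisely the bounds claimed in the corollary.

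So there is no real work to do beyond the bookkeeping of the exponents; the whole statement is a one-line invocation of the more general lemma. If I were writing this up, I would present it as a single short paragraph: state the parameter assignment, verify $0 \leq \RL < \KL < \LL \leq m/2$, match the notations $\acxc = \alpha_{3,0,3}$ and $\aclc = \alpha_{3,1,3}$, and compute the two constants $\RL + 1 = 1$ and $\LL - 2\KL + \RL + 1 = 2$. The only conceivable obstacle is a compatibility issue with the underlying distributions when $m$ is small, but since the general lemma already subsumes this via the $\LL \leq m/2$ hypothesis, there is nothing further to verify.
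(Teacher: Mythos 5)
Your proposal is correct and is exactly the paper's derivation: the corollary is the instantiation of \lref[Lemma]{lemma:test equivalence} with $\LL=3$, $\KL=1$, $\RL=0$, and your bookkeeping ($\RL+1=1$ giving the factor $1-d/q$, and $\LL-2\KL+\RL+1=2$ giving the additive $q^{-2}$ term) matches the stated bounds.
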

The corollary implies that \lref[Theorem]{thm:main} holds if we modify the test as selecting two cubes u.a.r from a pair of cubes intersecting in a line and checking consistency on the whole line.

Using \lref[Lemma]{lemma:test equivalence}, we can also compare the Raz-Safra Plane vs. Plane agreement tests where planes intersect at a point and on a line. Recall that $\aplp(T)$ is the acceptance probability of \lref[Test]{test:PlP}. Invoking \lref[Lemma]{lemma:test equivalence} with $\LL=2$, $\KL=1$ and $\RL=0$, we get the following corollary.
\begin{corollary}[{\lref[Lemma]{lem:pxp plp}} restated]
Let $T$ be a planes table, and let $\apxp(T)$ be the success probability of \lref[Test]{test:subspaces_gen} with $\LL=2, \KL=\RL=0$, i.e two planes that intersects on a point. Let $ \aplp(T)$ be the success probability of  \lref[Test]{test:PlP} from the introduction (two planes that intersects on a line), then
 \[ \apxp(T)\left(1 - \frac{d}{q}\right)\leq \aplp(T) \leq \apxp(T) + \frac{1}{q}(1+o(1)). \]
\end{corollary}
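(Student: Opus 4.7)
The plan is to derive this corollary as a direct specialization of \lref[Lemma]{lemma:test equivalence} applied with the parameters $\LL=2$, $\KL=1$, and $\RL=0$, which satisfy the hypothesis $0 \le \RL < \KL < \LL \le m/2$ of that lemma (assuming $m\ge 4$; for $m=3$ the argument is analogous since all relevant subspaces still exist).

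The first step is to identify both agreement parameters appearing in the corollary with instances of the general $\alpha_{\LL\KL\LL(\RL)}$ notation from \lref[Test]{test:subspaces_gen}. \lref[Test]{test:PlP} samples a common line, two planes through that line, and checks agreement on the whole line; in the subspace-test notation this is the case $\LL=2$, $\KL=\RL=1$, so $\aplp(T) = \alpha_{2,1,2}(T)$. The plane-at-a-point test used to define $\apxp(T)$ samples a common point, two planes through it, and checks agreement at that one point, corresponding to $\LL=2$, $\KL=\RL=0$, so $\apxp(T) = \alpha_{2,0,2}(T)$.

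With these identifications the second step is to plug $\LL=2$, $\KL=1$, $\RL=0$ into the two-sided bound of \lref[Lemma]{lemma:test equivalence},
\[ \alpha_{\LL\RL\LL}(T)\bigl(1-(d/q)^{\RL+1}\bigr) \le \alpha_{\LL\KL\LL}(T) \le \alpha_{\LL\RL\LL}(T) + (1+o(1))\,q^{-(\LL-2\KL+\RL+1)}, \]
and simplify the exponents: $(d/q)^{\RL+1}=d/q$ and $q^{-(\LL-2\KL+\RL+1)}=q^{-(2-2+0+1)}=q^{-1}$. This yields
\[ \apxp(T)\bigl(1-d/q\bigr) \le \aplp(T) \le \apxp(T) + (1+o(1))/q, \]
which is exactly the desired statement.

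Because the entire technical content has been packaged into \lref[Lemma]{lemma:test equivalence}, no substantive obstacle remains for this corollary; the only thing to verify is that the informal descriptions ``two planes intersecting on a point / on a line'' really correspond to the sampling scheme of \lref[Test]{test:subspaces_gen} (sampling a common $\KL$-dimensional flat and then two independent uniform planes containing it, with the agreement predicate checked on a uniform $\RL$-dimensional flat inside that common flat). This is immediate since $\RL=\KL$ in both tests, so the $\RL$-dimensional flat used for the check is simply the intersection itself.
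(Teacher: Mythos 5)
Your proposal is correct and matches the paper's own derivation exactly: the paper likewise obtains this corollary by invoking \lref[Lemma]{lemma:test equivalence} with $\LL=2$, $\KL=1$, $\RL=0$, identifying $\aplp(T)=\alpha_{2,1,2}(T)$ and $\apxp(T)=\alpha_{2,0,2}(T)$, and simplifying the exponents to $d/q$ and $q^{-1}$.
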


\subsection{Proof of \lref[Lemma]{lemma:test equivalence}}
We prove a few claims that together with the observation $\alpha_{\LL \KL \LL(\RL)}(T)\geq\alpha_{\LL \KL \LL}(T)$, prove the lemma.

The following claim shows that two distinct low degree polynomials agree on a random subspace of fixed dimension with very small probability.
\begin{claim}
\label{claim:extSZ}
Let $P_1, P_2 : \F^t \rightarrow \F$ be two distinct degree $d$ polynomials. For $\RL\leq t$
$$\Pr_{\rs\in \sub{\RL}}\left[(P_1)_{|\rs} \equiv (P_2)_{|\rs}\right] \leq \left(\frac{d}{q}\right)^{\RL+1}.$$
\end{claim}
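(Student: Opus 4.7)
Let $Q := P_1 - P_2$, a nonzero polynomial of degree at most $d$ in $t$ variables; the event $(P_1)_{|\rs} \equiv (P_2)_{|\rs}$ is precisely $Q|_\rs \equiv 0$. The plan is to generate a uniformly random $\rs \in \sub{\RL}$ as the affine span of $\RL+1$ i.i.d.\ uniform points in $\F^t$, so that the constraints $Q(y_i) = 0$ decouple across $i$ and each factor is handled by ordinary Schwartz--Zippel.

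Concretely, I would draw $y_0, y_1, \ldots, y_\RL$ independently and uniformly from $\F^t$, and let $\rs$ be the affine subspace they generate. Let $B$ be the event that $y_0,\ldots,y_\RL$ are affinely independent; this is meaningful precisely because $\RL \leq t$, and a direct count gives $\Pr[B] = \prod_{i=0}^{\RL-1}(1 - q^{i-t}) = 1-o(1)$ for large $q$. Conditional on $B$, the affine group of $\F^t$ acts transitively on ordered affinely independent $(\RL+1)$-tuples spanning any given element of $\sub{\RL}$, so conditionally on $B$, the resulting $\rs$ is uniformly distributed in $\sub{\RL}$.

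If $Q|_\rs \equiv 0$, then in particular $Q(y_i) = 0$ for every $i$. Since the $y_i$ are independent and $Q$ is nonzero of degree at most $d$, standard Schwartz--Zippel gives $\Pr[Q(y_i) = 0] \leq d/q$ coordinatewise, and independence yields
\[
\Pr\bigl[\,Q(y_0) = \cdots = Q(y_\RL) = 0\,\bigr] \;\leq\; \bigl(d/q\bigr)^{\RL+1}.
\]
Combining this with the uniform-sampling identity,
\[
\Pr_{\rs \sim \sub{\RL}}\bigl[Q|_\rs \equiv 0\bigr] \;=\; \Pr[\,Q|_\rs \equiv 0 \mid B\,] \;\leq\; \frac{(d/q)^{\RL+1}}{\Pr[B]},
\]
which matches the claimed $(d/q)^{\RL+1}$ up to a $1/\Pr[B] = 1+o(1)$ factor that is absorbed into the $(1+o(1))$ conventions of the enclosing \lref[Lemma]{lemma:test equivalence}.

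There is essentially no real obstacle here; the only subtlety is verifying that the conditioning on affine independence costs only a $1+o(1)$ factor, which is exactly why the hypothesis $\RL \leq t$ appears in the statement. The degenerate boundary case $\RL=t$ is trivial since $\sub{t}=\{\F^t\}$ and $Q\not\equiv 0$.
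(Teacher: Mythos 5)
Your proof is the same in spirit as the paper's --- both reduce to Schwartz--Zippel at $\RL+1$ independent uniform points --- but it proves a slightly weaker statement than the one claimed, because of how you handle degeneracy. Conditioning on the event $B$ that $y_0,\dots,y_\RL$ are affinely independent costs you a factor $1/\Pr[B]$, so you end with $\left(\frac{d}{q}\right)^{\RL+1}(1+o(1))$ rather than the exact bound $\left(\frac{d}{q}\right)^{\RL+1}$. The paper avoids this entirely: it samples $x_0,x_1,\dots,x_\RL$ i.i.d.\ uniform \emph{without} conditioning, and then picks $\rs$ uniformly among the $\RL$-dimensional affine subspaces containing $x_0+\mathtt{span}(x_1,\dots,x_\RL)$ (a nontrivial extra step only in the degenerate case). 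By affine symmetry the resulting $\rs$ is still uniform in $\sub{\RL}$, and the $\RL+1$ points $x_0, x_0+x_1,\dots,x_0+x_\RL$ are i.i.d.\ uniform and always lie in $\rs$, so agreement on $\rs$ implies agreement at $\RL+1$ independent uniform points with no conditioning and hence the clean bound. Your remark that the $1+o(1)$ loss is ``absorbed by the enclosing lemma'' is not quite accurate as stated: the lower-bound direction of \lref[Lemma]{lemma:test equivalence}, which goes through \lref[Claim]{claim:line geq inter}, is stated with the exact factor $1-\left(\frac{d}{q}\right)^{\RL+1}$, so with your version of the claim that statement would also have to be weakened (harmlessly, but it is a change). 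If you want the claim exactly as written, adopt the paper's unconditioned sampling; otherwise your argument is fine but quantify the loss explicitly as $1/\Pr[B]\le (1-\RL/q)^{-1}$.
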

\begin{proof}
Consider the following way of choosing an $\RL$ dimensional {\em affine} subspace from $\sub{\RL}$ uniformly at random: Pick $x_0, x_1, x_2, \ldots, x_\RL$ from $\F_q^t$ independently and u.a.r. Then pick a $\RL$ dimensional affine subspace $\rs$ containing $\{x_0 + \mathtt{span}(x_1, x_2, \ldots, x_\RL)\}$ u.a.r ($\rs$ is determined by $x_0, x_1, x_2, \ldots, x_\RL$, unless $\dim{\mathtt{span}(x_1, x_2, \ldots, x_\RL)}< \RL$). It is easy to see that $\rs$ is distributed uniformly in $\sub{\RL}$. Now, $P_1$ and $P_2$ agreeing on the whole subspace $\rs$ implies that they agree on the points $\{x_0, x_0+x_1, x_0+x_2, \ldots, x_0 + x_\RL\}$ as all these points are contained in $\rs$. Therefore,
\begin{align*}
\Pr_{\rs\in \sub{\RL}}[(P_1)_{|\rs} \equiv (P_2)_{|\rs}]
&\leq \Pr_{x_0, x_1, x_2, \ldots, x_\RL\sim \F^t}\left[P_1(x_0) = P_2(x_0) \wedge_{i=1}^{\RL} P_1(x_0+x_i) = P_2(x_0+x_i)\right]  \\
& = \left(\Pr_{x\in \F_q^t}\left[P_1(x) = P_2(x)\right]\right)^{\RL+1}\leq \left(\frac{d}{q}\right)^{\RL+1},
\end{align*}
where the last inequality is because two different degree $d$ polynomial agree on at most $\frac{d}{q}$ fraction of the points (Schwartz-Zippel lemma).
\end{proof}
\begin{claim}\label{claim:bound inner product}
Let $M_{m\times n}$ be the adjacency matrix of a bi regular bipartite graph $G$, and let $f$ be a $n$-dimensional $\{0,1\}$ vector such that $\E[f] = \mu$. Then
\[\langle Mf,Mf \rangle \leq \mu^2+\lambda(G)^2\mu. \]
\end{claim}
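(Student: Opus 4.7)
The plan is a standard spectral decomposition argument. First I would decompose $f$ into its component along $\one$ and its component orthogonal to $\one$: write $f = \mu\one + g$ where $g = f - \mu\one$. Since the normalized inner product satisfies $\langle f,\one\rangle = \mu$, we get $\langle g,\one\rangle = 0$, so $g \perp \one$.

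Next, I would apply $M$ to this decomposition. Under the chosen normalization (Definition \ref{def:lambda}) we have $M\one = \one$, so $Mf = \mu\one + Mg$. The key observation is that $Mg$ remains orthogonal to $\one$: taking the adjoint $M^*$ with respect to the normalized inner product, bi-regularity gives $M^*\one = \one$, hence $\langle Mg,\one\rangle = \langle g, M^*\one\rangle = \langle g,\one\rangle = 0$. Thus by Pythagoras,
\[
\langle Mf, Mf\rangle = \mu^2 \norm{\one}^2 + \norm{Mg}^2 = \mu^2 + \norm{Mg}^2.
\]

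Now I would invoke the defining property of $\lambda(G)$: since $g\perp\one$, we have $\norm{Mg} \leq \lambda(G)\norm{g}$, so $\norm{Mg}^2 \leq \lambda(G)^2 \norm{g}^2$. Finally I would use that $f$ is a $\{0,1\}$ vector: this makes $\norm{f}^2 = \frac{1}{n}\sum_i f_i^2 = \frac{1}{n}\sum_i f_i = \mu$, and therefore
\[
\norm{g}^2 = \norm{f}^2 - \mu^2\norm{\one}^2 = \mu - \mu^2 \leq \mu.
\]
Combining gives $\langle Mf, Mf\rangle \leq \mu^2 + \lambda(G)^2 \mu$, as claimed.

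There is no real obstacle here; the only subtle point is verifying that the chosen normalization (so that $\norm{M\one}=1$) makes $M$ and its adjoint both fix $\one$, which follows from bi-regularity and the scaling of the normalized inner product. The rest is Pythagoras plus the elementary identity $\norm{f}^2=\mu$ available because $f$ is $\{0,1\}$-valued.
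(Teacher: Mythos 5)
Your proposal is correct and follows essentially the same route as the paper: the paper also decomposes $f$ as $\mu\one$ plus an orthogonal component, uses $\langle Mf,Mf\rangle = \mu^2 + \norm{Mf_1^\perp}^2$, bounds the second term by $\lambda(G)^2\norm{f_1^\perp}^2$, and uses $\norm{f}^2=\mu$ (from $f$ being $\{0,1\}$-valued) to get $\norm{f_1^\perp}^2\le\mu$. Your explicit verification that the cross term vanishes via $M^*\one=\one$ is a point the paper leaves implicit, but the argument is the same.
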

\begin{proof}
Let $\one$ be the unit vector. We write $f$ as $f = f_1 + f_1^\perp$ where $f_1$ is in the direction of $\one$, the singular vector with the maximal singular value, and $f_1^\perp$ is its orthogonal component. We note that $f_1 = \mu \one$, and hence $\langle f_1,f_1 \rangle = \mu^2$. Also,
\[\mu = \langle f,f \rangle = \langle f_1+f_1^\perp,f_1+f_1^\perp \rangle = \langle f_1,f_1 \rangle + \langle f_1^\perp,f_1^\perp \rangle\geq \langle f_1^\perp,f_1^\perp \rangle  .\]
Using this we can bound:
\begin{align*}
\langle Mf,Mf \rangle =&
\langle Mf_1+Mf_1^\perp,Mf_1+Mf_1^\perp) \rangle \\=&
\langle f_1,f_1 \rangle +\langle Mf_1^\perp,Mf_1^\perp \rangle \\
\leq&\mu^2 +\lambda(G)^2\langle f_1^\perp,f_1^\perp \rangle \\
\leq& \mu^2  + \lambda(G)^2\mu.
\end{align*}
\end{proof}

\begin{claim}\label{claim:inter geq point}
 $\alpha_{\LL \KL\LL(\RL)}(T)\geq\alpha_{\LL \RL\LL}(T)$.
\end{claim}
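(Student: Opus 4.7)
The plan is to re-express both tests in a comparable form by first conditioning on the small subspace $\rs$, and then use a convexity (Jensen) argument on the joint distribution of the two large subspaces.

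First I would rewrite the sampling procedure of the $(\LL,\KL,\LL(\RL))$ test. By symmetry of the inclusion relation on affine subspaces, the joint distribution of $(\rs,\ks)$ obtained by choosing $\ks \in \sub{\KL}$ uniformly and then $\rs \subseteq \ks$ uniformly in dimension $\RL$ is identical to the one obtained by choosing $\rs \in \sub{\RL}$ uniformly and then $\ks \supseteq \rs$ uniformly in $\sub{\KL}_{\rs}$. Consequently, the test $\alpha_{\LL\KL\LL(\RL)}(T)$ can equivalently be written as: pick $\rs \in \sub{\RL}$ uniformly, then $\ks \in \sub{\KL}_\rs$ uniformly, then $\ls_1,\ls_2 \in \sub{\LL}_\ks$ independently and uniformly, and accept iff $T(\ls_1)|_\rs = T(\ls_2)|_\rs$. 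The test $\alpha_{\LL\RL\LL}(T)$ does the analogous sampling with $\KL$ replaced by $\RL$, which means skipping the intermediate $\ks$ and drawing $\ls_1,\ls_2$ independently from $\sub{\LL}_\rs$.

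Next I would condition on the sampled $\rs$ in both tests. For a fixed $\rs$, let $P$ range over the (finitely many) degree $d$ polynomials on $\rs$ and set
\[
p_{\rs,\ks,P} \;=\; \Pr_{\ls \in \sub{\LL}_\ks}\!\left[T(\ls)|_\rs = P\right], \qquad p_{\rs,P} \;=\; \Pr_{\ls \in \sub{\LL}_\rs}\!\left[T(\ls)|_\rs = P\right].
\]
The two-step sampling identity from the previous paragraph yields $\E_{\ks \in \sub{\KL}_\rs}[p_{\rs,\ks,P}] = p_{\rs,P}$. The conditional acceptance probability of the second test is $\sum_P p_{\rs,P}^2$ (since $\ls_1,\ls_2$ are independent given $\rs$), and the conditional acceptance probability of the first test is $\E_{\ks \in \sub{\KL}_\rs}\bigl[\sum_P p_{\rs,\ks,P}^2\bigr]$ (since $\ls_1,\ls_2$ are independent given $\ks$).

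Finally I would apply Jensen's inequality to $x \mapsto x^2$: for each $P$,
\[
\E_{\ks \in \sub{\KL}_\rs}\!\left[p_{\rs,\ks,P}^2\right] \;\ge\; \left(\E_{\ks \in \sub{\KL}_\rs}[p_{\rs,\ks,P}]\right)^2 \;=\; p_{\rs,P}^2.
\]
Summing over $P$ and averaging over $\rs$ gives $\alpha_{\LL\KL\LL(\RL)}(T) \geq \alpha_{\LL\RL\LL}(T)$. There is no genuine obstacle here: the argument is purely a positive-correlation / Jensen observation, and the only care needed is the symmetry argument used to reorder the sampling of $\ks$ and $\rs$, which is immediate from the transitive action of the affine group on flags.
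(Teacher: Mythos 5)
Your proposal is correct and follows essentially the same route as the paper's proof: fix $\rs$, decompose the acceptance probability over the possible restrictions $\sigma$ (your $P$) of $T(\ls)$ to $\rs$, note that the two-step sampling through $\ks$ preserves the marginal of $\ls$ in $\sub{\LL}_\rs$, and apply convexity of $x\mapsto x^2$ to get $\E_\ks[p_\ks^2]\ge(\E_\ks[p_\ks])^2$ before summing over $\sigma$ and averaging over $\rs$. The only cosmetic difference is that the paper indexes the decomposition by values $\sigma\in\F^{q^\RL}$ rather than by polynomials on $\rs$, which amounts to the same partition of $\sub{\LL}_\rs$.
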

\begin{proof}
 We start by fixing $\rs\in\sub{\RL},\sigma\in\mathbb{F}^{q^\RL}$.
For each $\KL$ dimensional subspace $\ks\in\sub{\KL}_\rs$, denote by $p_\ks$ the following probability $p_\ks=\Pr_{\ls\sim\sub{\LL}_\ks}[T(\ls)_{|\rs}\equiv \sigma]$. In this notation
\begin{equation}
\label{eq:convexity}
\Pr_{\substack{\ks\sim\sub{\KL}_\rs\\ \ls_1,\ls_2\sim \sub{\LL}_\ks}}[T(\ls_1)_{|\rs} \equiv  T(\ls_2)_{|\rs}\equiv \sigma] = \E_\ks[p_\ks^2]\geq \left(\E_\ks[p_\ks] \right)^2  = \Pr_{\ls_1,\ls_2\sim \sub{\LL}_\rs}[T(\ls_1)_{|\rs} \equiv  T(\ls_2)_{|\rs}\equiv \sigma].
\end{equation}

Now, we average over $\rs,\sigma$ to get $\alpha_{\LL \RL\LL}(T)$ and $\alpha_{\LL \KL\LL(\RL)}(T)$:
\begin{equation}
\label{eq:LRL}
\alpha_{\LL \RL\LL}(T) = \Pr_{\substack{\rs\sim\sub{\RL}\\ \ls_1,\ls_2\sim \sub{\LL}_\rs}}[T(\ls_1)_{|\rs} \equiv  T(\ls_2)_{|\rs}] =
\E_{\rs\sim\sub{\RL}}\left[\sum_{\sigma\in\mathbb{F}^{q^\RL}}\Pr_{\ls_1,\ls_2\sim \sub{\LL}_\rs}[T(\ls_1)_{|\rs} \equiv  T(\ls_2)_{|\rs}\equiv \sigma] \right] .
\end{equation}
Picking a uniform $\rs\in\sub{\RL}$ then $\ks\in\sub{\KL}_\rs$ is the same as picking $\ks\in\sub{\KL}$ and then a random $\RL$ dimensional subspace $\rs$ in $\ks$, so by definition
\begin{equation}
\label{eq:LKLR}
\alpha_{\LL \KL\LL(\RL)}(T) =
\Pr_{\substack{\rs\sim\sub{\RL},\ks\sim\sub{\KL}_\rs\\  \ls_1,\ls_2\sim \sub{\LL}_\ks}}[T(\ls_1)_{|\rs} \equiv  T(\ls_2)_{|\rs}] =
\E_{\rs\sim\sub{\RL}}\left[\sum_{\sigma\in\mathbb{F}^{q^\RL}}\Pr_{\substack{\ks\sim\sub{\KL}_\rs\\ \ls_1,\ls_2\sim \sub{\LL}_\ks}}[T(\ls_1)_{|\rs} \equiv  T(\ls_2)_{|\rs}\equiv \sigma] \right].
\end{equation}
Using (\ref{eq:convexity}), (\ref{eq:LRL}) and (\ref{eq:LKLR}), we get $\alpha_{\LL \KL\LL(\RL)}(T)\geq\alpha_{\LL \RL\LL}(T)$.
\end{proof}

\begin{claim}\label{claim:line geq inter}
$\alpha_{\LL\KL\LL}(T)\geq\alpha_{\LL\KL\LL(\RL)}(T)\left(1 - \left(\frac{d}{q}\right)^{\RL+1}\right)$.
\end{claim}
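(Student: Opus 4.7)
The plan is a pointwise application of Claim \ref{claim:extSZ} after coupling the two tests on their shared outer randomness. The key observation is that $\alpha_{\LL\KL\LL}(T)$ and $\alpha_{\LL\KL\LL(\RL)}(T)$ use the same marginal distribution on the outer triple $(\ks,\ls_1,\ls_2)$---namely $\ks\in\sub{\KL}$ uniform and then $\ls_1,\ls_2\in\sub{\LL}_\ks$ uniform and independent---and differ only in whether equality of the restrictions is tested on the entire $\KL$-dimensional subspace $\ks$ or on a uniformly chosen $\RL$-dimensional subspace $\rs\subseteq\ks$. Accordingly I would condition on the outer triple and analyze only the effect of the inner choice of $\rs$.

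For a fixed outer triple $(\ks,\ls_1,\ls_2)$, set $P_i:=T(\ls_i)|_\ks$; each is a polynomial of degree at most $d$ on $\ks\cong\F^\KL$. If $P_1\equiv P_2$, then $P_1|_\rs\equiv P_2|_\rs$ for every $\rs\subseteq\ks$, so this triple contributes $1$ to the inner acceptance probability. If instead $P_1\not\equiv P_2$, Claim \ref{claim:extSZ} instantiated with $t=\KL$ gives
\[
\Pr_{\rs\sim\sub{\RL}_\ks}\bigl[P_1|_\rs\equiv P_2|_\rs\bigr]\;\le\;\left(\frac{d}{q}\right)^{\RL+1}.
\]
Taking expectation over $(\ks,\ls_1,\ls_2)$ and splitting on the event $\{P_1\equiv P_2\}$,
\[
\alpha_{\LL\KL\LL(\RL)}(T)\;\le\;\alpha_{\LL\KL\LL}(T)\;+\;\bigl(1-\alpha_{\LL\KL\LL}(T)\bigr)\left(\frac{d}{q}\right)^{\RL+1},
\]
and rearranging this inequality yields the desired bound $\alpha_{\LL\KL\LL}(T)\ge\alpha_{\LL\KL\LL(\RL)}(T)\bigl(1-(d/q)^{\RL+1}\bigr)$.

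There is no genuinely hard step in the argument: the only thing worth checking is the coupling, i.e.\ that the two tests share the same distribution on $(\ks,\ls_1,\ls_2)$ and that, conditional on $\ks$, the inner $\rs$ is uniform on $\sub{\RL}_\ks$ in Test \ref{test:subspaces_gen}. Both are immediate from the definitions, and then Claim \ref{claim:extSZ}---which is the actual substance---does the rest.
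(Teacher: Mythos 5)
Your argument is essentially identical to the paper's: both condition on the outer triple $(\ks,\ls_1,\ls_2)$, whose distribution is indeed shared by the two tests, and both invoke Claim~\ref{claim:extSZ} with $t=\KL$ on the event $T(\ls_1)_{|\ks}\neq T(\ls_2)_{|\ks}$ to arrive at $\alpha_{\LL\KL\LL(\RL)}(T)\le \alpha_{\LL\KL\LL}(T)+\bigl(1-\alpha_{\LL\KL\LL}(T)\bigr)(d/q)^{\RL+1}$ (the paper phrases the same bound through rejection probabilities). One small caveat, which applies equally to the paper's own last line: rearranging that inequality literally gives $\alpha_{\LL\KL\LL}(T)\ge\bigl(\alpha_{\LL\KL\LL(\RL)}(T)-(d/q)^{\RL+1}\bigr)/\bigl(1-(d/q)^{\RL+1}\bigr)$, which is not the multiplicative form stated in the claim unless the agreement is large, so calling the final step a mere ``rearranging'' inherits the same slight imprecision as the original.
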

\begin{proof}
By the definition of the agreement,
\[\alpha_{\LL\KL\LL}(T) = 1- \E_{\ks\sim \sub{\KL}}\left[\Pr_{\ls_1,\ls_2\sim \sub{\LL}_\ks}[T(\ls_1)_{|_\ks}\neq T(\ls_2)_{|_\ks}] \right], \]

and
\[\alpha_{\LL\KL\LL(\RL)}(T) = 1- \E_{\ks\sim \sub{\KL}}\left[\Pr_{\substack{\rs\sim\ks,\\ \ls_1,\ls_2\sim\sub{\LL}_\ks}}[T(\ls_1)_{|\rs}\neq T(\ls_2)_{|\rs}]\right], \]
where we use $\rs\sim \ks$ to denote a random $\RL$ dimensional subspace in $\ks$.  For every subspace $\ks\in\sub{\KL}$, $\rs\subseteq\ks$ is uniform and is independent of $\ls_1,\ls_2$.
\begin{align*}
\Pr_{\substack{\rs\sim\ks,\\ \ls_1,\ls_2\sim\sub{\LL}_\ks}}[T(\ls_1)_{|\rs}\neq T(\ls_2)_{|\rs}] =& \Pr_{\substack{\rs\sim\ks,\\ \ls_1,\ls_2\sim\sub{\LL}_\ks}}[T(\ls_1)_{|\ks}\neq T(\ls_2)_{|\ks},T(\ls_1)_{|\rs}\neq T(\ls_2)_{|\rs}]\\
=& \Pr_{\ls_1,\ls_2\sim\sub{\LL}_\ks}[T(\ls_1)_{|\ks}\neq T(\ls_2)_{|\ks}]\cdot\\ &
\Pr_{\substack{\rs\sim\ks,\\ \ls_1,\ls_2\sim\sub{\LL}_\ks}}[T(\ls_1)_{|\rs}\neq T(\ls_2)_{|\rs}\mid T(\ls_1)_{|\ks}\neq T(\ls_2)_{|\ks}]\\
\geq& \Pr_{\ls_1,\ls_2\sim\sub{\LL}_\ks}[T(\ls_1)_{|\ks}\neq T(\ls_2)_{|\ks}]\cdot\left(1 - \left(\frac{d}{q} \right)^{\RL+1}\right).
\end{align*}
The lower bound on the probability in the last inequality is as follows: the event $T(\ls_1)_{|\ks}\neq T(\ls_2)_{|\ks}$ implies that the degree $d$ polynomials corresponding to $T(\ls_1)_{|\ks}$ and $ T(\ls_2)_{|\ks}$ are distinct.  Thus, using \lref[Claim]{claim:extSZ} $\Pr_{\rs\sim \ks}[T(\ls_1)_{|\rs}\equiv  T(\ls_2)_{|\rs}]\leq (d/q)^{\RL+1}$.
Therefore, for a $\KL$ dimensional subspace $\ks\in\sub{\KL}$,
\[\Pr_{\substack{\rs\sim\ks,\\ \ls_1,\ls_2\sim\sub{\LL}_\ks}}[T(\ls_1)_{|\rs}\neq T(\ls_2)_{|\rs}] \geq  \Pr_{\ls_1,\ls_2\sim\sub{\LL}_\ks}[T(\ls_1)_{|\ks}\neq T(\ls_2)_{|\ks}] \left(1 - \left(\frac{d}{q}\right)^{\RL+1}\right).\]
Finally, taking the expectation of the inequality over $\ks$ finishes the proof.
\end{proof}


We first state a lemma about an expansion of the kind of inclusion graphs which we will be dealing with in analyzing the \lref[Test]{test:subspaces_gen}, the proof of which appears in \lref[Appendix]{section:inclusion proofs}.
\begin{lemma}\label{lemma:subspaces}
	Let $\RL\leq \KL< \LL \leq \frac{m}{2}$ be integers, and let $G$ be the inclusion graph $G=G(\sub{\KL}_\rs, \sub{\LL}_\rs)$ for a $\RL$ dimensional subspace $\rs$, where $\rs\neq\emptyset$. Then,
	\[ \lambda(G)^2 \leq  (1+o(1))\cdot {q^{-(\LL-2\KL+\RL+1)}}.\]
\end{lemma}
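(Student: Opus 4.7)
My plan is to reduce the problem to the classical bipartite Grassmann incidence graph and analyze its spectrum directly.

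First, I exploit that $\rs\neq\emptyset$: pick $x_0\in\rs$ and translate by $-x_0$, so that $\rs$ becomes a linear subspace of dimension $\RL$ through the origin. Every affine subspace containing $\rs$ then passes through $0$ and is linear, and quotienting by $\rs$ yields an inclusion-preserving bijection between $\sub{k}_\rs$ and the set of $(k-\RL)$-dim linear subspaces of $\F^{m-\RL}$ for each $k\in\{\KL,\LL\}$. Writing $K':=\KL-\RL$, $L':=\LL-\RL$, $M':=m-\RL$, the target becomes $\lambda(H)^2\le(1+o(1))\,q^{-(L'-2K'+1)}$ for the bipartite inclusion graph $H$ between $K'$- and $L'$-dim linear subspaces of $\F^{M'}$; note $M'\ge 2L'$ by the hypothesis $\LL\le m/2$.

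Next, I reinterpret $\lambda(H)^2$ as the second largest eigenvalue of the symmetric two-step Markov chain $P$ on $K'$-dim linear subspaces: from $\ks_1$, pick a uniform $L'$-dim superspace $\ls$ and then a uniform $K'$-dim subspace $\ks_2$ of $\ls$. By $GL(M',q)$-invariance, $P_{\ks_1,\ks_2}$ depends only on $t:=\dim(\ks_1\cap\ks_2)$, and an elementary count gives
\[
P_{\ks_1,\ks_2}\;=\;\frac{\binom{M'-2K'+t}{L'-2K'+t}_q}{\binom{M'-K'}{L'-K'}_q\binom{L'}{K'}_q},
\]
understood as $0$ whenever $2K'-t>L'$.

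To bound $\lambda_2(P)$, the cleanest route is via the Gelfand-pair decomposition for $(GL(M',q),GL(K',q)\times GL(M'-K',q))$: the regular representation on the Grassmannian splits into $K'+1$ isotypic components on which $P$ acts by scalars $\mu_0=1,\mu_1,\dots,\mu_{K'}$, each an explicit ratio of $q$-binomials. Using the asymptotic $\binom{a}{b}_q=(1+o(1))q^{b(a-b)}$ --- valid uniformly here because $M'\ge 2L'$ --- the scalar $\mu_1$ reduces to $(1+o(1))\,q^{-(L'-2K'+1)}$, which dominates the remaining $\mu_j$ and yields the claim. An elementary alternative is the trace bound $\mu_1^2\le\sum_{j\ge 1}\mu_j^2=\mathrm{Tr}(P^2)-1$: one stratifies $\sum_{\ks_1,\ks_2}P_{\ks_1,\ks_2}^2$ by the intersection dimension $t$, using the explicit $q$-binomial count of partners $\ks_2$ at each $t$ to show that the stratum $t=K'-1$ dominates and contributes $(1+o(1))q^{-2(L'-2K'+1)}$ after normalization; square-rooting then gives the bound.

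The main obstacle in either route is the $q$-binomial bookkeeping needed to isolate the exact leading exponent --- computing $\mu_1$ in closed form, or equivalently identifying which stratum dominates the trace and with what coefficient --- and verifying that all other contributions are genuinely subleading, which is where the hypothesis $\LL\le m/2$ enters crucially.
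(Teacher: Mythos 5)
Your reduction to the Grassmannian of $\F^{M'}$ (with $M'=m-\RL$, $K'=\KL-\RL$, $L'=\LL-\RL$) and your formula for the two-step transition probabilities are both correct, and your first route --- diagonalizing $P$ over the isotypic components of the Grassmann association scheme --- can in principle be completed, since the eigenvalues of this operator are classical (they appear, e.g., in the reference [BCN89] that the paper itself invokes for $T_{\KL,\KL-1}$). But as written, that route defers the entire content of the lemma to ``$q$-binomial bookkeeping'': computing $\mu_1$ in closed form and verifying that it dominates $\mu_2,\dots,\mu_{K'}$ \emph{is} the statement to be proved, and you supply no computation. The paper's argument is close in spirit but deliberately avoids eigenvalue formulas: it writes the two-step operator as $T=p_{\RL}J+\sum_{\RL'>\RL}(p_{\RL'}-p_{\RL})G_{\RL'}$, where $G_{\RL'}$ is the unnormalized adjacency matrix of the relation ``$\dim(\ks_1\cap\ks_2)=\RL'$''; on $\one^{\perp}$ the $J$ term vanishes, each remaining term is bounded crudely by its degree via $\norm{G_{\RL'}v}\le\deg(G_{\RL'})\norm{v}$, and so $\lambda(T)\le\sum_{\RL'\ge\RL+1}p_{\RL'}\deg(G_{\RL'})$, a sum of explicit walk probabilities whose leading term is the $\RL'=\RL+1$ one.

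Your ``elementary alternative,'' the trace bound, is genuinely broken. First, for symmetric $P$ one has $\mathrm{Tr}(P^2)-1=\sum_{j\ge1}m_j\mu_j^2$ with multiplicities $m_j$, not $\sum_{j\ge1}\mu_j^2$; the inequality $\mu_1^2\le\mathrm{Tr}(P^2)-1$ survives, but the right-hand side is enormous. The diagonal stratum $t=K'$ alone contributes
\[
\sum_{\ks}P_{\ks,\ks}^{2}\;=\;\binom{M'}{K'}_{q}\cdot\binom{L'}{K'}_{q}^{-2}\;=\;(1+o(1))\,q^{K'(M'-K')-2K'(L'-K')}\;=\;(1+o(1))\,q^{K'(M'+K'-2L')},
\]
which is at least $(1-o(1))q^{K'^2}\ge(1-o(1))q$ because $M'\ge 2L'$. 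Hence $\mathrm{Tr}(P^2)-1$ is at least of order $q$, the stratum $t=K'-1$ does \emph{not} dominate as you claim, and square-rooting yields a bound of order $\sqrt{q}$ rather than $q^{-(L'-2K'+1)}$. (Spectrally: the components with $j$ near $K'$ have multiplicities $m_j\approx q^{j(M'-j)}$ that swamp the $j=1$ term in the trace even though their eigenvalues are tiny.) To get an elementary argument you must separate the intersection strata \emph{before} squaring, which is exactly what the paper's degree bound accomplishes.
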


\begin{claim} \label{claim:pointwise point geq line}
$\alpha_{\LL\KL\LL(\RL)}(T)\leq \alpha_{\LL\RL\LL(\RL)}(T) + \lambda(G)^2$
where $G$ is the inclusion graph $G=G(\sub{\KL}_\rs, \sub{\LL}_\rs)$ for an $\RL$ dimensional subspace $\rs$.
\end{claim}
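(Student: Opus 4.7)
The plan is to reduce the claim to the spectral bound of \lref[Claim]{claim:bound inner product} applied to indicator vectors of ``agreement classes'' on the inclusion graph $G = G(\sub{\KL}_\rs, \sub{\LL}_\rs)$. The key observation is that both tests ultimately check whether two $\LL$-dimensional subspaces assign the same restriction to the same $\RL$-dimensional subspace $\rs$; they differ only in how the two $\LL$-subspaces are sampled conditional on containing $\rs$. In the $\LL\RL\LL(\RL)$ test they are sampled independently in $\sub{\LL}_\rs$, while in the $\LL\KL\LL(\RL)$ test they are first funneled through a common $\KL$-subspace $\ks \supseteq \rs$.

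First, fix $\rs \in \sub{\RL}$ and, for each possible restriction value $\sigma : \rs \to \F$, define
\[ F_\sigma = \{\ls \in \sub{\LL}_\rs \mid T(\ls)_{|\rs} \equiv \sigma\}, \qquad \mu_\sigma = \frac{|F_\sigma|}{|\sub{\LL}_\rs|}. \]
Then, splitting the acceptance event of the $\LL\RL\LL(\RL)$ test according to the common restriction $\sigma$,
\[ \alpha_{\LL\RL\LL(\RL)}(T) = \E_{\rs}\left[\sum_\sigma \mu_\sigma^2 \right]. \]
For the $\LL\KL\LL(\RL)$ test, observe that choosing $\ks$ uniformly in $\sub{\KL}$ and then an $\RL$-subspace $\rs \subseteq \ks$ uniformly yields the same joint distribution as first choosing $\rs$ uniformly in $\sub{\RL}$ and then $\ks$ uniformly in $\sub{\KL}_\rs$. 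Hence, letting $p_{\ks, \sigma} = \Pr_{\ls \sim \sub{\LL}_\ks}[\ls \in F_\sigma]$,
\[ \alpha_{\LL\KL\LL(\RL)}(T) = \E_{\rs}\left[\sum_\sigma \E_{\ks \sim \sub{\KL}_\rs}\bigl[p_{\ks,\sigma}^2\bigr]\right]. \]

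Now the spectral input enters. Let $M$ be the normalized adjacency matrix of the bipartite inclusion graph $G = G(\sub{\KL}_\rs, \sub{\LL}_\rs)$, and let $f_\sigma \in \{0,1\}^{\sub{\LL}_\rs}$ be the indicator vector of $F_\sigma$. Then $(Mf_\sigma)(\ks) = p_{\ks,\sigma}$, so
\[ \E_{\ks \sim \sub{\KL}_\rs}\bigl[p_{\ks,\sigma}^2\bigr] = \langle M f_\sigma, M f_\sigma \rangle \leq \mu_\sigma^2 + \lambda(G)^2 \mu_\sigma, \]
where the inequality is exactly \lref[Claim]{claim:bound inner product}. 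Summing over $\sigma$ and using $\sum_\sigma \mu_\sigma = 1$ yields
\[ \sum_\sigma \E_{\ks}\bigl[p_{\ks,\sigma}^2\bigr] \leq \sum_\sigma \mu_\sigma^2 + \lambda(G)^2. \]
Finally, taking the expectation over $\rs$ gives the desired bound
\[ \alpha_{\LL\KL\LL(\RL)}(T) \leq \alpha_{\LL\RL\LL(\RL)}(T) + \lambda(G)^2. \]
There is no real obstacle here; the only thing to be careful about is the exchange-of-sampling identity that allows us to first fix $\rs$ and then average $\ks$ over $\sub{\KL}_\rs$, which holds because every $\ks \in \sub{\KL}$ contains the same number of $\RL$-subspaces.
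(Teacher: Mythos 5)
Your proof is correct and follows essentially the same route as the paper: fix $\rs$, decompose the acceptance probability by the restriction value $\sigma$, identify $\E_{\ks}[p_{\ks,\sigma}^2]$ with $\langle Mf_\sigma, Mf_\sigma\rangle$ on the inclusion graph $G(\sub{\KL}_\rs,\sub{\LL}_\rs)$, apply \lref[Claim]{claim:bound inner product}, sum over $\sigma$ using $\sum_\sigma\mu_\sigma=1$, and average over $\rs$. No gaps.
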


\begin{proof}
Fix an $\RL$ dimensional affine subspace $\rs\in \sub{\RL}$. We prove the following inequality:
\begin{equation}
\label{eq:almostclaim}
\Pr_{\substack{\ks\sim \sub{\KL}_\rs,\\\ls_1,\ls_2\sim \sub{\LL}_\ks}}[T(\ls_1)_{|\rs} \equiv T(\ls_2)_{|\rs}  ] \leq \Pr_{\ls_1,\ls_2\sim\sub{\LL}_\rs}[T(\ls_1)_{|\rs} \equiv T(\ls_2)_{|\rs}   ]  + \lambda(G)^2,
\end{equation}
Note that this implies the claim if we take expectation over $\rs\in \sub{\RL}$. Towards proving (\ref{eq:almostclaim}), for each value $\sigma\in\F^{q^k}$, denote by $A_\sigma\subseteq \sub{\LL}_\rs$ the following set
\[ A_\sigma = \{\ls\in \sub{\LL}_\rs \mid T(\ls)_{|\rs} \equiv \sigma \} ,\]
and $\mu_\sigma = \frac{\abs{A_\sigma}}{\abs{\sub{\LL}_\rs}}$.
Let $f_\sigma$ be the indicator function for $A_\sigma$, for $\ls\in A_\sigma$, $f_\sigma(\ls) = 1$.
By definition
\begin{equation}
\label{eq:point_test}
\Pr_{\ls_1,\ls_2\sim\sub{\LL}_\rs}[T(\ls_1)_{|\rs} \equiv T(\ls_2)_{|\rs}   ] = \sum_{\sigma}\mu_\sigma^2.
\end{equation}

Let $G=G(\sub{\KL}_\rs, \sub{\LL}_\rs)$ be the inclusion graph, and denote by $M\in\mathbb{R}^{|\sub{\KL}_\rs|\times  |\sub{\LL}_\rs|}$ the normalized adjacency matrix, such that each entry is either $0$ or $\frac{1}{\deg(\ks)}$ where $\ks\in \sub{\KL}_\rs$.

For each $\KL$ dimensional subspace $\ks\in \sub{\KL}_\rs$, the value $(Mf_\sigma)_\ks$ is the fraction of $\ks$'s neighbors in $A_\sigma$, $(Mf_\sigma)_\ks =\Pr_{\ls\sim \sub{\LL}_\ks}[\ls\in A_\sigma]$. Therefore, the inner product gives us the expected value:
\[ \langle Mf_\sigma,Mf_\sigma \rangle = \E_{\ks\in \sub{\KL}_\rs}\left[\E_{\ls\in\sub{\LL}_\ks}[\ls\in A_\sigma]^2\right] = \E_{\ks\in\sub{\KL}_\rs}\left[\E_{\ls_1,\ls_2\in\sub{\LL}_\ks}[\ls_1,\ls_2\in A_\sigma] \right] .\]

Therefore
\begin{align*}
\Pr_{\substack{\ks\sim \sub{\KL}_\rs,\\\ls_1,\ls_2\sim\sub{\LL}_\ks}}[T(\ls_1)_{|\rs} \equiv T(\ls_2)_{|\rs}  ] &=\sum_{\sigma}\langle Mf_\sigma,Mf_\sigma \rangle \\
&\leq \sum_{\sigma} \mu_\sigma^2 + \lambda(G)^2 \mu_\sigma \tag*{(using \lref[Claim]{claim:bound inner product})}\\
&= \Pr_{\ls_1,\ls_2\sim\sub{\LL}_\rs}[T(\ls_1)_{|\rs} \equiv T(\ls_2)_{|\rs}  ] + \lambda(G)^2 \tag*{(from (\ref{eq:point_test}) )}.
\end{align*}
which proves (\ref{eq:almostclaim}).
\end{proof}

\lref[Claim]{claim:pointwise point geq line} together with \lref[Lemma]{lemma:subspaces} gives us $\alpha_{\LL \KL \LL}(T) \leq \alpha_{\LL \RL \LL}(T) + (1+o(1))q^{-2(\LL-2\KL+\RL+1)}$. \lref[Claim]{claim:inter geq point} and \lref[Claim]{claim:line geq inter} prove the other inequality, $\alpha_{\LL \RL \LL}(T)\left(1 - \left(\frac{d}{q}\right)^{\RL+1}\right) \leq \alpha_{\LL \KL \LL}(T)$.

\appendix

\section{Spectral properties of Certain Inclusion Graphs}
\label{section:inclusion proofs}

Let $G_{\LL,\KL}$ be the intersection graph where the vertex set is all {\em linear subspaces} of dimension $\LL$ in $\F_q^m$ and $U\sim U'$  iff $\dim(U\cap U') = \KL$.  We will use the $T_{\LL,\KL}$ to denote the {\em Markov operator} associated with a random walk on this graph. We will need following fact about eigenvalues of $T_{\KL, \KL-1}$.

\begin{definition}
$\KL$-th $q$-ary Gaussian binomial coefficient ${m \brack \KL}_q$ is given by
$$ {m \brack \KL}_q :=\prod_{i=0}^{\KL-1} \frac{q^m-q^i}{q^\KL-q^i}. $$
\end{definition}
As $q$ is fixed throughout the article, we will omit the subscript from now on.
\begin{fact}(~\cite[Theorem 9.3.3]{brouwer1989distance})
\label{fact:ev_grassmann}
Suppose $1\leq \KL\leq \frac{m}{2}$,
\begin{enumerate}
\item The number of $\KL$ dimensional linear subspaces in $\F_q^m$ is exactly ${m \brack \KL}$.

\item The degree of $G_{\KL,\KL-1}$ is $q{\KL\brack 1}{m-\KL\brack 1}$.

\item The eigen values of $T_{\KL,\KL-1}$ are
$$\lambda_j(T_{\KL,\KL-1}) = \frac{q^{j+1}{\KL-j \brack 1}{m-\KL-j \brack 1} - {j \brack 1}}{q{\KL\brack 1}{m-\KL\brack 1}},$$
with multiplicities ${m\brack j}-{m\brack j-1}$ for $j=0,1,\ldots, \KL$. Asymptotically, $\lambda_j(T_{\KL,\KL-1})  = \Theta(q^{-j})$.

\end{enumerate}
\end{fact}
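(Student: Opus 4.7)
The plan is to verify the three parts of the Fact in sequence, where parts (1) and (2) are elementary counts over $\F_q$ and part (3) is the genuine content that requires the spectral theory of the Grassmann scheme.

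For (1), I would count bases. The number of ordered $k$-tuples of linearly independent vectors in $\F_q^m$ is $\prod_{i=0}^{k-1}(q^m-q^i)$. Each $k$-dimensional linear subspace $U$ has exactly $\prod_{i=0}^{k-1}(q^k-q^i)$ such bases. Dividing yields the closed form ${m \brack k}_q = \prod_{i=0}^{k-1}\frac{q^m-q^i}{q^k-q^i}$ for the number of $k$-dimensional subspaces.

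For (2), fix $U$ of dimension $k$ and decompose the choice of a neighbor $U'$ into the pair $(W, U')$ with $W = U \cap U'$. There are ${k \brack k-1} = {k \brack 1}$ choices for the hyperplane $W \subset U$. Given $W$, any $U' \supsetneq W$ with $U' \neq U$ is obtained as $W + \operatorname{span}(v)$ for some $v \in \F_q^m \setminus U$. Two vectors $v, v'$ outside $U$ produce the same $U'$ iff they lie in a common coset of $W$ inside $U' \setminus W$, a set of size $q^k - q^{k-1}$. Hence the number of extensions per $W$ is $(q^m - q^k)/(q^k - q^{k-1}) = q\,{m-k \brack 1}$, and multiplying gives the claimed degree $q\,{k \brack 1}{m-k \brack 1}$.

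Part (3) is the main obstacle. My plan is to exploit the fact that $G_{k,k-1}$ is the first graph of the $q$-Johnson (Grassmann) association scheme and is distance-regular with diameter $k$. Under the action of $GL_m(\F_q)$ on $\ell^2(\mathrm{Gr}(k,m))$ the space decomposes as an orthogonal sum $\bigoplus_{j=0}^{k} V_j$ of irreducible $GL_m$-submodules, with $\dim V_j = {m \brack j} - {m \brack j-1}$ (which accounts for the stated multiplicities). Since $T_{k,k-1}$ commutes with the $GL_m$-action, Schur's lemma implies it acts as a scalar $\lambda_j$ on each $V_j$. To compute $\lambda_j$ I would pick an explicit zonal vector, such as the image under $T_{k,k-1}$ of the indicator $\mathbf{1}_{U_0}$ of a fixed $k$-subspace $U_0$ projected into $V_j$; equivalently, I would set up the tridiagonal matrix of intersection numbers $\{b_i, c_i, a_i\}$ of the scheme (computable by the same kind of count as in part (2)) and identify its eigenvalues with $\lambda_j$. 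The resulting formula simplifies algebraically to the stated $\lambda_j(T_{k,k-1}) = \bigl(q^{j+1}{k-j \brack 1}{m-k-j \brack 1} - {j \brack 1}\bigr)/\bigl(q{k \brack 1}{m-k \brack 1}\bigr)$, and the asymptotic $\lambda_j = \Theta(q^{-j})$ drops out by expanding each ${n \brack 1} = (q^n-1)/(q-1) = \Theta(q^{n-1})$. The delicate part is the zonal/intersection-number calculation; once that is organized, the spectrum is forced.
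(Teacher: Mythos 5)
This Fact is imported verbatim from Brouwer--Cohen--Neumaier (Theorem 9.3.3) and the paper gives no proof of its own, so the only benchmark is the cited source; your outline is correct and is essentially that standard argument --- elementary double counting for parts (1) and (2), and for part (3) the identification of $G_{\KL,\KL-1}$ as the Grassmann graph of a distance-regular association scheme, whose eigenspaces are the $GL_m(\F_q)$-irreducible constituents of $\ell^2$ of the Grassmannian (accounting for the stated multiplicities) and whose eigenvalues fall out of the tridiagonal intersection array. The only piece you leave unexecuted is the routine evaluation of those intersection numbers and the algebraic simplification to the closed-form $\lambda_j$, which is exactly the computation carried out in the cited theorem, and your sanity checks ($\lambda_0=1$, $\lambda_j=\Theta(q^{-j})$) are consistent with it.
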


\begin{claim}
\label{claim:ev_relation of grassmanns}
For any $1\leq \KL\leq \frac{m}{2}$ and , we have $\abs{\lambda_1(T_{\KL,\KL-2}) - \lambda_1(T_{\KL,\KL-1})^2}= (1+o(1)) \frac{1}{q^\KL}.$
\end{claim}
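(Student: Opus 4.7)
My plan is to exploit the distance-regular structure of the Grassmann graph $G_{\KL,\KL-1}$. Since the family $\{T_{\KL,\KL-i}\}_{i=0}^{\KL}$ belongs to a common $P$-polynomial association scheme, all these operators commute and share a basis of eigenvectors; in particular $\lambda_1(T_{\KL,\KL-1})$ and $\lambda_1(T_{\KL,\KL-2})$ are the eigenvalues of the two operators on the \emph{same} eigenspace $V_1$. Counting length-$2$ walks $U_0 \to U_1 \to U_2$ in $G_{\KL,\KL-1}$, sorted by $\dim(U_0\cap U_2)\in\{\KL,\KL-1,\KL-2\}$, yields the identity
\[ T_{\KL,\KL-1}^2 = \alpha_0 I + \alpha_1 T_{\KL,\KL-1} + \alpha_2 T_{\KL,\KL-2}, \]
where $\alpha_i$ is the probability that a random length-$2$ walk from a fixed vertex lands at graph-distance $i$ (so $\alpha_0+\alpha_1+\alpha_2 = 1$). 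Evaluating on $V_1$ and writing $\lambda := \lambda_1(T_{\KL,\KL-1})$, $\lambda' := \lambda_1(T_{\KL,\KL-2})$, we obtain the scalar relation $\lambda^2 = \alpha_0 + \alpha_1\lambda + \alpha_2\lambda'$, which rearranges (using $1-\alpha_2 = \alpha_0+\alpha_1$) to
\[ \lambda' - \lambda^2 \;=\; -\frac{\alpha_0(1-\lambda^2) + \alpha_1\lambda(1-\lambda)}{\alpha_2}. \]

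The remaining work is to estimate the right-hand side. Clearly $\alpha_0 = 1/d$ with $d = q{\KL \brack 1}{m-\KL \brack 1}$, so $\alpha_0 = O(q^{1-m})$. The key quantity is $\alpha_1 = a_1/d$, where $a_1$ counts the common neighbors of two adjacent $U_0, U_1$ sharing the $(\KL-1)$-dimensional subspace $W := U_0\cap U_1$. I split into cases based on whether a common neighbor $V$ contains $W$: the case $W\subseteq V$ contributes ${m-\KL+1 \brack 1} - 2$ choices (excluding $V = U_0, U_1$), and the case $W\not\subseteq V$ contributes $q^2{\KL-1 \brack 1}$ choices, by parameterizing $V$ via the $(\KL-2)$-dimensional $W' := V\cap W$ together with the $(\KL-1)$-dimensional slices $L_0 := V\cap U_0$ and $L_1 := V\cap U_1$ (each of which is a $1$-dimensional lift of $W'$ within $U_0$, $U_1$ respectively, distinct from $W$). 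Thus $a_1 = {m-\KL+1 \brack 1} + q^2 {\KL-1 \brack 1} - 2 = (1+o(1))q^{m-\KL}$ for $\KL < m/2$, whence $\alpha_1 = (1+o(1))/q^{\KL-1}$ and $\alpha_2 = 1 - o(1)$.

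Finally, plugging these bounds together with $\lambda = (1+o(1))/q$ (read off from \lref[Fact]{fact:ev_grassmann} at $j=1$), the dominant term on the right is $-\alpha_1\lambda/\alpha_2 = -(1+o(1))/q^{\KL}$, while $\alpha_0/\alpha_2 = O(q^{1-m}) = o(q^{-\KL})$ (since $m\ge 2\KL$) and $\alpha_1\lambda^2/\alpha_2 = O(q^{-\KL-1}) = o(q^{-\KL})$ are lower-order contributions. Taking absolute values gives $|\lambda' - \lambda^2| = (1+o(1))/q^{\KL}$ as claimed. The main technical step I expect is the counting of $a_1$, and in particular verifying in the second case that the map $(W', L_0, L_1) \mapsto V := L_0 + L_1$ is a bijection onto the common neighbors not containing $W$; one checks that $L_0 \neq L_1$ (else both would lie in $U_0\cap U_1 = W$, contradicting $L_i \neq W$), that $V = L_0+L_1$ has dimension $\KL$, and that $V\cap W = 0$ follows automatically from the choice $L_i \neq W$.
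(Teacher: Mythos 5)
Your argument is correct for $2 \le k < m/2$ and travels the same basic road as the paper's proof --- both expand the two-step walk on $G_{\KL,\KL-1}$ according to $\dim(U_0\cap U_2)$ and evaluate the resulting operator identity on the common eigenspace of the Grassmann scheme --- but your handling of the decisive asymptotic step is sharper, and that sharpness is in fact needed. The paper lumps the distance-$0$ and distance-$1$ contributions into one conditioned operator $\mathcal{N}$, writes $T_{\KL,\KL-1}^2=\beta\mathcal{N}+(1-\beta)T_{\KL,\KL-2}$, and asserts $\beta=(1+o(1))q^{-\KL}$; in fact $\beta=\alpha_0+\alpha_1=(1+o(1))q^{-(\KL-1)}$ (note $1/{\KL\brack \KL-1}=(q-1)/(q^{\KL}-1)$), and the statement survives only because what enters the eigenvalue relation is $\beta$ times the eigenvalue of $\mathcal{N}$ on $V_1$, i.e.\ $\alpha_0+\alpha_1\lambda=(1+o(1))q^{-\KL}$ --- precisely the term you isolate. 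Your count of $a_1$ is exactly right (it agrees with $b_0-b_1-c_1$ for the Grassmann graph), and the bijection in your second case goes through, with one slip in the parenthetical: the condition to verify is $W\not\subseteq V$, not ``$V\cap W=0$'' (indeed $V\cap W=W'$ has dimension $\KL-2$).

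Two caveats, both traceable to the statement rather than to your argument. First, you correctly restrict $a_1=(1+o(1))q^{m-\KL}$ to $\KL<m/2$: at $\KL=m/2$ the two contributions ${m-\KL+1\brack 1}$ and $q^2{\KL-1\brack 1}$ are each $(1+o(1))q^{m-\KL}$, so $\alpha_1=(2+o(1))q^{1-\KL}$ and your identity yields $\abs{\lambda'-\lambda^2}=(2+o(1))q^{-\KL}$. This is real: for $m=4$, $\KL=2$ one computes $\lambda_1(T_{2,1})=\frac{q-1}{q(q+1)}$ and $\lambda_1(T_{2,0})=-\frac{1}{q^2}$, whose difference of squares-vs-value is $(2+o(1))q^{-2}$. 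So the claim as written, which allows $\KL\le m/2$, is off by a factor of $2$ at the boundary; your proof simply does not (and should not) deliver the constant $1$ there. Second, the claim nominally allows $\KL=1$, where $T_{\KL,\KL-2}$ is undefined; you implicitly need $\KL\ge 2$, which is also what makes $\alpha_0=O(q^{1-m})=o(q^{-\KL})$.
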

\begin{proof}
Consider a two-step random walk on the graph $G_{\KL,\KL-1}$. We will show that with very high probability, a two-step random walk on $G_{\KL,\KL-1}$ corresponds to a single step random walk on $G_{\KL,\KL-2}$. Let $U_1, U_2, U_3$ be the vertices from a two-step random walk on $G_{\KL,\KL-1}$. Note that conditioned on the event $\dim(U_1\cap U_3) = \KL-2$, the distribution of $(U_1, U_3)$  is exactly same as a single step random walk on $G_{\KL,\KL-2}$. We will upper bound the probability of the event $\dim(U_1\cap U_3) \neq \KL-2$.

Let $w_1=U_1\cap U_2$ and $w_2 = U_2\cap U_3$, we can describe the distribution of the two-step random walk as follows:
\begin{enumerate}
	\item Choose a uniform $k$ dimensional subspace $U_2$.
	\item Choose two random $k-1$ dimensional subspaces, $w_1,w_2\subset U_2$.
	\item Choose a point $x_1\in\F^m\setminus U_2$, and set $U_1=\text{span}(w_1,x_1)$.
	\item Choose a point $x_2\in\F^m\setminus U_2$, and set $U_3=\text{span}(w_2,x_2)$.
\end{enumerate}
By definition, $U_2$ has ${\KL\brack \KL-1}$ subspaces of size $k-1$, therefore $\Pr_{w_1,w_2}[w_1=w_2]=\frac{1}{{\KL\brack \KL-1}}$. In order to satisfy $\dim(U_1\cap U_3) \neq \KL-2$ given that $w_1\neq w_2$, the point $x_2$ should be in $U_1$. There are $q^k-q^{k-1}$ points in $U_1\setminus U_2$, and therefore this probability equals $\frac{\abs{U_1\setminus U_2}}{\abs{\F^m\setminus U_2}} = \frac{q^\KL-q^{\KL-1}}{q^m-q^\KL}$.
\begin{align*}
\Pr[\dim(U_1\cap U_3) \neq \KL-2] &= \Pr[w_1 = w_2] + \Pr[\dim(U_1\cap U_3) \neq \KL-2 \wedge w_1\neq w_2]\\
&= \frac{1}{{\KL\brack \KL-1}} + \left(1-\frac{1}{{\KL\brack \KL-1}}\right)\Pr[\dim(U_1\cap U_3) \neq \KL-2 \mid w_1\neq w_2]\\
&= \frac{1}{{\KL\brack \KL-1}} + \left(1-\frac{1}{{\KL\brack \KL-1}}\right)\cdot \frac{q^\KL-q^{\KL-1}}{q^m-q^\KL} =: \beta.
\end{align*}
Thus, we have
$$ T_{\KL,\KL-1}^2 = \beta \mathcal{N} + (1-\beta)T_{\KL,\KL-2},$$
where $\mathcal{N}$ is a Markov operator corresponding to the two-step random walk on $G_{k,k-1}$, conditioning on $\dim(U_1\cap U_3)\neq \KL-2$. The claim follows as $\beta = (1+o(1))1/q^\KL$.
\end{proof}

Following fact follows from the definition of $\lambda(G)$.
\begin{fact}
\label{fact:one vs two steps}
For a bi-regular bipartite graph $G(A,B)$, if $T$ is a Markov operator associated with a random walk of length two starting from $A$ (or $B$) then $\lambda(G)^2 = \lambda(T)$.
\end{fact}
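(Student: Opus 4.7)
The plan is to identify the two-step Markov operator $T$ with $MM^{\ast}$, where $M$ is the normalized biadjacency matrix of $G$ from Definition~\ref{def:lambda} and $M^{\ast}$ is its adjoint in the normalized inner product. Since the eigenvalues of $MM^{\ast}$ are exactly the squared singular values of $M$, and $\lambda(G)$ is by definition the second largest such singular value, the equality $\lambda(T)=\lambda(G)^{2}$ will follow from the singular value decomposition.

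In more detail, I would first unpack the normalizations. Using the normalized inner product $\langle x,y\rangle=\frac{1}{n}\sum_{i}x_{i}y_{i}$, the condition $\norm{M\one}=1$ forces $M=B_{0}/d_{A}$, where $B_{0}$ is the $|A|\times|B|$ $(0,1)$-biadjacency matrix and $d_{A}$ is the degree on the $A$ side. The Markov kernel for one step $A\to B$ is then precisely $M$, while the kernel for $B\to A$ is $B_{0}^{T}/d_{B}$. A short calculation using $|A|d_{A}=|B|d_{B}$ shows that the adjoint of $M$ in the normalized inner product is $M^{\ast}=(d_{A}/d_{B})M^{T}$, so the two-step Markov operator from $A$ to $A$ is exactly $T=MM^{\ast}$.

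Next, I would note that $M\one_{B}=\one_{A}$ (by the normalization) and dually $M^{\ast}\one_{A}=\one_{B}$, so $\one_{A}$ is a top eigenvector of $T$ with eigenvalue $1$, corresponding to the stationary distribution. Restricted to the orthogonal complement of $\one_{A}$, the operator norm of $T=MM^{\ast}$ equals the square of the operator norm of $M$ restricted to $\one_{B}^{\perp}$, which is $\lambda(G)^{2}$ by Definition~\ref{def:lambda}. Therefore $\lambda(T)=\lambda(G)^{2}$. The case of starting from $B$ is symmetric, replacing $MM^{\ast}$ by $M^{\ast}M$, which has the same nonzero spectrum.

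The main obstacle is purely bookkeeping: carefully reconciling the three different normalization conventions floating around—the $\norm{M\one}=1$ convention of Definition~\ref{def:lambda}, the normalized inner product $\langle x,y\rangle=\frac{1}{n}\sum x_{i}y_{i}$, and the Markov transition convention that rows sum to $1$ in the unnormalized sense—so that the adjoint $M^{\ast}$ really encodes the reverse Markov step. Once the identity $T=MM^{\ast}$ is in hand, no further work is required.
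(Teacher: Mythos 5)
Your proposal is correct and is exactly the intended justification: the paper states this fact without proof, remarking only that it ``follows from the definition of $\lambda(G)$,'' and the definition itself already observes that $\lambda(G)$ is the second largest singular value of $M$. Your identification $T=MM^{\ast}$ (with $M^{\ast}=(d_A/d_B)M^{T}$ the adjoint under the normalized inner products, using $|A|d_A=|B|d_B$) together with the fact that $M$ maps $\one^{\perp}$ to $\one^{\perp}$ completes the argument cleanly.
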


We now prove  {\lref[Lemma]{lemma:ev_bounds}}.
\begin{lemma} [Restatement of {\lref[Lemma]{lemma:ev_bounds}}]We have for every $m\geq 6$,
\begin{enumerate}
\item For $G_1(\mathcal{L}\setminus \mathcal{L}_x, \mathcal{C}_x)$ , $\lambda(G_1) \approx \frac{1}{\sqrt{q}} $.
\item For $G_2(\mathcal{L}_x,\mathcal{C}_x))$ , $\lambda(G_2) \approx  \frac{1}{q}$.
\item For $G_3(\F^m\setminus \ell,\mathcal{C}_\ell)$ , $\lambda(G_3)\approx  \frac{1}{\sqrt{q}} $.
\item For $G_4(\F^m,\mathcal{C})$ , $\lambda(G_4) \approx  \frac{1}{q^{3/2}}  $.
\item For $G_5(\F^m\setminus\{x\},\mathcal{C}_x)$ , $\lambda(G_5) \approx  \frac{1}{q}  $.
\setcounter{listcounter}{\value{enumi}} 
\end{enumerate}
And for every $m\geq 3$
\begin{enumerate}
	\setcounter{enumi}{\value{listcounter}}
	\item For $G_6(\F^m,\mathcal{L})$, $\lambda(G_6)\approx \frac{1}{\sqrt{q}}$.
\end{enumerate}
where $\approx$ denotes equality up to a multiplicative factor of $1\pm o(1)$.
\end{lemma}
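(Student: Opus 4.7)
The overall plan is to apply Fact \ref{fact:one vs two steps} uniformly to each of the six bipartite graphs $G_i(A_i, B_i)$: letting $T_i$ denote the Markov operator of the two-step random walk on one side, we have $\lambda(G_i)^2 = \lambda(T_i)$, and we then bound $\lambda(T_i)$ either by direct computation (when the walk decomposes as a convex combination of the identity and the uniform operator) or by reduction to the Grassmann walk $T_{k,k-1}$ whose spectrum is given by Fact \ref{fact:ev_grassmann}. I would organize the six cases by increasing complexity of this decomposition.

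For the fully symmetric graphs $G_4$ and $G_6$, the walk on the point side, $y_1 \to S \to y_2$ with $S$ a random cube or line through $y_1$, is doubly transitive on $\F^m$: $y_2 = y_1$ with probability $c_i$ equal to $1/q^3$ (resp.\ $1/q$), and is otherwise uniform on $\F^m\setminus\{y_1\}$. Hence the only nontrivial eigenvalue of $T_i$ is $\Theta(c_i)$, giving $\lambda(G_4) \approx 1/q^{3/2}$ and $\lambda(G_6) \approx 1/\sqrt{q}$. For the anchored graphs $G_2$ and $G_5$, I would translate the anchor $x$ to the origin so that affine objects through $x$ become linear subspaces. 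For $G_2$ the walk becomes a doubly-transitive walk on $1$-dim linear subspaces of $\F^m$ via $3$-dim superspaces, with return probability $1/\binom{3}{1}_q \approx 1/q^2$. For $G_5$, I would factor the walk $C_1 \to y \to C_2$ through $L := \mathrm{span}(y)$: uniform $y \in C_1 \setminus\{0\}$ induces uniform $L$ among the $1$-dim subspaces of $C_1$ (each such $L$ containing $q-1$ nonzero points), and any linear $C_2 \ni y$ automatically contains $L$. This reduces $T_5$ to the two-step walk on the bipartite inclusion graph between $1$-dim and $3$-dim linear subspaces, which has the same spectrum as $T_2$ by bipartite duality.

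For the mixed graphs $G_1$ and $G_3$ the same idea works but with a $2$-dimensional intermediate subspace. For $G_1$, after translating $x$ to origin, the walk $C_1 \to \ell \to C_2$ on $3$-dim linear subspaces through $0$ (with $\ell$ an affine line in $C_1$ avoiding $0$) factors through $P := \mathrm{span}(\ell)$, a $2$-dim linear subspace of $C_1$; a count inside each plane (each $2$-dim $P \subset C_1$ contains exactly $q^2 - 1$ affine lines missing $0$) shows that uniform $\ell$ induces uniform $P$ among the $q^2+q+1$ planes of $C_1$. Hence $T_1$ equals the two-step walk on the bipartite inclusion graph between $2$- and $3$-dim linear subspaces of $\F^m$; its bipartite adjoint $P_1 \to C \to P_2$ decomposes as $c\, I + (1-c)\, T_{2,1}$ with $c = 1/(q^2+q+1)$, because $P_2 \neq P_1$ forces $\dim(P_1 \cap P_2) = 1$ and the unique witnessing cube is $C = P_1 + P_2$. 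Fact \ref{fact:ev_grassmann} then gives $\lambda_1(T_{2,1}) = \Theta(1/q)$ and hence $\lambda(G_1) \approx 1/\sqrt{q}$. For $G_3$ I would instead quotient by $\ell$ so that cubes through $\ell$ correspond to $2$-dim linear subspaces of $\F^m/\ell \cong \F^{m-1}$, and factor the walk through the affine plane $\mathrm{aff}(\ell, y)$, which in the quotient is a $1$-dim subspace; the count $|C_1 \setminus \ell| = q(q-1)(q+1)$ distributed as $q(q-1)$ off-$\ell$ points per plane gives uniformity, reducing to the two-step walk on the bipartite inclusion graph between $1$- and $2$-dim subspaces of $\F^{m-1}$, which by double transitivity has nontrivial eigenvalue $1/\binom{2}{1}_q \approx 1/q$, so $\lambda(G_3) \approx 1/\sqrt{q}$.

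The main obstacle is verifying, in each mixed case, the uniformity of the factoring pushforward from the affine intermediate object (line not through $x$ for $G_1$, point not on $\ell$ for $G_3$) to the linear intermediate plane. These reduce to elementary affine-versus-linear counts inside a single cube, and they pin down the sharp $(1 \pm o(1))$ multiplicative constants in each $\approx$; everything else (converting Grassmann walk eigenvalues via Fact \ref{fact:ev_grassmann} and, where two consecutive Grassmann-adjacent steps appear, invoking Claim \ref{claim:ev_relation of grassmanns}) is mechanical. Collating the six cases recovers the general heuristic $\lambda(G) \approx q^{-p/2}$ with $p$ equal to the number of degrees of freedom left after fixing a left-hand vertex.
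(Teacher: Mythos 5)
Your proposal is correct, and it sits inside the same framework as the paper's proof --- reduce $\lambda(G_i)^2$ to the second eigenvalue of a two-step Markov operator via \lref[Fact]{fact:one vs two steps}, then exhibit that operator as a convex combination of the identity, the complete-graph operator and/or a Grassmann walk whose spectrum is read off \lref[Fact]{fact:ev_grassmann} --- but several cases are routed differently. For $G_4$ and $G_6$ your argument is the paper's. For $G_3$ and $G_5$ the paper runs the two-step walk from the \emph{point} side and splits it as $J$ plus a small perturbation supported on the exceptional event ($y$ in the plane spanned by $x$ and $\ell$, resp.\ $z$ collinear with $x,y$); you run it from the \emph{cube} side and factor the intermediate affine object through its linear span, which reduces $G_5$ outright to $G_2$ and reduces $G_3$ to the lines-vs-planes inclusion graph in $\F^m/\ell$. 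That is a clean alternative: the exceptional probabilities come out as exact reciprocals of Gaussian binomials rather than from ad hoc point counts. For $G_1$ the paper stays on the cube side and, after careful path counting, writes the walk as $\frac{1}{d_L}I+\frac{\beta d'}{d_R d_L}T_{3,2}$; your factorization through $P=\mathrm{span}(\ell)$ lands instead on $c\,I+(1-c)T_{2,1}$ with $c=1/(q^2+q+1)$, which needs less counting --- the one non-immediate step is your affine-versus-linear line count inside a plane, $q(q+1)-(q+1)=q^2-1$, and it is correct, as is the observation that any linear cube containing $\ell$ must contain $\mathrm{span}(\ell)$. For $G_2$ you give a direct double-transitivity argument where the paper invokes the general \lref[Lemma]{lemma:subspaces}; both work. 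One cosmetic remark: \lref[Claim]{claim:ev_relation of grassmanns} is not actually needed anywhere in your plan (nor in the paper's proof of this particular lemma), since none of your six decompositions requires comparing $T_{\KL,\KL-2}$ with $T_{\KL,\KL-1}^2$.
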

\begin{proof}
Suppose $T$ is an $n\times n$ Markov operator which is a convex combination of a bunch of other Markov operators: $T = \sum_{i=1}^k \alpha_i T_i$ where $\alpha_i\geq 0$ and $\sum_{i=1}^k \alpha_i = 1$, and that both $T$ and $T_i$'s are regular.
As the row sum of each Markov operator is $1$, the largest eigenvalue is $1$, since both $T$ and $T_i$'s are regular, the eigenvector of the largest eigenvalue is the all $1$ vector. The second largest eigenvalue of $T$ can be upper bounded by
\begin{align*}
\lambda(T)& :=  \max_{\substack{v\in \R^n , \|v\|=1, \\ v\perp \mathbf{1}} }\|Tv\| \\
& = \max_{\substack{v\in \R^n , \|v\|=1, \\ v\perp \mathbf{1}} }\left|\left|\sum_{i=1}^k \alpha_i T_i\right|\right|\\
&\leq \sum_{i=1}^k \max_{\substack{v\in \R^n , \|v\|=1, \\ v\perp \mathbf{1}} }\|\alpha_i T_i\| = \sum_{i=1}^k \alpha_i \lambda(T_i).
\end{align*}
In proving the lemma, we repeatedly use the above simple fact to upper bound the eigenvalue.
\begin{enumerate}
\item
Without loss of generality, we can assume $x={\bf 0}$. Let $d_L$ and $d_R$ denote the left and right degree of $G_1$ respectively. Fix a line $\ell$, $d_L$ is the number of cubes containing $\ell$ and not passing through ${\bf 0}$. Every point $x\notin\text{span}(\ell,\bf{0})$ defines a cube $C=\text{span}(x,\bf{0},\ell)$. Thus, the number of linear cubes containing $\ell$ equals $d_L = \frac{q^m-q^2}{q^3-q^2}$, where the denominator is the overcounting factor, the number of points that give the same cube.

Fix a linear cube $C$. The right degree is the number of lines in $C$ not passing through the origin which is $\frac{ {q^3\choose 2} }{ {q\choose 2} } -\frac{q^3-1}{q-1}$, where the first term counts all possible lines in $C$ (each two different points define a line, we divide by the double counting) and the second term counts all the lines in $C$ that pass through the origin.

Let $T_1$ be the Markov operator associated with a two-step random walk in $G_1$ starting from $\mathcal{C}_x$. Using \lref[Fact]{fact:one vs two steps}, in order to bound $\lambda(G_1)$ it is enough to bound the second largest eigenvalue of $T_1$. Since $G_1$ is bi-regular, the first eigenvector of $T_1$ is the all ones vector. For every cube $C$, the number of two-step walks starting from $C$ is $d_L\cdot d_R$.

If $\dim \{C_1\cap C_2\}= 1$, then the two cubes intersection is only on a line. Since both cubes are linear, it means that this line goes through the origin, therefore it doesn't correspond to a vertex on the left side, and there is no walk $C_1\rightarrow\ell\rightarrow C_2$, so $(T_1)_{C_1,C_2}=0$. Of course, the same holds if $\dim \{C_1\cap C_2\}=0$.

If $\dim \{C_1\cap C_2\}= 2$, there there is a plane going through the origin in both $C_1,C_2$. The number of walks $C_1\rightarrow\ell\rightarrow C_2$ equals the number of lines in this plane that don't contain the origin,  $\bf{0}$. Each pair of distinct points on the plane correspond to a line, and we divide by the double counting. Therefore the number of lines in a plane equals $\frac{ {q^2\choose 2} }{ {q\choose 2} }$. We subtract from it the number of lines in a plane that contains $\bf{0}$, resulting in $\frac{ {q^2\choose 2} }{ {q\choose 2} } - \frac{q^2-1}{q-1} =: \beta$.

If $C_1=C_2$, then exists a path $C_1\rightarrow\ell\rightarrow C_2$ for every line $\ell$ adjacent to $C_1$, and there are $d_R$ such lines.

Since $T_1$ is a Markov operator, we need to normalize the number of paths between $C_1,C_2$ by dividing in the total number of outgoing paths from $C_1$, which equals $d_R\cdot d_L$. Therefore,
 \begin{equation}
  {(T_1)}_{C_i, C_j}=
    \begin{cases}
      {\frac{d_R}{d_R\cdot d_L}}, & \text{if}\ C_i=C_j \\
      \frac{\beta}{d_R\cdot d_L}, & \text{if}\ \dim \{C_1\cap C_2\}= 2\\
      0, &\text{otherwise}
    \end{cases}
  \end{equation}
Thus, we can write $T_1$ as:
$$ T_1 = \frac{1}{d_L} I + \frac{\beta}{d_R d_L} \cdot G_{3,2} = \frac{1}{d_L} I +  \frac {\beta d'}{d_R d_L}\cdot T_{3,2},$$
where $d'$ is the degree of a vertex in $G_{3,2}$. One can verify that $T_1$ is indeed a convex combination of two Markov operators $I$ and $T_{3,2}$. Since $G_{3,2}$ is a regular graph, the second eigenvector of $T_{3,2}$ is also orthogonal to $\bf{1}$. Hence,
\begin{align}
\lambda(G_1)^2 &= \lambda(T_1) = \max_{\substack{v\in \R^{|\mathcal{C}_x|}, v\perp \one\\\norm{v} = 1}} \norm{T_1 v}
= \max_{\substack{v\in \R^{|\mathcal{C}_x|}, v\perp \one\\\norm{v} = 1}} \norm{\left( \frac{1}{d_L} I +  \frac{\beta d'}{d_R d_L}\cdot T_{3,2}\right)v}\nonumber\\
&=  \frac{1}{d_L}+  \frac{\beta d'}{d_R d_L}\cdot \lambda_1(T_{3,2}). \label{eq:ev L,Cx expression}
\end{align}
We now just need to plug in the values of $ \beta , d'$ and $\lambda_1(T_{3,2})$. Using \lref[Fact]{fact:ev_grassmann}, $\lambda_1(T_{3,2})$ is given by the following expression,
$$\lambda_1(T_{3,2}) = \frac{q^2{2\brack 1} {m-4 \brack 1} - {1\brack 1}}{q{3\brack 1}{m-3\brack 1}} = (1+o(1))\frac{1}{q}.$$

As we have seen before, $d_R = \frac{ {q^3\choose 2} }{ {q\choose 2} } -\frac{q^3-1}{q-1} = (1+o(1))q^4$, $d_L = \frac{q^m-q^2}{q^3-q^2}=(1+o(1))q^{m-3}$ and $\beta = \frac{ {q^2\choose 2} }{ {q\choose 2} } - \frac{q^2-1}{q-1} =(1+o(1))q^2$. From \lref[Fact]{fact:ev_grassmann}, $d'=(1+o(1))q^{m-1}$.
Thus,
$$\frac{1}{d_L} = (1+o(1))\frac{1}{q^{m-3}}, \quad \quad \frac{\beta d'}{d_R d_L}\lambda_1(T_{3,2})= (1+o(1))\frac{1}{q} $$
Plugging these values in \eqref{eq:ev L,Cx expression} gives
 $\lambda(G_1) = (1+o(1))\frac{1}{\sqrt{q}}$ as required.

\item This bound is implied from a more general  \lref[Lemma]{lemma:subspaces} we prove below with $\LL = 3,\KL = 1$ and $\RL = 0$. 

\item In this case, it will be easier to bound the eigenvalue of the Markov operator associated with a random walk of length two starting from $\F^m\setminus \ell$. Let $T_3$ be the Markov operator. Now, the path of length two starting from $x$ looks like $x\rightarrow C\rightarrow y$. Thus, the cube $C$ contains all points from the affine plane spanned by $x$ and $\ell$. Let $p(x,\ell)$ be the affine plane spanned by $x$ and $\ell$.  We have $\Pr[y\in p(x,\ell)] = \frac{q^2-q}{q^3-q}\approx \frac{1}{q}$. If $y\notin p(x,\ell)$ then the distribution of $y$ is uniform in $\F^m\setminus p(x,\ell)$. Thus, we have
$$T_3 = (1-o(1))\left(1-\frac{1}{q}\right)J + (1+o(1))\frac{1}{q}\mathcal{N},$$
where $J$ is a Markov operator associated with a complete graph on $\F^m\setminus \ell$, with self loops and $\mathcal{N}$ is an appropriate Markov operator. 
Thus, we have bound $\lambda(T_3) = (1+o(1))\frac{1}{q}$. Since $\lambda(G_3)^2 = \lambda(T_3)$, the bound follows.

\item Proof of this is along the same lines as (3). The Markov operator here (starting a walk from the left side) can be written as
$$T_4 =(1\pm o(1)) \frac{1}{q^3}I + \left((1\pm o(1))(1-\frac{1}{q^3})\right)J, $$
where $I$ is an identity matrix. Thus $\lambda(T_4) = (1\pm o(1))\frac{1}{q^3} =\lambda(G_4)^2$.

\item The proof of this item is also similar to (3), we look on the path of length $2$ starting from the left side, i.e $y\rightarrow C\rightarrow z$, and let $T_5$ be the Markov operator. Let $\ell(x,y)$ be the line spanned by $x,y$ (where $x$ is the fixed point, $G_5(\F^m\setminus\{x\},\mathcal{C}_x)$), then $\Pr[z\in \ell(x,y)] =  \frac{\abs{\ell(x,y)\setminus \{x\}}}{\abs{C\setminus\{x\}}}= \frac{q-1}{q^3-1}\approx\frac{1}{q^2}$, let $\mathcal{N}$ be the appropriate Markov operator of the event that $x,y,z$ are colinear, then
\[ T_5 = (1-o(1))\left(1-\frac{1}{q^2}\right)J+(1+o(1))\frac{1}{q^2}\mathcal{N} .\]
Here $J$ is the Markov operator of the complete graph on $\F^m\setminus\{x\}$.
Thus $\lambda(G_5)^2\approx \frac{1}{q^2}$.

\item Consider a two-step random walk in $G_6$, $x\rightarrow\ell \rightarrow y$. If we sample a random line through $x$ then conditioned on $y\neq x$, $y$ is uniformly distributed in $\F^m$. Thus, we can write the Markov operator $T$ associated with this process as:
$$T = \frac{1}{q} I + \left(1-\frac{1}{q}\right)T',$$
where $T'$ is a Markov operator associated with a random walk on a complete graph on $A$, without self loops and $I$ is an identity matrix. As $T' = \frac{1}{|A|-1} J - \frac{1}{|A|-1}I$, $\lambda(T') = \frac{1}{q^3-1}$.
Thus, $\abs{\lambda(T) -\frac{1}{q}} \leq \frac{1}{q^3-1}$. The claim follows as $\lambda(G_6)^2 = \lambda(T)$.
\end{enumerate}
\end{proof}

Next, we prove \lref[Lemma]{lemma:subspaces}. Recall that  $\sub{\LL}$ denotes set of all $\LL$ dimensional affine subspaces in $\F^m$. Also, for $\RL<\LL$ and for $\rs\in \sub{\RL}$,   $\sub{\LL}_{\rs}\subseteq \sub{\LL}$ denotes all those subspaces in $\sub{\LL}$ which contains a particular subspace $\rs$.
\begin{lemma}[Restatement of {\lref[Lemma]{lemma:subspaces}}]
	Let $\RL\leq \KL< \LL \leq \frac{m}{2}$ be integers, and let $G$ be the inclusion graph $G=G(\sub{\KL}_\rs, \sub{\LL}_\rs)$ for an $\RL$ dimensional subspace $\rs$, where $\rs\neq\emptyset$. Then,
	\[ \lambda(G)^2 \leq  (1+o(1))\cdot {q^{-(\LL-2\KL+\RL+1)}}.\]
\end{lemma}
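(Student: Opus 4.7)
The plan is to reduce the question to a standard Grassmannian inclusion graph and then extract the spectral gap from the Grassmann association scheme.

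\emph{Reduction.} First I would translate so that the affine subspace $\rs$ contains the origin; then $\rs$ is linear, and any affine subspace containing $\rs$ is linear as well. Quotienting $\F^m$ by $\rs$ identifies $\sub{\KL}_\rs$ and $\sub{\LL}_\rs$ with the sets of $k$-dimensional and $\ell$-dimensional linear subspaces of $\F^n$, where $n = m - \RL$, $k = \KL - \RL$, $\ell = \LL - \RL$. The hypothesis $\LL \leq m/2$ gives $\ell \leq n/2$, and the target inequality becomes $\lambda(G')^2 \leq (1 + o(1))\, q^{-(\ell - 2k + 1)}$ for the corresponding inclusion graph $G'$. The case $k = 0$ (i.e.\ $\KL = \RL$) is trivial since the left side of $G'$ has a single vertex, so I may assume $k \geq 1$.

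\emph{Two-step operator in the Grassmann scheme.} By \lref[Fact]{fact:one vs two steps}, $\lambda(G')^2$ equals the second eigenvalue of the two-step Markov operator $T$ on the set of $k$-subspaces: from $K_1$, sample a uniform $\ell$-subspace $S \supset K_1$ and then a uniform $k$-subspace $K_2 \subset S$. A direct count shows $T(K_1, K_2)$ depends on the pair only through $j := \dim(K_1 \cap K_2)$, being proportional to the number of $\ell$-subspaces containing the sum $K_1 + K_2$ (a subspace of dimension $2k - j$). Hence $T$ is $GL_n(\F)$-invariant and belongs to the Bose--Mesner algebra of the Grassmann association scheme on the $k$-subspaces of $\F^n$. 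This permits the decomposition $T = \sum_{j = \max(0, 2k - \ell)}^{k} \alpha_j T_{k, j}$, where $T_{k, j}$ is the normalized adjacency of the Grassmann graph connecting $k$-subspaces whose intersection has dimension $j$, with $T_{k, k} = I$.

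\emph{Combining eigenvalue estimates.} The operators $T_{k, j}$ form a commuting family and share the isotypic eigenspaces $V_0, V_1, \ldots$ of the $GL_n(\F)$-action; their eigenvalues are given (up to normalization) by $q$-Eberlein polynomials, which extend the formulas of \lref[Fact]{fact:ev_grassmann} from $T_{k, k-1}$ to all $T_{k, j}$ by iterating the two-step trick behind \lref[Claim]{claim:ev_relation of grassmanns}. In parallel, the coefficients $\alpha_j$ can be evaluated using the $q$-binomial asymptotics $\binom{a}{b}_q = (1 + o(1))\, q^{b(a - b)}$, and one finds that $\alpha_j$ is maximized at the minimum $j$ and decays as $j$ grows. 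Substituting both ingredients into $\lambda_i(T) = \sum_j \alpha_j \lambda_i(T_{k, j})$ and maximizing over $i \geq 1$ produces the desired bound. The main obstacle is the careful $q$-binomial bookkeeping needed to identify the pair $(i, j)$ that contributes the leading exponent: the answer combines so that the dominant term is of order $q^{-(\ell - 2k + 1)}$, rather than the crude $q^{-(\ell - k)}$ one would get from a naive term-by-term estimate using only the $\lambda_i(T_{k, k-1})$ values.
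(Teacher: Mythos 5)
Your setup is sound and matches the paper's up to your (correct) reduction to linear subspaces of the quotient by $\rs$: both arguments pass to the two-step Markov operator $T$ on $\sub{\KL}_\rs$ and observe that its entries depend only on $\dim(\ks_1\cap\ks_2)$, so that $T=\sum_j\alpha_j T_{k,j}$ lies in the Bose--Mesner algebra of the Grassmann scheme. The gap is in the final step. The entire quantitative content of the lemma is the claim that $\max_{i\ge 1}\bigl|\sum_j\alpha_j E_{k,j}(i)\bigr|\le(1+o(1))\,q^{-(\ell-2k+1)}$, where $E_{k,j}(i)$ is the eigenvalue of $T_{k,j}$ on the $i$-th isotypic component, and you assert this (``one finds that\ldots'', ``the answer combines so that\ldots'') rather than prove it. Worse, the eigenvalue data your plan requires is not obtainable by the route you indicate: \lref[Fact]{fact:ev_grassmann} gives eigenvalues only for $T_{k,k-1}$, and \lref[Claim]{claim:ev_relation of grassmanns} relates only the top nontrivial eigenvalue of $T_{k,k-2}$ to that of $T_{k,k-1}^{2}$, with an additive error of order $q^{-k}$; iterating it neither reaches general $T_{k,j}$ with controlled error nor says anything about the eigenvalues on the higher components $V_i$, $i\ge 2$, over which you must maximize. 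You would also have to confront signs: several $E_{k,j}(i)$ are negative (already $E_{1,0}(1)<0$), so it must be verified that a term-by-term absolute-value bound still lands at $q^{-(\ell-2k+1)}$ rather than at the larger $q^{-(\ell-k)}$ that you yourself flag as the naive outcome. As written, the proof is missing exactly the bookkeeping it identifies as ``the main obstacle.''

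The paper sidesteps all of this with an elementary device worth internalizing. Writing $T=\sum_{\RL'}p_{\RL'}G_{\RL'}$ with \emph{unnormalized} $0/1$ relation matrices $G_{\RL'}$ and using $J=\sum_{\RL'}G_{\RL'}$, one substitutes $G_{\RL}=J-\sum_{\RL'>\RL}G_{\RL'}$ to obtain $T=p_{\RL}J+\sum_{\RL'>\RL}(p_{\RL'}-p_{\RL})G_{\RL'}$. On $v\perp\one$ the $J$ term vanishes, so the dense relation (the one carrying almost all the probability mass) is disposed of without computing any of its eigenvalues, and each remaining sparse relation is bounded crudely by $\norm{G_{\RL'}v}\le\deg(G_{\RL'})\norm{v}$. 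The product $p_{\RL'}\deg(G_{\RL'})\approx q^{-(\RL'-\RL)(\LL-2\KL+\RL')}$ is largest at $\RL'=\RL+1$, which yields the exponent $\LL-2\KL+\RL+1$ directly. To repair your proof, either adopt this subtract-$J$-and-use-degree-bounds argument, or actually supply the $q$-Eberlein eigenvalue formulas for all $T_{k,j}$ together with the sign analysis; the current write-up stops exactly where the lemma begins.
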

\begin{proof}
Fix an $\RL$ dimensional subspace $ \rs\subseteq \F^m, \rs\neq\emptyset$ and recall that
\[ \mathcal{A}_\rs^\KL = \left\{ \ks\subset\F^m | \dim(\ks) = \KL, \rs\subset \ks \right\} .\]
Let $G=G(\mathcal{A}_\rs^\KL,\mathcal{A}_\rs^\LL)$ be the biregular bipartite inclusion graph and let $d_\KL$ (resp. $d_\LL$) denote the degree of vertex in $\mathcal{A}_\rs^\KL$ (resp. $\mathcal{A}_\rs^\LL$).

For every $n,t,j\in\mathbb{N}$, let $h(n,t,j)$ be the number of $t$ dimensional subspaces in $\F^n$ that contain a specific dimention $j$ subspace,
\begin{equation}
\label{eq:boundsonh}
 h(n,t,j) = \frac{(q^n - q^j)\cdots(q^n - q^{t-1})}{(q^t - q^j)\cdots(q^t - q^{t-1})}\approx q^{(n-t)(t-j)},
 \end{equation}
where $\approx$ denotes equality up to a multiplicative factor $(1\pm o(1))$, as before. For any fixed $j$ dimensional subspace $X$, the numerator equals the number of $t-j$ linearly independent points $y_1, y_2,\ldots, y_{t-j}$ in $\F^n$ such that $\dim({\mathtt{span}}(X, y_1, y_2, \ldots, y_{t-j})) = t$, whereas for every $t$ dimensional subspace $Z$, the denominator equals the double counting of $Z$, i.e the number of $t-j$ linearly independent points $y_1 , y_2, \ldots, y_{t-j}$ such that ${\mathtt{span}}(X, y_1, y_2, \ldots, y_{t-j}) = Z$. We can now bound the number of vertices and the left and right degree in $G$.
\[ \begin{array}{cc}
\abs{\mathcal{A}_\rs^\KL} = h(m,\KL,\RL),  &\quad\quad \abs{\mathcal{A}_\rs^\LL} = h(m,\LL,\RL),  \\
d_\KL = h(m,\LL,\KL), &\quad\quad  d_\LL = h(\LL,\KL,\RL).
\end{array}
\]
%

Let $T$ be the two-step Markov operator on the bipartite graph  $G$, starting from $\mathcal{A}^\KL_\rs$, we want to calculate the entries of $T$. Let $\ks_1, \ks_2 \in \mathcal{A}_\rs^\KL$, by definition $(T)_{\ks_1,\ks_2}$ is the probability that a two-step random walk will end at $\ks_2$, conditioned on it starting from $K_1$.

Let $\RL' = \dim(\ks_1\cap \ks_2) \geq \RL$, in this notation $\dim(\ks_1\cup  \ks_2) = 2\KL-\RL'$. Any $2$ step random walk from $\ks_1$ to $\ks_2$ looks like $\ks_1 \rightarrow \ls' \rightarrow \ks_2$ where $\ls'$ is an $\LL$ dimentional subspace containing both $\ks_1$ and $\ks_2$. The number of such $\ls'$ is exactly $h(m, \LL, 2\KL-\RL')$. Thus, $(T)_{\ks_1,\ks_2}$ equals
\begin{equation}
\label{eq:pr-prime}
(T)_{\ks_1,\ks_2}=\Pr[\text{R.W ends at } \ks_2 | \text{ R.W starts at } \ks_1  ] = \frac{h(m, \LL, 2\KL-\RL')}{d_\KL\cdot d_\LL} = \frac{h(m, \LL, 2\KL-\RL')}{h(m, \LL, \KL) \cdot h(\LL, \KL, \RL)}.
\end{equation}
This probability is the same for every $\ks_1, \ks_2 \in \mathcal{A}_\rs^\KL$ such that $\dim(\ks_1\cap \ks_2)=\RL' $, so we can denote this value by $p_{\RL'}=(T)_{\ks_1,\ks_2}$.  Notice that $ p_{\RL'} \geq p_{\RL}$ for every $\RL' \geq \RL$.

Let $G_{\RL'}$ be the graph with vertex set $\mathcal{A}_\rs^\KL$, where $\ks_1,\ks_2$ are connected by an edge if $\dim(\ks_1\cap\ks_2) = \RL'$. We also denote the $0/1$ adjacency matrix of graph $G_{\RL'}$ by $G_{\RL'}$. With these notations, the $2$ step Markov operator $T$ equals
\[ T = \sum_{\RL'=\RL}^{\KL}p_{\RL'}G_{\RL'}.  \]
Notice that this is not a convex combination, $\sum_{\RL'}p_{\RL'}\neq 1$, but rather $p_{\RL'}$ are the entries of $T$, and $G_{\RL'}$ are $0/1$ matrices.

Let $J$ be the all $1$ matrix, we know that $J=\sum_{\RL'=\RL}^{\KL}G_{\RL'}$. The first matrix in the sum $G_\RL$ is the only non sparse matrix, since for every subspace $\ks_1\in \mathcal{A}^\KL_r$, almost all other subspaces intersects with $\ks_1$ only in $\rs$. Therefore we can write $G_\RL = J- \sum_{\RL'=\RL+1}^{\KL}G_{\RL'}$, and get
\[ T = p_{\RL}J + \sum_{\RL'=\RL+1}^{\KL}(p_{\RL'}-p_\RL)G_{\RL'}. \]

Since $T$ is a Markov operator of a regular graph, the all $\bf 1$ vector is the vector with the maximal eigenvalue, which equals $1$. Since $G_{\RL'}$ are also regular graphs, $\bf 1$ is the vector with the maximal eigenvalue, which equals $\deg(G_{\RL'})$, which is the number of $\ks'\in\mathcal{A}_\rs^\KL$ such that $\dim(\ks\cap \ks')=\RL' $ (as the adjacency matrices are not normalized).
\begin{align*}
	\deg(G_{\RL'}) = & h(\KL,\RL',\RL)\cdot \frac{(q^m - q^\KL)\cdots(q^m - q^{2\KL-\RL'-1})}{(q^\KL - q^{\RL'})\cdots(q^\KL - q^{\KL-1})}\\
	\approx & q^{(\KL-\RL')(\RL'-\RL)}\cdot q^{(m-\KL)(\KL-\RL')} = q^{(\KL-\RL')(m-\KL+\RL'-\RL)}
\end{align*}
For every $\ks\in\mathcal{A}^\KL_\rs$, the factor $h(\KL,\RL',\RL)$ is the number of $\RL'$ dimensional subspace in $\ks$ that contain $\rs$, the second factor is the number of $\KL$ dimensional subspaces that intersect with $\ks$ only in a specific  $\RL'$ dimensional subspace.

Let $v$ be the normalized eigenvector of the second eigenvalue of $T$, this means that $v\perp\bf 1$ and $\norm{v} = 1$. Since $J$ is the all $1$ matrix, $Jv = 0$. We also know that for every $\RL' > \RL$, $\norm{G_{\RL'}v}\leq \deg(G_{\RL'})$, as it is true for every vector $v$.
\begin{align}
	\norm{Tv}  =& \norm{\sum_{\RL'=\RL+1}^{\KL}(p_{\RL'}-p_\RL)G_{\RL'}v}\nonumber\\
	\leq& \sum_{\RL'=\RL+1}^{\KL}(p_{\RL'}-p_\RL)\norm{G_{\RL'}v} \tag*{(triangle inequality)}\\
	\leq& \sum_{\RL'=\RL+1}^{\KL}p_{\RL'}\deg{(G_{\RL'})} \nonumber
\end{align}
For every $\RL'$, by using the expression for $p_{\RL'}$ from (\ref{eq:pr-prime}) and bounds on $h$ from (\ref{eq:boundsonh}) we get that
\[ p_{\RL'}\deg{(G_{\RL'})} \approx p_{\RL'}q^{(\KL-\RL')(m-\LL+\RL'-\RL)}\approx q^{-(\RL'-\RL)(\LL-2\KL+\RL')}. \]
Since $\RL' > \RL$, $(\RL'-\RL)(\LL-2\KL+\RL')$ is minimized when $\RL' = \RL+1$ and hence
\[\lambda(T) = \norm{Tv}\leq (1+o(1)) \sum_{\RL'=\RL+1}^{\KL}\frac{1}{q^{\LL-2\KL+\RL'}}\leq (1+o(1))\cdot \frac{1}{q^{\LL-2\KL+\RL+1}} .\]
The lemma statement now follows from the \lref[Fact]{fact:one vs two steps}.
\end{proof}

\section{Spectral Expansion Properties Proofs}
\label{section:missing proofs}

\begin{lemma} [Restatement of  {\lref[Lemma]{lemma:edge sampling}}]
Let $D_1, D_2$ as defined in \lref[Definition]{def:D1D2}. Let $G = (A\cup B,E)$ be a bi-regular bipartite graph, then for every subset $B'\subset B$ of measure $\mu>0$ and every $E'\subset E$
\[ \abs{ \Pr_{(a,b)\sim D_1}[(a,b)\in E'] - \Pr_{(a,b)\sim D_2}[(a,b)\in E']} \leq \frac{\lambda(G)}{\sqrt{\mu}} .\]
Where is $D_2$ returned $\fail$, we treat is as it is not in $E'$.
\end{lemma}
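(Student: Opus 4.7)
The plan is to compare the two distributions by rewriting them as sums over $a\in A$ involving the counts $k(a) := \card{N(a)\cap B'}$ and $k'(a) := \card{\{b\in N(a)\cap B' : (a,b)\in E'\}}$, and then to extract the desired bound from the fact that the spectral gap forces $k(a)$ to concentrate around its mean $d_A\mu$, where $d_A$ is the $A$-side degree.

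First I would rewrite the two probabilities in a common form. Sampling from $D_1$ is equivalent to sampling an edge uniformly from the set $E_{B'}$ of edges with endpoint in $B'$, which by bi-regularity has size $|A|d_A\mu$, giving $\Pr_{D_1}[E'] = \frac{1}{|A|d_A\mu}\sum_a k'(a)$. For $D_2$, the convention that fail counts as not in $E'$ yields $\Pr_{D_2}[E'] = \frac{1}{|A|}\sum_{a : k(a)>0} k'(a)/k(a)$, and since $k(a)=0$ forces $k'(a)=0$ I may freely extend the sum to all $a$. Subtracting, the integrand becomes $k'(a)\left(\frac{1}{d_A\mu}-\frac{1}{k(a)}\right)$. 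Using the bound $0\le k'(a)\le k(a)$ to kill the problematic $1/k(a)$, I get
\[ \abs{\Pr_{D_1}[E'] - \Pr_{D_2}[E']} \le \E_a\insquare{\frac{\abs{k(a)-d_A\mu}}{d_A\mu}} = \frac{1}{\mu}\E_a\insquare{\abs{k(a)/d_A-\mu}}, \]
and one application of Cauchy-Schwarz gives $\le \frac{1}{\mu}\sqrt{\E_a[(k(a)/d_A-\mu)^2]}$.

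The final step uses the spectral gap. Let $M$ be the normalized ``averaging'' operator sending $f\in\mathbb{R}^B$ to $(Mf)(a) = \frac{1}{d_A}\sum_{b\in N(a)} f(b)$. Under the bi-regularity assumption this operator sends $\one$ to $\one$, has $\norm{M\one}=1$, and on $\one^\perp$ has operator norm $\lambda(G)$ (this is exactly the definition in \lref[Definition]{def:lambda}). Note that $(M\one_{B'})(a) = k(a)/d_A$. Decompose $\one_{B'} = \mu\one + g$ with $g\perp\one$ and $\norm{g}^2 = \mu - \mu^2 \le \mu$, so that $k(a)/d_A - \mu = (Mg)(a)$ and
\[ \E_a\insquare{(k(a)/d_A - \mu)^2} = \norm{Mg}^2 \le \lambda(G)^2 \norm{g}^2 \le \lambda(G)^2\mu. \]
Plugging this into the Cauchy-Schwarz bound gives exactly $\lambda(G)/\sqrt{\mu}$.

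The main obstacle is the presence of $1/k(a)$ in $\Pr_{D_2}[E']$, which a priori could blow up when $k(a)$ is small and which also forces us to think about the fail case carefully. The trick that keeps the argument clean is to use $k'(a)\le k(a)$ pointwise, which collapses the difference into a first absolute moment of $k(a)/d_A - \mu$; everything after that is routine (Cauchy-Schwarz plus the standard spectral-gap identity applied to the indicator $\one_{B'}$).
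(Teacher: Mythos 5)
Your proof is correct and follows essentially the same route as the paper's: both arguments reduce the difference to the deviation of $M\one_{B'}$ from $\mu\one$ (your $k(a)/d_A-\mu$), weighted by the conditional probabilities $k'(a)/k(a)$, and then bound that deviation by $\lambda(G)\norm{\one_{B'}-\mu\one}\le\lambda(G)\sqrt{\mu}$. The only cosmetic difference is that you apply the pointwise bound $k'(a)/k(a)\le 1$ before Cauchy--Schwarz, whereas the paper applies Cauchy--Schwarz first and then uses $\norm{P}\le 1$.
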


\begin{proof}
In the proof we represent both probabilities as an inner product, and then use $\lambda(G)$ to bound the difference. Let $M\in\mathbb{R}^{A\times B}$ the adjacency matrix of the graph $G$, normalized such that $M\one = \one$ (where the first $\one$ is of dimension $\abs{B}$ and the second of dimension $\abs{A}$).
We define the matrix $M'$ representing the subset of edges $E'$, $M'_{a,b} = M_{a,b} \cdot (\one_{E'})_{a,b}$.

Starting with the probability of $(a,b)\sim D_1$, the vector $M'\one_{B'}$ satisfies that for every $a\in A$, $(M'\one_{B'})_a = \Pr_{b\in N(a)}[(a,b)\in E',b\in B']$.
\begin{align*}
\langle \one ,M'\one_{B'} \rangle  =&
\E_{a\sim A}\left[ E_{b\sim N(a)}[\mathbb{I}((a,b)\in E',b\in B')] \right] \\=&
\Pr_{a\sim A,b\sim N(a)}[(a,b)\in E',b\in B'] \tag*{(using bi-regularity of $G$)}\\=&
\Pr_{b\sim B,a\sim N(b)}[(a,b)\in E',b\in B'] \\=&
\Pr_{b\sim B}[b\in B']\cdot \Pr_{b\sim B,a\sim N(b)}[(a,b)\in E'\mid b\in B']  \\=&
\mu \cdot \Pr_{(a,b)\sim D_1}[(a,b)\in E'].
\end{align*}
We now want to represent the second probability as an inner product. We define the vector $P\in [0,1]^A$ as follows, for each $a\in A$:
\begin{enumerate}
	\item If $N(a)\cap B' = \emptyset$, then $P_a = 0$.
	\item Else, $P_a= \Pr_{b\in N(a)}[(a,b)\in E' \mid b\in B']$.
\end{enumerate}
In this notation $\Pr_{(a,b)\sim D_2}[(a,b)\in E'] = \langle \one ,P \rangle$.

We now want to find a connection between the inner products.
If $P_a\neq 0$, then it defined as the conditional probability, and
\[ \Pr_{b\sim N(a)}[b\in B', (a,b)\in E'] =  \Pr_{b\sim N(a)}[b\in B']\Pr_{b\sim N(a)}[ (a,b)\in E'\mid b\in B'] =  \Pr_{b\sim N(a)}[b\in B']P_a .\]
If $P_a = 0$ then also  $\Pr_{b\sim N(a)}[b\in B', (a,b)\in E'] = 0$, and the above equality still holds.
We notice that $(M'\one_{B'})_a = \Pr_{b\in N(a)}[(a,b)\in E',b\in B']$ and $(M\one_{B'})_a = \Pr_{b\in N(a)}[b\in B']$, which means that for every $a\in A$, $(M'\one_{B'})_a = (M\one_{B'})_a P_a$ and
\[ \langle M\one_{B'} ,P \rangle  = \langle \one ,M'\one_{B'} \rangle .\]

Therefore we can express the difference between the two probabilities as
\begin{align}
\abs{\Pr_{(a,b)\sim D_1}[(a,b)\in E'] - \Pr_{(a,b)\sim D_2}[(a,b)\in E']} =& \abs{\frac{1}{\mu}
\langle \one ,M'\one_{B'} \rangle - \langle \one ,P \rangle} \label{eq:edge sampling final}\\ =&
\abs{\frac{1}{\mu} \langle M\one_{B'} ,P \rangle - \langle \one ,P \rangle} \nonumber\\ =&
\frac{1}{\mu} \abs{\langle M\one_{B'} -\mu\one,P \rangle} \nonumber\\ \leq &
\frac{1}{\mu}\norm{M\one_{B'} -\mu\one}\norm{P} \tag*{(By Cauchy Swartz)}
\end{align}
Since $P$ is a vector in $[0,1]$ and the inner product we use is expectation, $\norm{P}\leq 1$.
In order to finish the proof we need to bound the size of the vector
\[ M\one_{B'} -\mu\one = M\one_{B'} -\mu M\one = M(\one_{B'} - \mu\one) .\]
We notice that $\one_{B'}$ is a $\{0,1\}$ vector of measure $\mu$, so $\langle \one_{B'}, \one \rangle = \langle \one_{B'}, \one_{B'} \rangle = \mu$, and $(\one_{B'} - \mu\one) \perp \one_{B}$. By the definition of $\lambda(G)$, this means that
\[ \norm{M(\one_{B'} - \mu\one)}\leq \lambda(G)\norm{\one_{B'} - \mu\one} \leq \lambda\sqrt{\mu}. \]
We substitute the norm of the vector in equation (\ref{eq:edge sampling final}) and we are done.
\end{proof}

\begin{lemma}[Restatement of {\lref[Lemma]{lemma:two edges sampling}}]
	Let $D_3, D_4$ as defined in \lref[Definition]{def:D3D4}. Let $G=(A\cup B,E)$ be a bi-regular bipartite graph, such that every two distinct  $b_1,b_2\in B$ have exactly the same number of common neighbors (i.e for all distinct $b_1,b_2\in B$, $|N(b_1)\cap N(b_2)|$ is the same), and this number is non-zero. Then for every subset $B'\subset B$ of measure $\mu>0$ and every $E'\subset E$
	\[ \abs{\Pr_{a,b_1,b_2\sim D_3}[(a,b_1)(a,b_2)\in E'] - \Pr_{a,b_1,b_2\sim D_4}[(a,b_1)(a,b_2)\in E']} \leq \frac{2\lambda(G)}{\mu} +\frac{1}{\mu^2d_A} + \frac{1}{\mu^2\abs{B}} \]
	Where is $D_4$ returned $\fail$, we treat is as it is not in $E'$ and $d_A$ is the degree on $A$ side.
\end{lemma}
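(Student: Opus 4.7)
The plan is to follow the template of the proof of Lemma~\ref{lemma:edge sampling}, now tracking two edges $(a, b_1)$ and $(a, b_2)$ incident to the common vertex $a$. The key structural ingredient is the graph identity $|A|d_A(d_A-1) = |B|(|B|-1)\nu$, where $\nu = |N(b_1) \cap N(b_2)|$ is the common number of common neighbors of any two distinct $b_1, b_2 \in B$; this identity lets me rewrite $\Pr_{D_3}[E']$ entirely in terms of quantities indexed by $a \in A$. Setting $X(a) := |\{b \in N(a)\cap B' : (a,b) \in E'\}|$ and $Y(a) := |N(a) \cap B'|$, I would split each side according to whether $b_1 = b_2$ or $b_1 \neq b_2$, obtaining
\[
\Pr_{D_3}[E'] \;=\; \frac{|B|-1}{|B|}\cdot \frac{\E_a[X(a)(X(a)-1)]}{\mu^2 d_A(d_A-1)} \;+\; \frac{\Pr_{D_1}[E']}{|B'|},
\]
\[
\Pr_{D_4}[E'] \;=\; \E_a\!\left[\frac{X(a)(X(a)-1)}{Y(a)^2}\right] \;+\; \E_a\!\left[\frac{X(a)}{Y(a)^2}\right],
\]
with the convention $0/0 = 0$; in each line the second summand is the $b_1 = b_2$ contribution.

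The main spectral step compares the leading terms $\E_a[X(X-1)]/(\mu d_A)^2$ and $\E_a[X(X-1)/Y^2]$. Define $v := M\one_{B'}$, so $v(a) = Y(a)/d_A$ and $\E[v] = \mu$; by Claim~\ref{claim:bound inner product}, $\norm{v - \mu\one}^2 \le \lambda(G)^2 \mu$. I would then expand $1/v^2 - 1/\mu^2 = (\mu - v)(\mu + v)/(\mu v)^2$ and apply Cauchy--Schwarz, combined with a case split that separates typical vertices $a$ with $Y(a) \ge \mu d_A/2$ from atypical ones (whose density is at most $4\lambda(G)^2/\mu$ by Chebyshev). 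This should control the difference by $2\lambda(G)/\mu$; the factor of $2$ appears naturally because the numerator $X(X-1)$ is quadratic in the edge count. The multiplicative correction $(|B|-1)/|B|$ contributes only an error of order $1/|B|$, absorbed into the $1/(\mu^2 |B|)$ boundary term.

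The boundary contributions from $b_1 = b_2$ are handled directly. Under $D_3$ the event $b_1 = b_2$ has probability $1/|B'| = 1/(\mu |B|) \le 1/(\mu^2 |B|)$. Under $D_4$ the contribution is $\E_a[X(a)/Y(a)^2] \le \E_a[\mathbb{I}(Y(a)>0)/Y(a)]$, which I would bound using the same concentration of $Y(a)$ around $\mu d_A$: typical $a$ with $Y(a) \ge \mu d_A/2$ contribute at most $2/(\mu d_A) \le 1/(\mu^2 d_A)$, while atypical $a$ contribute $O(\lambda(G)^2/\mu)$, absorbed into the main $2\lambda/\mu$ term. Summing all contributions yields the claimed inequality.

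The main obstacle is extracting the sharp constant $2\lambda(G)/\mu$ in the spectral step: a single naive Cauchy--Schwarz yields only $\lambda(G)/\mu^{3/2}$, so a finer case analysis distinguishing typical vertices (where $Y(a)$ concentrates around its mean and a first-order Taylor bound suffices) from the rare atypical ones (which contribute singularly to $1/Y(a)^2$ but have tiny total measure by the spectral bound on $M\one_{B'}$) is required. Balancing these two regimes against each other is the most delicate bookkeeping in the proof.
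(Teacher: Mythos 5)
Your decomposition matches the paper's proof in substance: both probabilities are split into their $b_1=b_2$ and $b_1\neq b_2$ parts, the constant-codegree hypothesis is used to convert the $D_3$-side sum into an $A$-indexed average, and the spectral input is the bound $\norm{M\one_{B'}-\mu\one}\le\lambda(G)\sqrt{\mu}$. The diagonal bookkeeping is also essentially right (the $D_3$ diagonal has mass $1/\abs{B'}$ and the $(\abs{B}-1)/\abs{B}$ correction points the other way, so together they fit in $1/(\mu^2\abs{B})$).

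The gap is in what you call the main spectral step. Expanding $1/Y(a)^2-1/(\mu d_A)^2$ and splitting into typical/atypical $a$ does not, as sketched, land on the stated bound: the atypical vertices add an extra $\Theta(\lambda(G)^2/\mu)$ on top of the $2\lambda(G)/\mu$ already consumed by the typical ones, your bound $2/(\mu d_A)\le 1/(\mu^2 d_A)$ for $\E_a[X/Y^2]$ fails once $\mu>1/2$, and the discrepancy between $d_A(d_A-1)$ and $d_A^2$ is silently dropped. More importantly, the case split is unnecessary, and the paper's proof shows why. Set $P_a = X(a)^2/Y(a)^2\in[0,1]$ (with $P_a=0$ when $Y(a)=0$), so that $\Pr_{D_4}[E']=\langle \one,P\rangle$, and set $v_a=(M\one_{B'})_a=Y(a)/d_A$. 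Then $P_a v_a^2=X(a)^2/d_A^2=(M'\one_{B'})_a^2$ exactly, hence
\[ \abs{\tfrac{1}{\mu^2}\langle M'\one_{B'},M'\one_{B'}\rangle-\langle\one,P\rangle}=\tfrac{1}{\mu^2}\abs{\E_a\left[P_a(v_a-\mu)(v_a+\mu)\right]}\le\tfrac{1}{\mu^2}\norm{v-\mu\one}\cdot\norm{v+\mu\one}\le\tfrac{1}{\mu^2}\cdot\lambda(G)\sqrt{\mu}\cdot 2\sqrt{\mu}=\tfrac{2\lambda(G)}{\mu}. \]
In other words: instead of dividing by the possibly tiny $Y(a)^2$, first bound the ratio $X(X-1)/Y^2\le P_a\le 1$ and multiply through by $v_a^2$; then the difference of squares factors and a single Cauchy--Schwarz gives the sharp constant for \emph{all} $a$ at once. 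Your observation that a naive Cauchy--Schwarz only yields $\lambda(G)/\mu^{3/2}$ is an artifact of applying it to the expression containing $1/Y(a)^2$ rather than to $P_a(v_a^2-\mu^2)$. With this substitution the only remaining error is the diagonal discrepancy, at most $1/d_A+1/\abs{B}$ before dividing by $\mu^2$, and the lemma follows as stated.
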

\begin{proof}
This proof is similar in spirit to the proof of \lref[Lemma]{lemma:edge sampling}, with more complication since the event contains two edges instead of a single one.

Let $M\in\mathbb{R}^{A\times B}$ the adjacency matrix of the graph $G$, normalized such that $M\one = \one$.
We denote by $M'$ the matrix that represents the edges in $E'$, i.e for each $a\in A,b\in B$, $M'_{a,b} = M_{a,b} \cdot (\one_{E'})_{a,b}$.

Starting from $D_3$, we first write the conditional probability
\begin{align}
\Pr_{\substack{b_1,b_2\\a\sim N(b_1)\cap N(b_2)}}[b_1,b_2\in B',(a,b_1),(a,b_2)\in E'] =& \label{eq:prelim equality}
\Pr_{b_1,b_2}[b_1,b_2\in B']\Pr_{a,b_1,b_2\sim D_3}[(a,b_1),(a,b_2)\in E']  \\=&
\mu^2\Pr_{a,b_1,b_2\sim D_3}[(a,b_1),(a,b_2)\in E'] \nonumber
.
\end{align}

We want to express the left side as an inner product, we notice that for each $a\in A$:
\[ (M' \one_{B'})_a = \E_{b\sim N(a)}[\mathbb{I}(b\in B', (a,b)\in E')]  .\]
Therefore the inner product satisfies
\begin{align}
\langle  M'\one_{B'},M'\one_{B'}\rangle = &\E_{a\sim A}\left[\E_{b_1,b_2\sim N(a)}[\mathbb{I}(b_1,b_2\in B', (a,b_1)(a,b_2)\in E')]\right] \label{eq:d1 inner product}\\=&
\Pr_{a\sim A,b_1,b_2\sim N(a)}[b_1,b_2\in B', (a,b_1)(a,b_2)\in E'] \nonumber
\end{align}
Since each two $b_1,b_2\in B$ has the same number of neighbors,
\[ \Pr_{\substack{a\sim A\\b_1\neq b_2\sim N(a)}}[b_1,b_2\in B', (a,b_1)(a,b_2)\in E'] = \Pr_{\substack{b_1\neq b_2\sim B\\a\sim N(b_1)\cap N(b_2)}}[b_1,b_2\in B', (a,b_1)(a,b_2)\in E'] .\]
We want to switch the expression in (\ref{eq:d1 inner product}) by the one is (\ref{eq:prelim equality}), we know that they are equal when $b_1\neq b_2$. But the probability of $b_1=b_2$ is different between the two cases, it is $\frac{1}{d_A}$ if we pick neighbors of $a$ and $\frac{1}{\abs{B}}$ if we pick two random vertices in $B$.
If we add the probability of $b_1=b_2$ as an error, we get that
\begin{align}
\abs{\mu^2\Pr_{a,b_1,b_2\sim D_3}[ (a,b_1)(a,b_2)\in E'] - \langle  M'\one_{B'},M'\one_{B'}\rangle} \leq \frac{1}{d_A} + \frac{1}{\abs{B}}
\label{eq:d1 bound}
\end{align}

Now we want to express the probability of $a,b_1,b_2\sim D_4$ as an inner product. In order to do that, we define the vector $P$, for every $a\in A$
\begin{enumerate}
	\item If $N(a)\cap B' = \emptyset$, then $P_a= 0$.
	\item Else, $P_a= \Pr_{b_1,b_2\sim N(a)}[(a,b_1)(a,b_2)\in E' \mid b_1,b_2\in B']$.
\end{enumerate}
The vector $P$ is defined such that
\[ \Pr_{a,b_1,b_2\sim D_4}[(a,b_1)(a,b_2)\in E'] = \E_a[P_a] =  \langle \one, P \rangle. \]

We want to find a connection between this expression and the expression representing the probability $ \Pr_{a,b_1,b_2\sim D_3}[(a,b_1)(a,b_2)\in E']$.

We use (\ref{eq:d1 bound}) and the triangle inequality to bound the difference between the two target probabilities
\begin{align}
\abs{\Pr_{a,b_1,b_2\sim D_3}[(a,b_1)(a,b_2)\in E'] - \Pr_{a,b_1,b_2\sim D_4}[(a,b_1)(a,b_2)\in E']}\leq& \abs{\frac{1}{\mu^2 }\langle  M'\one_{B'},M'\one_{B'}\rangle - \langle \one, P \rangle} \nonumber\\
&+ \frac{1}{\mu^2d_A} + \frac{1}{\mu^2\abs{B}} \label{eq:diff}
\end{align}

We now need to bound the expression in (\ref{eq:diff}), in order to do that, we will first show that
\begin{align}
\langle  M'\one_{B'},M'\one_{B'}\rangle=  \Pr_{a\sim A,b_1,b_2\sim N(a)}[(a_1,b)(a_2,b)\in E',  b_1,b_2\in B'] = \E_a[P_a (M\one_{B'})_a^2].\label{eq:connection}
\end{align}
We notice that for $a$ such that $P_a >0$, it equals the conditional probability and
\[ \Pr_{b_1,b_2\sim N(a)}[(a_1,b)(a_2,b)\in E' , b_1,b_2\in B'] = \Pr_{b_1,b_2\sim N(a)}[b_1,b_2\in B'] P_a.\]
If $a$ is such that $P_a = 0$, then $\Pr_{b_1,b_2\sim N(a)}[(a_1,b)(a_2,b)\in E' , b_1,b_2\in B'] = 0$ and the above equality still holds. We further notice that
\[(M\one_{B'})_a = \E_{b\sim N(a)}[\mathbb{I}(b\in B')].\]
If we substitute $\Pr_{b_1,b_2\sim N(a)}[b_1,b_2\in B']$ in $(M\one_{B'})_a^2$, we get (\ref{eq:connection}).

In order to finish the proof, we upper bound
\[ \abs{\frac{1}{\mu^2 }\langle  M'\one_{B'},M'\one_{B'}\rangle - \langle \one, P \rangle}  = \abs{\E_a\left[\frac{1}{\mu^2}P_a (M\one_{B'})_a^2 - P_a\right]} = \frac{1}{\mu^2}\abs{\E_a[P_a( (M\one_{B'})_a^2 - \mu^2)]}. \]

We now upper bound the expectation as follows,
\begin{align}
\E_a[P_a( (M\one_{B'})_a^2 - \mu^2)] =& \E_a[P_a((M\one_{B'})_a - \mu )((M\one_{B'})_a + \mu )] \nonumber\\ \leq& \max_{a}\{|P_a|\}\E_a[\abs{((M\one_{B'})_a - \mu )((M\one_{B'})_a + \mu )}]\nonumber\\ \leq&
\label{eq:cauchy swartz}
\norm{M\one_{B'} - \mu\one }\norm{M\one_{B'} + \mu \one} \\\leq&
\label{eq:sizes}
\lambda\sqrt{\mu} \sqrt{4\mu},
\end{align}
where (\ref{eq:cauchy swartz}) is due to Cauchy-Schwarz inequality and using $\abs{P_a}\leq 1$.
In (\ref{eq:sizes}), we bound

$\norm{M\one_{B'}- \mu \one }$ like in the previous proof,
\[ \norm{M\one_{B'}- \mu \one } = \norm{M\one_{B'}- \mu M\one } = \norm{M(\one_{B'} - \mu\one)}\leq \lambda \norm{\one_{B'}}\leq\lambda\sqrt{\mu} .\]
Finally, we bound  $\norm{M\one_{B'} + \mu\one}$:
\begin{align*}
\norm{M\one_{B'} + \mu\one }^2 =& \langle M\one_{B'} + \mu\one,M\one_{B'} + \mu\one\rangle \\=&
\langle M\one_{B'} ,M\one_{B'} \rangle +
2\langle M\one_{B'},\mu\one\rangle +
\langle \mu\one, \mu\one\rangle \\\leq&
\norm{\one_{B'}}^2 + 2\mu + \mu^2\norm{\one}^2 \\ \leq&
\mu + 2\mu + \mu^2 \leq 4\mu.
\end{align*}
\end{proof}

\newcommand{\maj}{\mathtt{maj}}
\section{Rubinfeld-Sudan Characterization}
\label{section:RuSu proof}
In this section, we present a proof of \lref[Theorem]{thm:RuSu}.  The proof uses the following fact from \cite{dW70}:

\begin{fact}
\label{fact:low deg char}
Let $f:\F^m\rightarrow\F$ be a function,  and let $N_{y,h} = \{y + ih \mid i\in \{0,\dots,d+1\} \}$. $f$ is degree $d$ iff it satisfies the following identity for all $y$ and $h$:
$$ \sum_{i=0}^{d+1} \alpha_i f(y+ih) = 0,$$
where $\alpha_i = {d+1\choose i}(-1)^{i+1}$.
\end{fact}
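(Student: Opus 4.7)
The identity $\sum_{i=0}^{d+1}\binom{d+1}{i}(-1)^{i+1} f(y+ih) = 0$ is, up to an overall sign $(-1)^d$, the vanishing of the iterated forward difference $\Delta_h^{d+1} f(y)$, where $\Delta_h f(y) := f(y+h) - f(y)$. So the statement reduces to: $f$ has total degree at most $d$ iff $\Delta_h^{d+1} f \equiv 0$ for every $h \in \F^m$. I would prove the two directions separately.

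The forward direction reduces by linearity to checking monomials $f(x) = x^\alpha$ with $|\alpha| \le d$. Then $i \mapsto f(y + ih)$ is a polynomial in $i$ of degree at most $|\alpha| \le d$, and it suffices to invoke the standard Stirling-type identity $\sum_{i=0}^{d+1} \binom{d+1}{i}(-1)^{d+1-i} i^k = 0$ for $0 \le k \le d$, which is exactly the statement that $\Delta_1^{d+1}$ annihilates any univariate polynomial in $i$ of degree at most $d$.

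For the reverse direction I would proceed in two stages: first univariate, then an induction on $m$. For $m=1$, given $g: \F \to \F$ with $\Delta_h^{d+1} g \equiv 0$ for all $h$, let $p$ be the unique degree-$d$ Lagrange interpolant of $g$ at $0, 1, \dots, d$. Then $r := g - p$ also satisfies $\Delta_h^{d+1} r \equiv 0$ and vanishes on $\{0, \dots, d\}$. The identity $\Delta_1^{d+1} r(y) = 0$ expresses $r(y+d+1)$ as a fixed $\F$-linear combination of $r(y), \dots, r(y+d)$ with leading coefficient $\pm 1$, so stepping forward in $y$ propagates $r \equiv 0$ along the additive subgroup generated by $1$. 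Varying the step size $h$ (and assuming $|\F|$ and the characteristic are sufficiently large compared to $d$, as indeed required --- the statement fails in characteristic $2$ for $f(x_1, x_2) = x_1 x_2$) extends $r \equiv 0$ to all of $\F$, so $g = p$ is a polynomial of degree at most $d$.

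For $m \ge 2$, I would induct on $m$. Restricting the hypothesis to directions $h \in \F^{m-1} \times \{0\}$ yields the same hypothesis for $f(\cdot, x_m)$ at each fixed $x_m$, so by induction
\[
 f(x_1, \ldots, x_m) = \sum_{|\alpha| \le d} c_\alpha(x_m)\, x_1^{\alpha_1}\cdots x_{m-1}^{\alpha_{m-1}}
\]
for some functions $c_\alpha: \F \to \F$. Now applying the univariate case to the line $\{(tv, t) : t \in \F\}$ through the origin in direction $(v, 1)$ shows $f(tv, t) = \sum_\alpha c_\alpha(t)\, t^{|\alpha|}\, v^\alpha$ is a polynomial in $t$ of degree at most $d$ for every $v \in \F^{m-1}$. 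Since the monomial functions $v \mapsto v^\alpha$ (for $|\alpha|\le d$) are $\F$-linearly independent on a large-enough grid in $\F^{m-1}$, interpolation in $v$ realizes each $c_\alpha(t)\, t^{|\alpha|}$ as an explicit $\F$-linear combination of polynomials (in $t$) of degree $\le d$, hence $c_\alpha(t)\, t^{|\alpha|}$ itself has $t$-degree $\le d$. This forces $\deg c_\alpha \le d - |\alpha|$, giving the total-degree bound on $f$. The main obstacle is the univariate step --- specifically, propagating $r \equiv 0$ throughout all of $\F$, which requires varying the step size $h$ and leans on the implicit large-characteristic assumption; the multivariate lift is then a clean interpolation argument once the univariate case is in hand.
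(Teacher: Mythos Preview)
The paper does not prove this fact; it simply cites van der Waerden and uses it as a black box, so there is no paper proof to compare against. Your forward direction is correct and standard.

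The converse is where the real issue lies, and it is not merely that your univariate step is hand-wavy --- the step cannot be completed, because the statement as written is \emph{false} over non-prime fields. Take $\F=\F_{p^2}$ with $p>d+1$ and $f(x)=x^p$. Then $\Delta_h f(y)=(y+h)^p-y^p=h^p$ is constant in $y$, so $\Delta_h^2 f\equiv 0$ and hence $\sum_{i=0}^{d+1}\alpha_i f(y+ih)=0$ for every $d\ge 1$; yet $x^p$ agrees with no polynomial of degree $<p$ on $\F_{p^2}$. So your plan to ``vary $h$'' to push the vanishing of $r$ from the prime subfield out to all of $\F$ is doomed --- not for lack of a trick, but because the target statement fails there. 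The fact is really a statement about prime fields $\F_p$ with $p\ge d+2$ (which is the setting in Rubinfeld--Sudan), and in that case your propagation with $h=1$ alone already covers all of $\F_p$; no ``varying $h$'' is needed.

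There is also a smaller gap in the multivariate lift. From ``$c_\alpha(t)\,t^{|\alpha|}$ agrees with a polynomial of degree $\le d$'' you jump to $\deg c_\alpha\le d-|\alpha|$, but at that stage $c_\alpha$ is only a function. Over $\F_p$ the inference can fail: with $|\alpha|=2$ and $c_\alpha(t)=t^{\,p-2}$ one has $c_\alpha(t)\,t^2\equiv t$ as functions, yet $\deg c_\alpha=p-2$. A clean fix is to first use lines in the $e_m$-direction to show each $c_\alpha$ is itself a polynomial of degree $\le d$; then, assuming $p>2d$, the product $c_\alpha(t)\,t^{|\alpha|}$ has honest degree $<p$, so agreement as functions forces agreement as polynomials and hence $\deg c_\alpha\le d-|\alpha|$.
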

Throughout this section we let $\alpha_i = {d+1\choose i}(-1)^{i+1}$ as in the above fact.
\begin{theorem}[Restatement of {\lref[Theorem]{thm:RuSu}}]
Let $f:\F^m\rightarrow\F$ be a function,  and let $N_{y,h} = \{y + ih \mid i\in \{0,\dots,d+1\} \}$, if $f$ satisfies
\begin{equation}
\label{eq:locally low deg}
\Pr_{y,h\in\F^m}[\exists \deg d \text{ polynomial }  p \st p_{|_{N_{y,h}}} = f_{|_{N_{y,h}}}] \geq 1-\delta,
\end{equation}
for $\delta \leq \frac{1}{2(d+2)^2}$,  then there exists a degree $d$ polynomial $g$ such that $f\approxparam{2\delta}g$.
\end{theorem}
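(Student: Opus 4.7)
The strategy follows the classical Rubinfeld--Sudan self-corrector argument. For a point $x\in\F^m$ and direction $h\in\F^m$ write $v_x(h) := \sum_{i=1}^{d+1}\alpha_i f(x+ih)$; by \lref[Fact]{fact:low deg char} the hypothesis rewrites as $\Pr_{y,h}[v_y(h) = f(y)] \ge 1-\delta$, and a function $F$ is a degree-$d$ polynomial iff $\sum_{i=0}^{d+1}\alpha_i F(y+ih) = 0$ for every pair $(y, h)$. Define the candidate
\[
g(x) \;:=\; \mathop{\mathrm{argmax}}_{\sigma\in\F}\,\Pr_h[v_x(h) = \sigma]
\]
(ties broken arbitrarily), and prove in turn: (a) a per-point self-correction bound, (b) the closeness $g\approxparam{2\delta}f$, and (c) that $g$ is itself a polynomial of degree at most $d$.

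The centerpiece is (a): for every $x\in\F^m$, $\Pr_h[v_x(h) = g(x)] \ge 1 - 2(d+1)\delta$. I plan to obtain this via a $(d+2)\times(d+2)$ grid argument. Sample two independent uniformly random directions $h, r\in\F^m$ and consider the grid $\{f(x+ih+jr) : 0\le i, j\le d+1\}$. For each $i\in\{1,\ldots,d+1\}$ the row in direction $r$ based at $x+ih$ corresponds to a uniformly random pair $(x+ih, r)\in\F^m\times\F^m$, so by hypothesis it fails the low-degree identity with probability at most $\delta$; the same holds for each of the $d+1$ columns in direction $h$, which are based at the uniformly random point $x+jr$. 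A union bound over the $2(d+1)$ rows and columns yields that with probability at least $1 - 2(d+1)\delta$ every row and column is a genuine degree-$d$ polynomial. On this event, successively applying the row identities ($f(x+ih) = \sum_j \alpha_j f(x+ih+jr)$) and then the column identities ($f(x+jr) = \sum_i \alpha_i f(x+jr+ih)$) collapses the double sum to $v_x(h) = v_x(r)$. Interpreting this as a collision probability, $\sum_\sigma\Pr_h[v_x(h)=\sigma]^2 \ge 1 - 2(d+1)\delta$, so the plurality value $g(x)$ is attained with at least that probability.

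Claim (b) then drops out from (a) by averaging: any $y$ with $g(y)\ne f(y)$ satisfies $\Pr_h[v_y(h)=f(y)]\le 2(d+1)\delta$ (since $v_y(h)$ can equal at most one of the distinct values $g(y), f(y)$, and equals $g(y)$ with probability $\ge 1-2(d+1)\delta$), while $\E_y\Pr_h[v_y(h)=f(y)]\ge 1-\delta$ by the hypothesis. Rearranging gives $\Pr_y[g(y)\ne f(y)]\le \delta/(1-2(d+1)\delta)\le 2\delta$ under $\delta\le 1/(2(d+2)^2)$.

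For claim (c), by \lref[Fact]{fact:low deg char} I need to verify the deterministic identity $\sum_{i=0}^{d+1}\alpha_i g(y+ih) = 0$ for every fixed pair $(y, h)$. Fix such a pair and pick an auxiliary uniformly random direction $r$. By (a) applied at each of the $d+2$ points $y+ih$, with probability at least $1 - 2(d+1)(d+2)\delta$ over $r$ we have $g(y+ih) = v_{y+ih}(r)$ for every $i$ simultaneously, and expanding and swapping the order of summation gives
\[
\sum_{i=0}^{d+1}\alpha_i g(y+ih) \;=\; \sum_{j=1}^{d+1}\alpha_j \sum_{i=0}^{d+1}\alpha_i f(y+jr+ih).
\]
Each inner sum vanishes when $f$ restricted to the line $N_{y+jr, h}$ is a genuine degree-$d$ polynomial. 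The technical obstacle is that the direction $h$ here is held fixed (rather than uniformly random), so the hypothesis does not directly bound the failure probability of these inner identities. My plan to handle this is to reroute each inner line through a uniformly random direction: first use the closeness (b) to substitute $g$ for $f$ on each of the $O(d^2)$ grid points (each of which is marginally uniform over $r$), then apply the self-correction lemma (a) once more along a second independent uniform direction so that the low-degree identities actually invoked all lie on lines whose direction is uniformly distributed, and only then invoke the hypothesis. A careful union bound keeps the total failure probability at $O((d+2)^2\delta)$, which is strictly below $1$ under $\delta\le 1/(2(d+2)^2)$; since the identity under verification is deterministic, positive success probability suffices to conclude $\sum_{i=0}^{d+1}\alpha_i g(y+ih) = 0$, so \lref[Fact]{fact:low deg char} finishes the proof.

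The main obstacle throughout is claim (c): the adversarially fixed direction $h$ blocks a direct invocation of the hypothesis on lines in direction $h$, which forces the double-substitution rerouting sketched above. The numerical assumption $\delta \le 1/(2(d+2)^2)$ is exactly what guarantees the cumulative union bound over all applications of (a), (b), and the hypothesis stays strictly below $1$.
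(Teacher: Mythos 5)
Your steps (a) and (b) are correct and essentially identical to the paper's argument: the $(d+2)\times(d+2)$ grid with two independent uniform directions, a union bound over the $2(d+1)$ rows and columns, and the collision-probability step give exactly \lref[Claim]{claim:agreement at y}, and your averaging derivation of $f\approxparam{2\delta}g$ is a valid (slightly different) route to the paper's closeness claim. The problem is step (c). You correctly identify the obstacle -- the direction $h$ is adversarially fixed -- but the proposed ``rerouting'' does not remove it. After substituting $g$ for $f$ on the grid points, the quantity you must show vanishes is $\sum_{i=0}^{d+1}\alpha_i g(z+ih)$ with $z=y+jr$ random but $h$ still the same fixed direction: this is precisely an instance of the identity you are trying to prove, so the substitution is circular. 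Applying the self-corrector (a) once more with a single fresh uniform direction $s$ at every point of that line rewrites it as $\sum_{k}\alpha_k\sum_i\alpha_i f\bigl((z+ks)+ih\bigr)$, and the inner identities again live on lines of direction exactly $h$. No number of such passes randomizes $h$, because a common auxiliary direction shifts only the base points, never the direction of the lines on which the hypothesis must be invoked. (Separately, even if it worked, invoking (a) at each of the $\Theta(d^2)$ grid points costs $2(d+1)\delta$ apiece, a union bound of order $d^3\delta$, which exceeds $1$ for $\delta$ near $1/(2(d+2)^2)$.)

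The missing idea is to make the row direction depend on the row index. The paper takes $h_1,h_2$ independent and uniform and forms the matrix $Y_{i,j}=\alpha_i\alpha_j f\bigl(y+ih+j(h_1+ih_2)\bigr)$, i.e.\ row $i$ self-corrects $g(y+ih)$ along the direction $h_1+ih_2$ (still marginally uniform, so (a) applies at cost $2(d+1)\delta$ per row). The payoff appears when you transpose: column $j\geq 1$ consists of the points $(y+jh_1)+i(h+jh_2)$, which is the neighborhood $N_{y+jh_1,\,h+jh_2}$ with base point and direction independent and uniform, so hypothesis \eqref{eq:locally low deg} bounds each column's failure by $\delta$ directly. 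A single union bound over $d+2$ rows and $d+1$ columns then costs $2(d+1)(d+2)\delta+(d+1)\delta<1$ under $\delta\leq\frac{1}{2(d+2)^2}$, and on the good event the row and column identities force $\sum_{i=0}^{d+1}\alpha_i g(y+ih)=0$. You should replace your rerouting step with this skewed grid; the rest of your outline then goes through.
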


\begin{proof}
 Define a function $g:\F^m \rightarrow\F$ to be $g(y) = \maj_{h\in \F^m}\{\sum_{i=1}^{d+1} \alpha_i f(y+ih)\}$ breaking the ties arbitrarily. Next we argue that $g$ is very close to $f$ and $g$ itself is a degree $d$ function.

To see that $g$ is $(1-2\delta)$ close to $f$, consider the set of all $y$ for which $\Pr_h[f(y) = \sum_{i=1}^{d+1} \alpha_i f(y+ih)] >1/2$.  For all these $y$, $f(y)=g(y)$ as $g$ was the majority vote. It is easy to see that fraction of $y$ for which the probability is at most $1/2$ is at most $2\delta$ as otherwise it will contradict the hypothesis ~(\ref{eq:locally low deg}). The rest of the proof will be proving the following two claims.

\begin{claim}
\label{claim:agreement at y}
For all $y\in \F^m$,  $\Pr_h[g(y) = \sum_{i=1}^{d+1} \alpha_i f(y+ih)] \geq 1-2(d+1)\delta.$
\end{claim}

\begin{claim}
\label{claim:g is low deg}
For all $y$ and $h$ in $\F^m$, we have $\sum_{i=0}^{d+1}\alpha_i g(y+ih) = 0$.
\end{claim}

\lref[Claim]{claim:g is low deg}  and \lref[Fact]{fact:low deg char} imply that $g$ is in fact a degree $d$ function and hence the theorem follows. We now proceed with proving these two claims.

%

{\bf Proof of \lref[Claim]{claim:agreement at y}}: We will show that for all $y\in \F^m$,
\begin{equation}
\label{eq:doublecollision}
\Pr_{h_1, h_2}\left[ \sum_{i=1}^{d+1} \alpha_i f(y+ih_1) = \sum_{j=1}^{d+1} \alpha_j f(y+jh_2)\right] \geq 1-2(d+1)\delta.
\end{equation}
Note that this is enough to prove the claim. To see this, let $p_a = \Pr_h[\sum_{i=1}^{d+1} \alpha_i f(y+ih) =a]$ for $a\in \F$. Then ~(\ref{eq:doublecollision}) becomes $\sum_{a\in \F} p_a^2 \geq 1-2(d+1)\delta$. Since $g(y)$ was the majority vote, we have $\Pr_h[g(y) = \sum_{i=1}^{d+1} \alpha_i f(y+ih)] = \max_{a\in \F} p_a \geq \sum_{a\in \F}p_a^2 \geq 1-2(d+1)\delta$.

To prove~(\ref{eq:doublecollision}), consider the following $(d+2)\times(d+2)$ matrix $Z$ with $(i,j)^{th}$ entry $Z_{i,j} = \alpha_i\alpha_j f(y+ih_1+jh_2)$, for $i,j\in\{0,\dots,d+1\}$.
\[ Z = \begin{bmatrix}
    f(y) &  \dots  & \alpha_0\alpha_j f(y+ jh_2) & \ldots \\
    \vdots &  \ddots &  \vdots  & \ddots  \\
    \alpha_i \alpha_0f(y +ih_1)  & \ldots & \alpha_i\alpha_j f(y+i h_1 + jh_2) & \ldots \\
    \vdots  & \ddots  & \vdots & \ddots
\end{bmatrix}
\]
If $h_1\in \F^m$ u.a.r then for any $i\in \{1,2,\ldots, d+1\}$,  $ih_1$ is distributed uniformly in $\F^m$. Same is true for $h_2$ and $jh_2$. Consider the following events:
\begin{itemize}
\item For every $i\in \{1,2,\ldots, d+1\}$, $R_i$ be the event that the sum of the $i$'th row is $zero$, i.e $ \sum_{j=0}^{d+1} Z_{i,j} = 0$.
\item For every $j\in \{1,2,\ldots, d+1\}$, $C_j$ be the event that sum of the $j$'th column is $zero$, i.e $\sum_{i=0}^{d+1} Z_{i,j} = 0$.
\end{itemize}
Note that $R_i,C_j$ are not defined for the first row and column ($i=0$ and $j=0$).
Using the hypothesis ~(\ref{eq:locally low deg}) of the theorem and \lref[Fact]{fact:low deg char}, we have
\begin{align*}
\Pr_{h_1, h_2} [R_i] \geq 1-\delta, \quad \quad \quad \forall i\in \{1,2,\ldots, d+1\}\\
\Pr_{h_1, h_2} [C_j] \geq 1-\delta, \quad \quad \quad \forall j\in \{1,2,\ldots, d+1\}
\end{align*}

The event in ~(\ref{eq:doublecollision}) is same as $\sum_{i=1}^{d+1} Z_{i,0} = \sum_{j=1}^{d+1} Z_{0,j}$ (note that the sums don't include the first element, $Z_{0,0}$). If all the above events $R_i, C_j$ happen then $\sum_{i=1}^{d+1} Z_{i,0} = \sum_{j=1}^{d+1} Z_{0,j}=-\sum_{i,j = 1}^{d+1} Z_{i,j}$.  By using union bound we get $\Pr[\mathop{\wedge}_{i=1}^{d+1} R_i \mathop{\wedge}_{j=1}^{d+1} C_j ] \geq 1- 2(d+1)\delta$ which implies ~(\ref{eq:doublecollision}).

%
%

{\bf Proof of \lref[Claim]{claim:g is low deg}}: In this case, consider the following $(d+2)\times(d+2)$ matrix $Y$ whose $(i,j)^{th}$ entry is $Y_{i,j} = \alpha_i\alpha_jf(y+ih + j(h_1 + ih_2))$ except when $j=0$. When $j=0$, $Y_{i,0} = \alpha_i \alpha_0 g(y+ih)$.
\[ Y = \begin{bmatrix}
   \alpha_0\alpha_0 g(y) &  \dots  & \alpha_0\alpha_j f(y+ jh_1) & \ldots \\
    \vdots &  \ddots &  \vdots  & \ddots  \\
    \alpha_i \alpha_0g(y +ih)  & \ldots & \alpha_i\alpha_jf(y+ih + j(h_1 + ih_2)) & \ldots \\
    \vdots  & \ddots  & \vdots & \ddots
\end{bmatrix}
\]
 Define the following set of events:
 \begin{itemize}
 \item For $i\in \{0,1,\ldots, d+1\}$, $R_i$ be the event that the sum of all elements from row $i$ is $zero$, i.e $\sum_{i=0}^{d+1} Y_{i,j} = 0$.
 \item For $j\in \{0,1,\ldots, d+1\}$, $C_j$ be the event that the sum of all elements from column $j$ is $zero$, i.e  $ \sum_{j=0}^{d+1} Y_{i,j} = 0$.
 \end{itemize}
Let $h_1, h_2$ are independent and distributed u.a.r in $\F^m$.  As the event $C_0$ is independent of $h_1$ and $h_2$, in order to prove the claim it is enough to show that $\Pr_{h_1, h_2} [C_0] > 0$.

For each row $i\in \{0, 1, 2, \ldots, d+1\}$ we apply \lref[Claim]{claim:agreement at y} with $y' = y+ih$ and $h' = h_1+ih_2 $, and get $\Pr_{h_1, h_2} [\neg R_i] \leq 2(d+1)\delta$  (note that $\alpha_0 = -1)$. If $h_1, h_2$ are independent and distributed u.a.r in $\F^m$ then so are $(y+jh_1)$ and $(h+h_2)$. Therefore,  using the hypothesis ~(\ref{eq:locally low deg}) of the theorem and \lref[Fact]{fact:low deg char}, we have
for all columns except $j=0$, $\Pr_{h_1, h_2}[\neg C_j] \leq \delta$.  Using union bound, we get
$$\Pr_{h_1, h_2}\left[\mathop{\wedge}_{i=0}^{d+1} R_i \mathop{\wedge}_{j=1}^{d+1} C_j\right] \geq 1-2(d+1)(d+2)\delta + (d+1)\delta > 0.$$
The claim now follows using the observation that the event $C_0$ is implied by the event $\mathop{\wedge}_{i=0}^{d+1} R_i \mathop{\wedge}_{j=1}^{d+1} C_j$. To see this, the event $\mathop{\wedge}_{i=0}^{d+1} R_i$ implies that the sum of all entries in $Y$ is $zero$ whereas $\mathop{\wedge}_{j=1}^{d+1} C_j$ implies that the sum of all elements from the submatrix $(Y_{i,j})_{j=1}^{d+1}$ is $zero$. Hence, if both these events happen then the sum of all elements from column $0$ must be $zero$.
%
%
%
\end{proof}

\end{document}